\NewDocumentCommand{\E}{o m}{\mathbb{E}\IfValueTF{#1}{_{#1}}{}[#2]}
\newcommand{\minitab}[1]{\begin{tabular}{@{}c@{}}#1\end{tabular}}
\newcommand{\header}[1]{\smallskip\noindent\textbf{#1}}
\newcommand{\indr}[1]{\mathbf{1}\left( #1 \right)}
\newcommand{\bullethdr}[1]{\noindent\textbullet\,\emph{#1}}
\newcommand{\VS}[1]{VS\textsuperscript{#1}}
\newcommand{\RW}[1]{RW\textsuperscript{#1}}
\newtheorem{example}{Example}
\newtheorem{theorem}{Theorem}
\newtheorem{lemma}{Lemma}
\newcommand{\xslant}{1}
\newcommand{\yslant}{-0.5}
\begin{document}

\title{Sampling Online Social Networks by \\Random Walk with Indirect Jumps}

\author{Junzhou Zhao, Pinghui Wang, John~C.S.~Lui, Don~Towsley, and~Xiaohong~Guan
  \IEEEcompsocitemizethanks{
    \IEEEcompsocthanksitem J. Zhao and J. C.S. Lui are with the
    Department of Computer Science and Engineering, The Chinese University
    of Hong Kong, Hong Kong, China.
    \IEEEcompsocthanksitem D. Towsley is with the School of Computer
    Science, University of Massachusetts at Amherst, MA 01003, USA.
    \IEEEcompsocthanksitem P. Wang and X. Guan are with the
    MOE Key Lab for Intelligent Networks and Network
    Security, Xi'an Jiaotong University, Xi'an 710049, China.
  }
}

\IEEEtitleabstractindextext{%
\begin{abstract}
  Random walk-based sampling methods are gaining popularity and importance in
characterizing large networks.
While powerful, they suffer from the slow mixing problem when the graph is loosely
connected, which results in poor estimation accuracy.
Random walk with jumps (RWwJ) can address the slow mixing problem but it is
inapplicable if the graph does not support uniform vertex sampling (UNI).
In this work, we develop methods that can efficiently sample a graph without the
necessity of UNI but still enjoy the similar benefits as RWwJ.
We observe that many graphs under study, called target graphs, do not exist in
isolation.
In many situations, a target graph is related to an auxiliary graph and a
bipartite graph, and they together form a better connected {\em two-layered
  network structure}.
This new viewpoint brings extra benefits to graph sampling: if directly sampling a
target graph is difficult, we can sample it indirectly with the assistance of the
other two graphs.
We propose a series of new graph sampling techniques by exploiting such a
two-layered network structure to estimate target graph characteristics.
Experiments conducted on both synthetic and real-world networks demonstrate the
effectiveness and usefulness of these new techniques.

\end{abstract}

\begin{IEEEkeywords}
  graph sampling,
  random walk,
  Markov chain,
  estimation theory
\end{IEEEkeywords}}

\maketitle

\IEEEdisplaynontitleabstractindextext

\IEEEpeerreviewmaketitle

\section{Introduction}
\label{sec:intro}

Online social networks (OSNs) such as Facebook and Twitter have attracted much
attention in recent years because of their ever-increasing popularity and
importance in our daily lives.
An OSN not only provides a platform for people to connect with their friends, but
also offers an opportunity to study various user characteristics, which are
valuable in many applications such as understanding human
behaviors~\cite{Leskovec2010a,Zhang2013d,Backstrom2014} and inferring user
preferences~\cite{LiEtAl2016b,Han2014}.
Exactly measuring user characteristics requires the complete OSN data.
For third parties who do not possess the data, they can only rely on public APIs
to crawl the OSN.
However, commercial OSNs are typically unwilling to grant third parties full
permission to access the data due to user privacy and business secrecy.
They often impose barriers to limit third parties' large-scale
crawling~\cite{Mondal2012}, e.g., by limiting the API requesting
rate~\cite{WeiboLimits,TwitterLimits}.
As a result, collecting the complete data of a large-scale OSN is practically
impossible.

To address this challenge, sampling methods have been developed, i.e., a small
fraction of OSN users are sampled and used to estimate the OSN user
characteristics.
In the literature, random walk based sampling methods have gained
popularity~\cite{Massoulie2006,Avrachenkov2010,Ribeiro2010,Gjoka2011,Ribeiro2012b,Lee2012b,Xu2014}.
In a typical random walk sampling, a walker is launched over a graph, which
continuously moves from a node to one of its neighbors selected uniformly at
random, to obtain a collection of node samples.
These samples can be used to obtain unbiased estimates of nodal or topological
properties of the graph.
Because a random walk only explores neighborhood of a node during sampling, it is
suitable for crawling and sampling large-scale OSNs.

\subsection{Problems in Random Walk Based Sampling}

While random walk sampling is powerful, if a graph is loosely connected, e.g.,
consists of communities, it will suffer from {\em slow
  mixing}~\cite{Sinclair1989}, i.e., requires a long ``burn-in'' period to reach
steady state, which results in the need of a large number of samples in order to
achieve desired estimation accuracy.
Previous studies have found that mixing times in many real-world networks are
larger than expected~\cite{Mohaisen2010}.

To overcome the slow mixing problem, an effective approach is to incorporate
uniform node sampling (UNI) into random walk sampling, and enable the walker to
jump to other parts of the graph while walking, aka the {\em random walk with
  jumps} (RWwJ)~\cite{Avrachenkov2010,Ribeiro2012b,Xu2014}.
In UNI, a node is independently sampled uniformly at random from the graph, and in
practice, if users in an OSN have unique numerical IDs, then UNI is conducted by
generating random numbers in the ID space and including those valid IDs as UNI
samples.
RWwJ then leverages UNI to perform jumps on a graph.
Specifically, at each step of RWwJ, the walker jumps with a probability determined
by the node where it currently resides, to a node sampled by UNI.
By incorporating UNI into random walk sampling, the walker can jump out of a
community or disconnected component of a graph, and avoid being trapped, thereby
reducing the mixing time~\cite{Avrachenkov2010}.

The main problem of using RWwJ to sample an OSN is that, {\em some OSNs may not
  support UNI at all} because user IDs are not numerical, or {\em the UNI is
  resource intensive} because the valid IDs are sparsely populated.
For example, in Pinterest~\cite{Pinterest}, a user's ID is an arbitrary length
string, which hence makes UNI practically impossible.
In MySpace and Flickr, although the user IDs are numerical, the fractions of valid
user IDs are only about $10\%$ and $1.3\%$, respectively~\cite{Ribeiro2012b}; in
other words, one has to generate about $10$ (or $77$) random numbers (and verify
them by querying OSN APIs) to obtain {\em one} valid user ID in MySpace (or
Flickr).
In some situations, the valid ID space could become extremely sparse.

\begin{example}[Sampling Weibo users in a city]\label{eg:weibo}
  Suppose we want to measure user characteristics in Sina Weibo~\cite{Weibo},
  which is a popular OSN in China.
  Rather than measuring all the Weibo users, we are only interested in users who
  checked in\footnote{Sina Weibo provides a check-in service~\cite{WeiboPlace}
    that allows users to share location information with their friends, e.g., the
    restaurant she took lunch, the hotel she lived during travel.
    The service is similar to the function in Foursquare and other location-based
    OSNs.}
  venues in a specified city.
  For example, users who shared check-in information at tourist spots, hotels, and
  restaurants of a city could be used to evaluate the city's internationality,
  economic index, etc.
  Suppose the users who checked in the city account for about $0.1\%$ of all Weibo
  users.
  We also know that each Weibo user has a unique $10$-digit numerical ID, and the
  fraction of valid IDs is about $10\%$\footnote{A Weibo user ID is in the range
    $[1000000000,6200000000]$, as of May 2017.
    About $10\%$ of the IDs in this range represent valid users.
  }.
\end{example}

In the above example, when conducting UNI, we expect that a randomly generated
number is a valid user ID, and the corresponding user checked in the city.
This happens with probability $10^{-4}$, and as a result, we have to try $10^4$
times on average to obtain one valid UNI sample!
Without the efficiency of conducting UNI on a graph, we cannot perform jumps, and
hence RWwJ is inapplicable.
This raises the following problem we want to solve in this work:
\begin{quote}
  If we cannot perform jumps on a graph, can we conduct random walk sampling that
  still has the similar benefits as RWwJ?
\end{quote}

\subsection{Overview of Our Approach}

In this work, we design a series of graph sampling techniques that can efficiently
sample a network without the necessity of UNI, but still enjoy the similar
benefits as RWwJ.
The main idea behind our method is to leverage a ``{\em two-layered network
  structure}'' to perform ``{\em indirect jumps}'' on the graph under study, and
indirect jumps can bring similar benefits as the direct jumps in RWwJ.
We first use Example~\ref{eg:weibo} to briefly explain what we mean by two-layered
network structure, and then this discovery immediately motivates us to design an
indirect sampling method, which enables us to perform indirect jumps on a graph.

In Example~\ref{eg:weibo}, directly applying UNI on the user network is
inefficient because of the sparsity of user ID space, i.e., a randomly generated
number is very likely to be an invalid user ID, or the user just lies outside of
the city.
Since directly sampling users by UNI is difficult, we propose to sample users in
an indirect manner.
We notice that besides the user network, we are actually also provided with a
space consisting of venues on a map, as illustrated in Fig.~\ref{f:lbsn}.
If we can sample venues in the city by UNI (or its variants), then we can sample
users indirectly because venues and users are related by their check-in
relationships.
The check-ins tell us which user checked in which place, and for a given venue, we
can query the users who checked in this venue, and hence easily {\em obtain a user
  sample from a venue sample}.
Sampling venues in an area is indeed possible by leveraging the APIs provided by
many location-based OSNs (LBSNs).
Many LBSNs provide APIs for querying venues within an area specified by a
rectangle region with southwest and northeast corners latitude-longitude
coordinates given~\cite{WeiboSearch,FoursquareSearch}.
This function can be used to design efficient sampling methods for sampling venues
in an area on a map~\cite{Li2012,Li2014,Wang2014}.
For example, we can efficiently sample a venue in the city specified by a
rectangle region, and the probability of obtaining this venue sample is
calculable.
Note that a user sample obtained from a venue sample is no longer uniformly
distributed.
Because if a user checked in many venues in the city, the user is likely to be
heavily sampled.
But such bias can be easily removed by a reweighting strategy, which we will
elaborate in Section~\ref{sec:methods}.

\begin{figure*}[htp]
  \centering
  \subfloat[user network and venues on a map\label{f:lbsn}]{\begin{tikzpicture}[
  every node/.style={minimum size=2mm, inner sep=0},
  txt/.style={inner sep=1pt, font=\footnotesize, anchor=west, align=center, fill=white},
  cblue/.style={circle, draw, thin, fill=cyan!20, scale=0.8},
  ]
	\begin{scope}[
		yshift=-1.8cm,
		every node/.append style={yslant=\yslant,xslant=\xslant},
		yslant=\yslant,xslant=\xslant]

		\draw[black, dashed, thin] (0,0) rectangle (2.5,2.5);

    \node[txt] at (0,-.15) {user network};

    \node[rectangle] at (1.25,1.25) {\includegraphics[width=2.5cm]{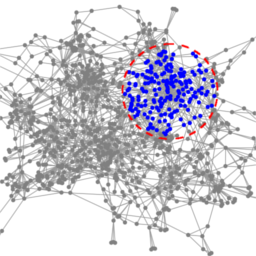}};
	\end{scope}

  \draw[thick,red,dashed] (2.75,-.2) -- (2.6,-1.8);
  \draw[thick,red,dashed] (3.3,-.2) -- (3.9,-1.8);

	\begin{scope}[
		every node/.append style={yslant=\yslant,xslant=\xslant},
		yslant=\yslant,xslant=\xslant]
		\fill[white,fill opacity=.5] (.25,.25) rectangle (2.25,2.25); 
		\draw[black, dashed, thin] (.25,.25) rectangle (2.25,2.25);

    \node[txt] at (.15,.05) {venues on map};

    \node[rectangle] at (1.25,1.25) {\includegraphics[width=2cm]{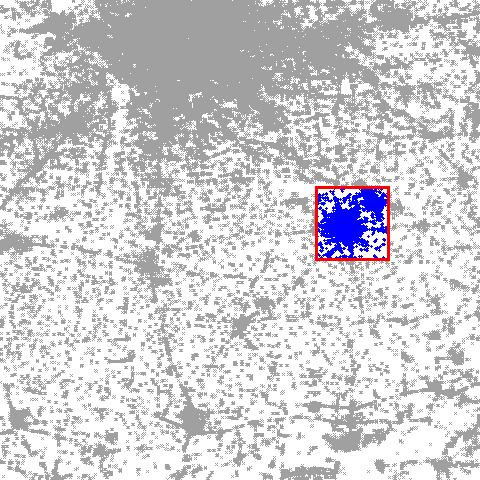}};

	\end{scope}

  \draw[-latex,thick](4.5,-.2) node[txt]{venues\\in the city}
    to[out=180,in=90] (3.1,-.1);
  \draw[-latex,thick](4.4,-1.4) node[txt]{users who\\checked in\\the city}
    to[out=180,in=90] (3.2,-1.5);

\end{tikzpicture}

}
  \subfloat[accounts sharing between two OSNs\label{f:pin_fb}]{\begin{tikzpicture}[
  every node/.style={minimum size=2mm, inner sep=0},
  txt/.style={inner sep=1pt, font=\footnotesize, anchor=west, align=center, fill=white},
  pu/.style={circle, draw, thick, minimum size=3pt},
  fu/.style={rectangle, draw, thick, minimum size=3pt},
  ]

	\begin{scope}[
		yshift=-1.8cm,
		every node/.append style={yslant=\yslant,xslant=\xslant},
		yslant=\yslant,xslant=\xslant]

		\draw[black, dashed, thin] (0,0) rectangle (2.5,2.5);

    \node[txt] at (0,-.15) {Pinterest};

    \node[pu,red] (p1) at (.5, .5) {};
    \node[pu,blue] (p2) at (.9, 1.2) {};
    \node[pu,green] (p3) at (1.7, 1.5) {};
    \node[pu,cyan] (p4) at (2, 2.1) {};
    \node[pu,magenta] (p5) at (1.7, 1) {};

    \node[pu]     (x0) at (.2, .2) {};
    \node[pu]     (x1) at (.3, 1) {};
    \node[pu]     (x2) at (.8, .4) {};
    \node[pu]     (x3) at (.4, 1.5) {};
    \node[pu]     (x4) at (.8, 2.2) {};
    \node[pu]     (x5) at (1.2, 1.8) {};
    \node[pu]     (x6) at (1.3, .3) {};
    \node[pu]     (x7) at (1.4, .8) {};
    \node[pu]     (x8) at (2, 1.3) {};
    \node[pu]     (x9) at (2, .6) {};

    \draw[thick] (p1) -- (p2) -- (p3) -- (p4);
    \draw[thick] (p2) -- (p5) -- (p3) -- (x5);
    \draw[thick] (x0) -- (p1) -- (x1) -- (p2) -- (x6);
    \draw[thick] (x2) -- (p1);
    \draw[thick] (x3) -- (p2) -- (x4);
    \draw[thick] (p5) -- (x7);
    \draw[thick] (x8) -- (p5) -- (x9);
	\end{scope}

	\begin{scope}[
		every node/.append style={yslant=\yslant,xslant=\xslant},
		yslant=\yslant,xslant=\xslant]

    \coordinate (c1) at (.5, .5);
    \coordinate (c2) at (.9, 1.2);
    \coordinate (c3) at (1.7, 1.5);
    \coordinate (c4) at (2, 2.1);
    \coordinate (c5) at (1.7,1);
  \end{scope}

  \draw[thick,dashed,red] (c1) -- (p1);
  \draw[thick,dashed,blue] (c2) -- (p2);
  \draw[thick,dashed,green] (c3) -- (p3);
  \draw[thick,dashed,cyan] (c4) -- (p4);
  \draw[thick,dashed,magenta] (c5) -- (p5);

	\begin{scope}[
		every node/.append style={yslant=\yslant,xslant=\xslant},
		yslant=\yslant,xslant=\xslant]

		\fill[white,fill opacity=.7] (.25,.25) rectangle (2.25,2.25); 
		\draw[black, dashed, thin] (.25,.25) rectangle (2.25,2.25);

    \node[txt] at (.25,.1) {Facebook};

    \node[fu,red] (f1) at (c1) {};
    \node[fu,blue] (f2) at (c2) {};
    \node[fu,green] (f3) at (c3) {};
    \node[fu,cyan] (f4) at (c4) {};
    \node[fu,magenta] (f5) at (c5) {};

    \node[fu] (y1) at (1.2,.6) {};
    \node[fu] (y3) at (1.6,1.9) {};
    \node[fu] (y4) at (1.2,1.8) {};
    \node[fu] (y5) at (.8,2) {};
    \node[fu] (y6) at (2,1.3) {};
    \node[fu] (y7) at (1.9,.5) {};
    \node[fu] (y8) at (.5,1.4) {};

    \draw[thick] (f1) -- (f2) -- (f3) -- (f4);
    \draw[thick] (y8) -- (f2) -- (f5) -- (f3) -- (y6) -- (f5) -- (y7);

    \draw[thick] (f1) -- (y1) -- (f2) -- (y4) -- (y3);
    \draw[thick] (y5) -- (y4) -- (f3) -- (y3) -- (f4);
	\end{scope}

  \node[txt] at (4.2,-1) {account\\sharing};

\end{tikzpicture}

}
  \subfloat[item network and tag/category network\label{f:cluster}]{\begin{tikzpicture}[
  every node/.style={minimum size=2mm, inner sep=0},
  txt/.style={inner sep=1pt, font=\footnotesize, anchor=west, align=center,fill=white},
  tag/.style={rectangle,draw, minimum size=3pt,thick},
  ]

	\begin{scope}[
		yshift=-1.8cm,
		every node/.append style={yslant=\yslant,xslant=\xslant},
		yslant=\yslant,xslant=\xslant]

		\draw[black, dashed, thin] (0,0) rectangle (2.5,2.5);

    \node[txt] at (0,-.15) {item network};

    \node[rectangle] at (1.25,1.25) {\includegraphics[width=25mm]{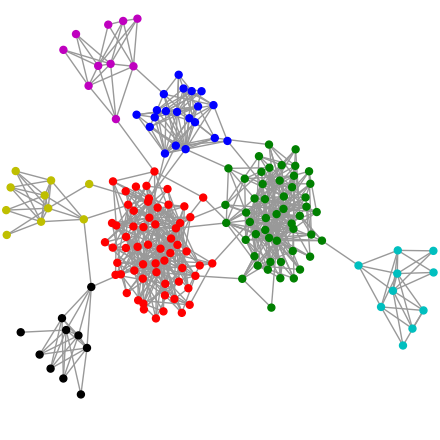}};

    \coordinate (c_k) at (.4, .6);
    \coordinate (c_r) at (.9, 1.1);
    \coordinate (c_y) at (.2, 1.4);
    \coordinate (c_g) at (1.5, 1.3);
    \coordinate (c_b) at (1, 2.02);
    \coordinate (c_c) at (2.3, 1.1);
    \coordinate (c_m) at (.6, 2.02);
	\end{scope}

	\begin{scope}[yslant=\yslant,xslant=\xslant]
    \coordinate (t_k) at (.4, .6);
    \coordinate (t_r) at (.9,1.1);
    \coordinate (t_y) at (.5, 1.2);
    \coordinate (t_g) at (1.7, 1.1);
    \coordinate (t_b) at (1.3, 1.7);
    \coordinate (t_c) at (1.8, 1.6);
    \coordinate (t_m) at (.65, 2);
  \end{scope}

  \draw[thick,dashed,black] (c_k) -- (t_k);
  \draw[thick,dashed,red] (c_r) -- (t_r);
  \draw[thick,dashed,yellow] (c_y) -- (t_y);
  \draw[thick,dashed,green] (c_g) -- (t_g);
  \draw[thick,dashed,blue] (c_b) -- (t_b);
  \draw[thick,dashed,cyan] (c_c) -- (t_c);
  \draw[thick,dashed,magenta] (c_m) -- (t_m);

	\begin{scope}[
		every node/.append style={yslant=\yslant,xslant=\xslant},
		yslant=\yslant,xslant=\xslant]
		\fill[white,fill opacity=.75] (.25,.25) rectangle (2.25,2.25);
		\draw[black, dashed, thin] (.25,.25) rectangle (2.25,2.25);

    \node[tag,black] (a_k) at (t_k) {};
    \node[tag,red] (a_r) at (t_r) {};
    \node[tag,yellow] (a_y) at (t_y) {};
    \node[tag,green] (a_g) at (t_g) {};
    \node[tag,blue] (a_b) at (t_b) {};
    \node[tag,cyan] (a_c) at (t_c) {};
    \node[tag,magenta] (a_m) at (t_m) {};

    \draw[thick,dashed] (a_k) -- (a_r) -- (a_y);
    \draw[thick,dashed] (a_b) -- (a_r) -- (a_m);
    \draw[thick,dashed] (a_c) -- (a_b) -- (a_g);

    \node[txt] at (.2,.05) {tags/categories};
	\end{scope}

  \node[txt] at (3.5,-1) {tagging /\\categorizing};
\end{tikzpicture}

}
  \caption{Examples of two-layered network structures}
  \label{fig:two_layered}
\end{figure*}

An important lesson learned from solving the problem in Example~\ref{eg:weibo} is
that, the {\em two-layered network structure}, consisting of the user network
layer and the venues layer, can help us to obtain samples of one layer when
sampling another layer is easy.
Hence, this enables us to conduct ``indirect jumps'' on the user network with the
help of venue sampling.
We further find that the two-layered network structure is not unique to the
problem in Example~\ref{eg:weibo}, but is pervasive in a wide range of graph
sampling problems, and more examples will be presented in
Section~\ref{sec:two_layered}.
Hence, it is necessary to develop some unified graph sampling techniques that can
leverage the two-layered network structure to address these graph sampling
problems.

In general, there are \emph{three graphs} related to the two-layered network
structure: (1) a {\em target graph}, whose characteristics are of interest to us
and need to be estimated, e.g., the user sub-network in Example~\ref{eg:weibo};
(2) an {\em auxiliary graph}, which is easier to be sampled than the target graph,
e.g., the venues can be thought of as nodes in an auxiliary graph, and
Example~\ref{eg:weibo} is a special case where the auxiliary graph has an empty
edge set; and (3) a {\em bipartite graph} that connects nodes in the target and
auxiliary graphs.
When directly sampling the target graph is difficult, we can turn to sample the
auxiliary graph, and the bipartite graph bridges the two sample spaces and allows
us to sample the target graph in an indirect manner.
This thus enables us to perform indirect jumps on the target graph, and allows us
to develop random walk sampling methods with indirect jumps that have the similar
benefits as RWwJ.

\subsection{Contributions}

We make three contributions in this work:
\begin{itemize}
\item We discover the pervasiveness and usefulness of a ``two-layered network
  structure'', that exists in many real-world applications, and can be exploited
  to efficiently sample a graph in an indirect manner if directly sampling the
  graph is difficult.
\item We design three new sampling techniques by leveraging such a two-layered
  network structure.
  These new techniques enable us to conduct random walk sampling that has the
  similar benefits as RWwJ.
\item We conduct extensive experiments on both synthetic and real-world networks
  to validate our proposed methods.
  The experimental results demonstrate the effectiveness of our designed sampling
  techniques.
\end{itemize}

\subsection{Outline}

The reminder of this paper will proceed as follows.
In Section~\ref{sec:preliminaries}, we provide some preliminaries about graph
sampling.
In Section~\ref{sec:two_layered}, we formally define the two-layered network
structure along with more examples.
In Section~\ref{sec:methods} we elaborate three new sampling methods.
In Section~\ref{sec:experiment}, we conduct experiments to validate our methods.
Section~\ref{sec:related_work} reviews some related literature, and
Section~\ref{sec:conclusion} concludes.

\section{Preliminaries}
\label{sec:preliminaries}

In this section, we provide some preliminaries about the graph sampling problem,
and review a random walk based sampling method named random walk with jumps (RWwJ).

\subsection{Graph Sampling}

An OSN can be modeled as an undirected graph\footnote{For Facebook, the friendship
  network is an undirected graph; for Twitter, because the followees and followers
  of a user are known once the user is collected, hence we can build an undirected
  graph of the Twitter follower network on-the-fly.}
$G=(U,E)$, where $U$ is a finite set of nodes representing users, and $E\in U\times
U$ is a set of edges representing relations between users.
We assume that the graph $G$ has no self-loops and no multiple edges connecting two
nodes.
Also, the graph size $|U|=n$ may be not known in advance.

Let $f\colon U\mapsto\mathbb{R}$ be any desired {\em characteristic function} that
maps a node in the graph to a real number.
The goal of measuring the characteristic of graph $G$ is to estimate
\[
  \theta\triangleq \frac{1}{n}\sum_{u\in U}f(u),
\]
which is the aggregated nodal characteristic of the graph.
For example, in an OSN, we let $f(u)=1$ if user $u$ is female, and otherwise
$f(u)=0$, then $\theta$ represents the fraction of female users in the OSN.

The goal of graph sampling is to design an algorithm for collecting node samples
$S$ from graph $G$, constrained by a budget $|S|\leq B\ll n$, and for providing
unbiased estimate of $\theta$ with low statistical error.

\subsection{Random Walk with Jumps}

Random walk with jumps (RWwJ)~\cite{Avrachenkov2010} is a popular graph sampling
method that can address the slow mixing issue of a simple random walk when the
graph has community structures.
RWwJ generally works as follows: A walker starts from a node in the graph, and at
each step, it moves to a neighbor selected uniformly at random, or jumps to a node
uniformly sampled from the graph, and the probability of jumping is determined by
the node where the walker currently resides; this process continues until enough
samples are collected.

An easier way to think about RWwJ is that, we modify the structure of the original
graph by connecting every node in the graph to a virtual {\em jumper node}, with
edge weight $\alpha\geq 0$; then a simple random walk on this modified graph is
equivalent to RWwJ.
Figure~\ref{fig:rwwj} illustrates RWwJ on a loosely connected graph.
Comparing the modified graph with the original graph, we can find that the
modified graph always has larger {\em graph conductance} than the original graph,
and because larger graph conductance usually implies faster mixing of a random
walk~\cite{Sinclair1989}, hence, RWwJ has the advantage of faster mixing than a
simple random walk on poorly connected graphs~\cite{Avrachenkov2010}.

\begin{figure}[htp]
  \centering
  \begin{tikzpicture}[
  every node/.style={minimum size=2mm, inner sep=0},
  txt/.style={font=\footnotesize, anchor=west, align=center, fill=white},
  tu/.style={circle, draw, thick, minimum size=4pt},
  jumper/.style={circle, draw, thick, blue, minimum size=5pt},
  je/.style={densely dotted,thick,blue},
  ]

	\begin{scope}[
    scale=.85,
		every node/.append style={yslant=\yslant,xslant=\xslant},
		yslant=\yslant,xslant=\xslant
    ]

		\draw[black, dashed, thin] (0,0) rectangle (2.5,2.5);

    \node[tu] (x0) at (.8, 1.2) {};
    \node[tu] (x1) at (.3, 1) {};
    \node[tu] (x2) at (1, .6) {};
    \node[tu] (x3) at (1.7, 1.5) {};
    \node[tu] (x4) at (.8, 1.8) {};
    \node[tu] (x5) at (1.5, 1.9) {};
    \node[tu] (x6) at (1.3, 1.2) {};
    \node[tu] (x7) at (2, 2) {};
    \node[tu] (x8) at (2.2, 1.3) {};
    \node[tu] (x9) at (1.8, 1) {};
    \node[tu] (x10) at (.3, 1.5) {};

    \draw[thick] (x1) -- (x0) -- (x2) -- (x1) -- (x10) -- (x4);
    \draw[thick] (x7) -- (x5) -- (x3) -- (x9) -- (x8) -- (x3) -- (x7) -- (x8);
    \draw[thick] (x2) -- (x6) -- (x4) -- (x0) -- (x6) -- (x3);
    \draw[thick] (x0) -- (x10);

    \draw[very thick, blue, dashed] (1.8,.5) -- (1.1,2.1);
	\end{scope}

	\begin{scope}[
		xshift=4.5cm,
    scale=.85,
		every node/.append style={yslant=\yslant,xslant=\xslant},
		yslant=\yslant,xslant=\xslant]

		\draw[black, dashed, thin] (0,0) rectangle (2.5,2.5);

    \node[tu] (x0) at (.8, 1.2) {};
    \node[tu] (x1) at (.3, 1) {};
    \node[tu] (x2) at (1, .6) {};
    \node[tu] (x3) at (1.7, 1.5) {};
    \node[tu] (x4) at (.8, 1.8) {};
    \node[tu] (x5) at (1.5, 1.9) {};
    \node[tu] (x6) at (1.3, 1.2) {};
    \node[tu] (x7) at (2, 2) {};
    \node[tu] (x8) at (2.2, 1.3) {};
    \node[tu] (x9) at (1.8, 1) {};
    \node[tu] (x10) at (.3, 1.5) {};

    \draw[thick] (x1) -- (x0) -- (x2) -- (x1) -- (x10) -- (x4);
    \draw[thick] (x7) -- (x5) -- (x3) -- (x9) -- (x8) -- (x3) -- (x7) -- (x8);
    \draw[thick] (x2) -- (x6) -- (x4) -- (x0) -- (x6) -- (x3);
    \draw[thick] (x0) -- (x10);

    \node[jumper] (j) at (-.3,2.8) {};
    \foreach \i in {0,1,2,3,4,5,6,7,8,9,10} \draw[je] (x\i) -- (j);

	\end{scope}

  \node[txt,above right = 5mm and 1mm of x1] {$\color{blue}\alpha$};
  \node[txt, right=5pt of j] {jumper node};
  \node[txt] at (1.3,-1.3) {original graph};
  \node[txt] at (5.6,-1.3) {modified graph};

\end{tikzpicture}

  \caption{RWwJ is viewed as a simple random walk on the modified graph.}
  \label{fig:rwwj}
\end{figure}
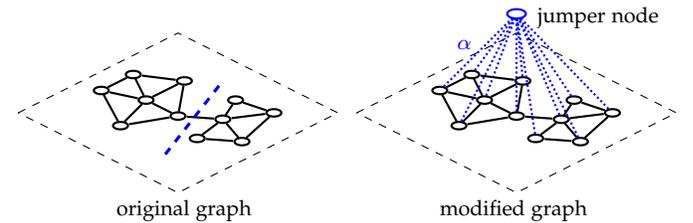

In RWwJ, the probability transition matrix of the underlying Markov chain is given
by
\[
  P_{ij}^{\text{RWwJ}} =
  \begin{cases}
    \frac{\alpha/n+1}{d_i+\alpha}, & (i,j)\in E, \\
    \frac{\alpha/n}{d_i+\alpha}, & (i,j)\notin E. \\
  \end{cases}
\]
That is, if $(i,j)\in E$, the walker starting from $i$ could walk to $j$ (in one
step) through the edge $(i,j)$ with probability $\frac{1}{d_i + \alpha}$; or jump
to $j$ through UNI with probability $\frac{\alpha}{d_i+\alpha}\cdot\frac{1}{n} =
\frac{\alpha/n}{d_i+\alpha}$; thus the transition probability on edge $(i,j)$ is
$\frac{\alpha/n+1}{d_i+\alpha}$.
If $(i,j)\notin E$, the walk starting from $i$ can only walk to $j$ (in one step)
by jumping with probability $\frac{\alpha/n}{d_i+\alpha}$.

When RWwJ reaches the steady state, a node $u\in U$ is sampled with probability
proportional to $d_u+\alpha$.
If we let $S$ denote the samples collected by RWwJ, an asymptotically unbiased
estimator of $\theta$ is given by
\begin{equation}\label{est:rwwj}
  \hat\theta^\text{RWwJ}=
  \frac{1}{Z^{\text{RWwJ}}}\sum_{s\in S}\frac{f(s)}{d_s+\alpha},
\end{equation}
where $Z^{\text{RWwJ}}\triangleq\sum_{s\in S}1/(d_s+\alpha)$.
We can understand the unbiasedness of Estimator~\eqref{est:rwwj} by leveraging the
ratio form of the {\em Law of Large Numbers} of Markov chains.

\begin{lemma}[{Law of Large Numbers~\cite[p.427--428]{Meyn2009}}]\label{lemma:lln}
  Let $S$ be a sample path obtained by a Markov chain defined on state space $U$
  with stationary distribution $\pi$.
  For any function $f,g\colon U\mapsto\mathbb{R}$, and let $F_S(f)\triangleq
  \sum_{s\in S}f(s)$, $\E[\pi]{f}\triangleq\sum_{u\in U}\pi_uf(u)$.
  It holds that
  \begin{align}\label{eq:LLN}
    \lim_{|S|\rightarrow\infty}\frac{1}{|S|}F_S(f) &= \E[\pi]{f} \quad a.s., \\
    \lim_{|S|\rightarrow\infty}\frac{F_S(f)}{F_S(g)} &= \frac{\E[\pi]{f}}{\E[\pi]{g}} \quad a.s..
  \end{align}
  Here, ``a.s.'' denotes ``almost sure'' convergence, i.e., the event of interest
  happens with probability one.
\end{lemma}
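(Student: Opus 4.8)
The plan is to first establish the additive law \eqref{eq:LLN} (the ergodic theorem for the chain) through its regenerative structure, and then read off the ratio form as an immediate algebraic consequence. Throughout I use the standing assumptions implicit in the statement: the chain is irreducible and positive recurrent with stationary law $\pi$, so that it returns to every state infinitely often; on the finite state space $U$ considered here both properties hold automatically once a stationary $\pi$ exists, and $f,g$ are then trivially $\pi$-integrable. I write $X_0,X_1,X_2,\dots$ for the states visited, so that $F_S(f)=\sum_{t<|S|}f(X_t)$ is a time sum over the first $|S|$ states.

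First I would fix a reference state $u_0\in U$ and let $T_1<T_2<\cdots$ be the successive hitting times of $u_0$. By the strong Markov property the path decomposes into \emph{excursions} between consecutive visits to $u_0$, and these excursions are independent and identically distributed. Consequently the cycle lengths $\tau_k\triangleq T_k-T_{k-1}$ and the cycle rewards $R_k\triangleq\sum_{T_{k-1}\le t<T_k}f(X_t)$ each form an i.i.d.\ sequence with finite mean, where positive recurrence gives $\E{\tau_1}=1/\pi_{u_0}<\infty$. Applying the classical strong law of large numbers to these two sequences yields $\frac1m\sum_{k=1}^m R_k\to\E{R_1}$ and $\frac1m\sum_{k=1}^m\tau_k\to\E{\tau_1}$ almost surely; letting $m(|S|)$ denote the number of completed cycles by time $|S|$, which tends to infinity a.s., a renewal-reward argument then gives $\tfrac{1}{|S|}F_S(f)\to\E{R_1}/\E{\tau_1}$ almost surely once the single incomplete trailing excursion is shown to contribute only $o(|S|)$. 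The stationary-mean identity $\E{R_1}/\E{\tau_1}=\E[\pi]{f}$, which expresses $\E[\pi]{f}$ as the long-run reward rate, then delivers the first line of \eqref{eq:LLN}.

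The ratio form follows with no further probabilistic input. Writing
\[
  \frac{F_S(f)}{F_S(g)}=\frac{F_S(f)/|S|}{F_S(g)/|S|},
\]
the first part shows the numerator and denominator converge almost surely to $\E[\pi]{f}$ and $\E[\pi]{g}$ respectively; since the quotient map is continuous wherever the denominator is nonzero, and $\E[\pi]{g}\neq 0$ in the applications (e.g.\ the normalizer $g=1/(d_\cdot+\alpha)>0$ underlying $Z^{\text{RWwJ}}$), the algebra of almost-sure limits gives convergence to $\E[\pi]{f}/\E[\pi]{g}$.

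The main obstacle is the first line: establishing that the excursions are genuinely i.i.d.\ with finite expected length (the strong Markov property together with positive recurrence) and verifying the rate identity $\E{R_1}/\E{\tau_1}=\E[\pi]{f}$. The accompanying bookkeeping---controlling the incomplete final excursion and handling the random cycle count $m(|S|)$ so that the error terms vanish---also requires care; by contrast, once the additive law is in hand the ratio form is routine. Since the result is quoted from \cite{Meyn2009}, I would in practice cite it directly, but this regeneration-based argument is the route I would take to reprove it.
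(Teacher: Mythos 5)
The paper does not actually prove this lemma; it is imported verbatim from Meyn and Tweedie, so there is no in-paper argument to compare against. Your regeneration proof is the standard and correct route to the result (and is essentially how the cited source establishes it, via splitting the path at an atom): i.i.d.\ cycles between returns to a reference state, the classical SLLN applied to cycle rewards and cycle lengths, the renewal-reward passage from cycle averages to time averages, and the occupation-measure identity $\E{R_1}/\E{\tau_1}=\E[\pi]{f}$ (equivalently, $\pi_u$ is proportional to the expected number of visits to $u$ per cycle, with Kac's formula giving the normalization). The reduction of the ratio form to the additive form by dividing numerator and denominator by $|S|$ is exactly right, and you correctly identify the only extra hypothesis it needs, namely $\E[\pi]{g}\neq 0$, which holds in every application in the paper since the relevant $g$ is strictly positive.

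One small correction to your standing assumptions: it is not true that irreducibility ``holds automatically once a stationary $\pi$ exists'' on a finite state space --- every finite chain has a stationary distribution, including reducible ones, for which the time average genuinely depends on the recurrent class the chain falls into and the stated limit can fail. Irreducibility (or at least that the chain is started inside a single positive recurrent class whose invariant law is $\pi$) is a genuine hypothesis that the lemma leaves implicit; you are right to assume it, but not right that it is free. What finiteness does buy you automatically is positive recurrence of each recurrent class, hence $\E{\tau_1}<\infty$ and $\pi$-integrability of $f$ and $g$. With that assumption stated honestly, and the bookkeeping you already flag (the initial segment before the first return and the trailing incomplete cycle each contributing $o(|S|)$), the argument is complete.
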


Therefore, in Estimator~\eqref{est:rwwj}, replacing $f(s)/(d_s+\alpha)$ by $f(s)$,
and $1/(d_s+\alpha)$ by $g(s)$, we obtain that $\hat\theta^{\text{RWwJ}}$ converges
to $\E[\pi]{f}/\E[\pi]{g}=\theta$, almost surely.

Although RWwJ can address the slow mixing problem, it requires UNI to perform
jumps on a graph.
If the OSN does not support UNI, or UNI is inefficient, RWwJ becomes inapplicable.
In this work, we introduce a two-layered network structure that exists in many
real-world applications, and we will show that such a structure can be leveraged
to design random walk sampling methods having the similar benefits as RWwJ even
though we cannot conduct UNI on the graph.

\section{Two-Layered Network Structure}
\label{sec:two_layered}

In this section, we first formally describe the two-layered network structure we
discovered in Example~\ref{eg:weibo}.
Then we provide more examples to demonstrate the pervasiveness of such a
structure.

\subsection{Definition}

We use three undirected graphs to describe a two-layered network structure:
$G(U,E)$, $G'(V,E')$, and $G_b(U,V,E_b)$, where $U,V$ are two sets of nodes, and
$E\subseteq U\times U, E'\subseteq V\times V, E_b\subseteq U\times V$ are three
sets of edges.
More specifically,
\begin{itemize}
\item $G(U,E)$ is the {\em target graph}, whose characteristic $\theta$ is of
  interest to us and needs to be measured.
  For example, the user social network in Example~\ref{eg:weibo} can be treated as
  the target graph.
\item $G'(V,E')$ is an {\em auxiliary graph}, which can be more efficiently
  sampled than the target graph.
  In Example~\ref{eg:weibo}, we can construct an auxiliary graph where the nodes
  represent the venues in the city, and the edge set is left empty
  (i.e., $E'=\emptyset$).
\item $G_b(U,V,E_b)$ is a {\em bipartite graph} that connects nodes in the target
  and auxiliary graphs.
  In Example~\ref{eg:weibo}, the bipartite graph is formed by users, venues and
  their check-in relationships.
\end{itemize}

An example of such a two-layered network structure is illustrated in
Fig.~\ref{fig:two_layered}.
The target graph consists of two disconnected components, however, if we view the
three graphs as a whole, they form a better connected graph than the target graph
itself.
Hence, it is possible to sample target graph efficiently with the help of the
other two graphs.
With this intuition in mind, we will see in next section that we indeed can design
efficient sampling methods by leveraging this two-layered network structure.

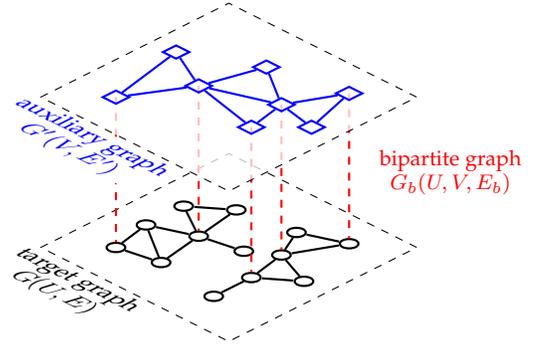
\begin{figure}[htp]
\centering
\begin{tikzpicture}[
  every node/.style={minimum size=2mm, inner sep=0, font=\tiny},
  txt/.style={font=\footnotesize, anchor=west, align=center, fill=white},
  tu/.style={circle, draw, thick, minimum size=5pt},
  au/.style={rectangle, draw=blue, thick, minimum size=5pt},
  be/.style={dashed,thick,red},
  ae/.style={thick,blue},
  ]

	\begin{scope}[
		yshift=-2cm,
		every node/.append style={yslant=\yslant,xslant=\xslant},
		yslant=\yslant, xslant=\xslant]

		\draw[black, dashed, thin] (0,0) rectangle (2.5,2.5);

    \node[txt] at (-.1,-.35) {target graph\\$G(U,E)$};

    \node[tu] (p1) at (0.5, 0.5) {};
    \node[tu] (p2) at (0.9, 1.2) {};
    \node[tu] (p3) at (1.7, 1.5) {};
    \node[tu] (p4) at (2.0, 2.1) {};
    \node[tu] (p5) at (1.8, 1.0) {};

    \node[tu] (x1) at (0.4, 1.0) {};
    \node[tu] (x2) at (1.0, 0.6) {};
    \node[tu] (x3) at (0.4, 1.5) {};
    \node[tu] (x4) at (0.8, 1.8) {};
    \node[tu] (x5) at (1.5, 1.9) {};
    \node[tu] (x6) at (1.4, 1.3) {};
    \node[tu] (x7) at (1.8, 0.5) {};
    \node[tu] (x8) at (2.2, 1.3) {};

    \draw[thick] (p2) -- (x2) -- (x1);
    \draw[thick] (x7) -- (p5) -- (x8) -- (p3) -- (p4) -- (x5) -- (p3) -- (p5);
    \draw[thick] (x2) -- (p1) -- (x1) -- (p2) -- (x6)
                 (x3) -- (x4) -- (p2) -- (x3);
	\end{scope}

	\begin{scope}[
		every node/.append style={yslant=\yslant,xslant=\xslant},
		yslant=\yslant,xslant=\xslant]

    \coordinate (c1) at (0.5, 0.5);
    \coordinate (c2) at (0.9, 1.2);
    \coordinate (c3) at (1.7, 1.5);
    \coordinate (c4) at (2.0, 2.1);
    \coordinate (c5) at (1.8, 1.0);
  \end{scope}

  \draw[be] (c1) -- (p1);
  \draw[be] (c2) -- (p2);
  \draw[be] (c3) -- (p3);
  \draw[be] (c4) -- (p4);
  \draw[be] (c5) -- (p5);

	\begin{scope}[
		every node/.append style={yslant=\yslant,xslant=\xslant},
		yslant=\yslant,xslant=\xslant]

		\fill[white,fill opacity=.75] (0,0) rectangle (2.5,2.5); 
		\draw[black, dashed, thin] (0,0) rectangle (2.5,2.5);

    \node[txt,text=blue] at (-.1,-.35) {auxiliary graph\\$G'(V,E')$};

    \node[au] (f1) at (c1) {};
    \node[au] (f2) at (c2) {};
    \node[au] (f3) at (c3) {};
    \node[au] (f4) at (c4) {};
    \node[au] (f5) at (c5) {};

    \node[au] (y1) at (0.3, 1.5) {};
    \node[au] (y2) at (1.1, 1.9) {};
    \node[au] (y3) at (2.2, 1.4) {};

    \draw[ae] (f2) -- (y1) -- (f1) -- (f2) -- (f3) -- (f4) -- (y3);
    \draw[ae] (f3) -- (y2) -- (f2) -- (f5) -- (f3) -- (y3);
	\end{scope}

  \node[txt,text=red] at (4.5,-1) {bipartite graph\\$G_b(U,V,E_b)$};

\end{tikzpicture}

\caption{Illustration of the two-layered network structure.}
  \label{fig:two_layered}
\end{figure}

\subsection{More Examples}

The two-layered network structure is not unique to Example~\ref{eg:weibo}, but
exists in a wide range of real-world applications.
In the following, we provide more examples.

\begin{example}[Accounts sharing between two OSNs\label{eg:osn}]
  Many OSNs now support using an existing OSN's accounts to login another OSN.
  For example, Facebook users can login Pinterest using their Facebook accounts.
  This naturally forms a two-layered network structure consisting of Facebook and
  Pinterest.
  Suppose we want to measure Pinterest, then we can let target graph represent
  Pinterest, auxiliary graph represent Facebook, and bipartite graph represent
  their account sharing relations.
\end{example}

Figure~\ref{f:pin_fb} illustrates Example~\ref{eg:osn}.
Note that Pinterest does not support UNI, hence RWwJ is inapplicable.
Instead, using the techniques developed in this work, we will be able to leverage
Facebook to sample Pinterest.

\begin{example}[Amazon item network and categories\label{eg:clusters}]
  Items in Amazon are related with each other to form an item network.
  Each item also belongs to one or more categories.
  Meanwhile, Amazon provides a complete category list to facilitate customers to
  quickly navigate to the items they are looking for.
  This forms a two-layered network structure consisting of items and categories.
  Suppose we want to measure the item network, then we can let target graph
  represent the item network, auxiliary graph represent the category list, and
  bipartite graph represent the affiliation relations between items and
  categories.
\end{example}

Figure~\ref{f:cluster} illustrates Example~\ref{eg:clusters}.
Note that categories could also be tags and they may also form a tag network.
Items are very likely to form clusters, and hence easily trap a random walker.
If we can leverage the category information, and help a random walker to jump out
of clusters, we can sample the item network in a more efficient way.

\section{Sampling Design}
\label{sec:methods}

In this section, we leverage the two-layered network structure and design three
new sampling techniques to sample and characterize the target graph.

\subsection{Indirectly Sampling Target Graph by Vertex Sampling on
Auxiliary Graph (\VS{A})}

The first method assumes that vertex sampling is easier to conduct on the
auxiliary graph than on the target graph, as is the case in
Example~\ref{eg:weibo}.
We present a sampling method \VS{A} (and its two implementations \VS{A}-I and
\VS{A}-II) to indirectly sample the target graph under this setting.
The basic idea of \VS{A} is illustrated in Fig.~\ref{fig:vsa}.

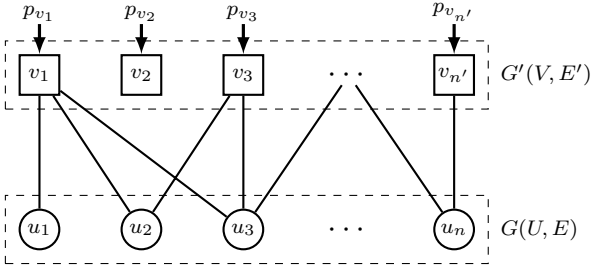
\begin{figure}[htp]
\centering
\begin{tikzpicture}[
  every node/.style={inner sep=1pt, minimum size=15pt, font=\footnotesize},
  und/.style={draw, thick, circle},
  vnd/.style={draw, thick, rectangle},
  att/.style={left, minimum size=0},
  box/.style={draw, dashed, inner sep=5pt},
  arr/.style={very thick, -latex},
  ]

  \node(u1) [und] {$u_1$};
  \node(u2) [und, right=.8 of u1] {$u_2$};
  \node(u3) [und, right=.8 of u2] {$u_3$};
  \node(p1) [att, right=.8 of u3] {\large $\cdots$};
  \node(un) [und, right=.8 of p1] {$u_n$};
  \node[box, fit={(u1) (un)}] {};
  \node[att, right=3mm of un] {$G(U,E)$};

  \node(v1) [vnd, above=1.5 of u1] {$v_1$};
  \node(v2) [vnd, right=.8 of v1] {$v_2$};
  \node(v3) [vnd, right=.8 of v2] {$v_3$};
  \node(p2) [att, right=.8 of v3] {\large $\cdots$};
  \node(vn) [vnd, right=.8 of p2] {$v_{n'}$};
  \node[box, fit={(v1) (vn)}] {};
  \node[att, right=3mm of vn] {$G'(V,E')$};

  \draw[thick]
    (u2)--(v1)--(u1)
    (u2)--(v3)--(u3)--(v1)
    (vn)--(un)--(p2)--(u3);

  \foreach \v/\a in {v1/1,v2/2,v3/3,vn/n'}{
    \node[att,above=0.4 of \v] (tmp) {$p_{v_{\a}}$};
    \draw[arr] (tmp)--(\v);
  }
\end{tikzpicture}

\caption{Illustration of \VS{A}. Edges in $G$ and $G'$ are omitted.}
\label{fig:vsa}
\end{figure}

\header{\VS{A}-I.}
Assume that a node $v\in V$ is sampled with probability $p_v\propto a_v >0$ in
auxiliary graph $G'$.
For example, if auxiliary graph $G'$ supports UNI, then $a_v\equiv 1, \forall v\in
V$.
The simplest way to implement \VS{A} is as follows: We first sample a node $v\in
V$ in $G'$, and then sample a neighbor of $v$ in $G_b$ uniformly at random,
denoted by $u$.
Obviously, $u\in U$, and we collect $u$ as a sample.
We refer to this simple sampling method as \VS{A}-I, and will show that samples
collected by \VS{A}-I can indeed yield unbiased estimate of $\theta$.
The detailed design of \VS{A}-I is described as follows.

\header{Sampling design.}
\VS{A}-I repeats the following two steps until sample collection $S$ reaches
budget $B$.
\begin{itemize}
\item Sample a node $v$ from auxiliary graph $G'$;
\item If $v$ has neighbors in bipartite graph $G_b$, sample a neighbor $u$
  uniformly at random, and put $u$ into samples $S$.
\end{itemize}

\header{Estimator.}
In \VS{A}-I, we can see that a node $u\in U$ is sampled with probability
\begin{equation}\label{eq:vsa1-pu}
  p_u\propto b_u\triangleq\sum_{v\in V_u}\frac{a_v}{d_v^{(b)}},
\end{equation}
where $V_u\subseteq V$ is the set of neighbors of $u$ in $G_b$, and $d_v^{(b)}$ is
the degree of $v$ in $G_b$.
Then, we propose to use the following estimator to estimate $\theta$:
\begin{equation}\label{eq:est_vsa1}
  \hat{\theta}^{\VS{A}\text{-I}}
  = \frac{1}{Z^{\VS{A}\text{-I}}}\sum_{u\in S}\frac{f(u)}{b_u},
\end{equation}
where $Z^{\VS{A}\text{-I}}\triangleq\sum_{u\in S}1/b_u$.
The following theorem guarantees its unbiasedness.

\begin{theorem}\label{thm:vsa1}
  Estimator~\eqref{eq:est_vsa1} is asymptotically unbiased.
\end{theorem}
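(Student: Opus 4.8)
The plan is to first pin down the exact probability $p_u$ with which one iteration of \VS{A}-I returns a given node $u$, and then to invoke the ratio form of the Law of Large Numbers (Lemma~\ref{lemma:lln}) with the weighting functions chosen so that the sampling bias cancels. First I would condition on the intermediate auxiliary node $v$ that is drawn before $u$. Since $v$ is sampled with probability $a_v/\sum_{w\in V}a_w$ and then a neighbor of $v$ in $G_b$ is chosen uniformly at random (probability $1/d_v^{(b)}$), summing over all venues through which $u$ is reachable gives
\[
  p_u = \sum_{v\in V_u}\frac{a_v}{\sum_{w\in V}a_w}\cdot\frac{1}{d_v^{(b)}}
      = \frac{b_u}{\sum_{w\in V}a_w},
\]
which confirms the claim $p_u\propto b_u$ asserted in \eqref{eq:vsa1-pu}. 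Here I would make explicit the reachability assumption $V_u\neq\emptyset$ for every $u\in U$, so that $b_u>0$ and every target node has positive sampling probability.

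Next I would observe that the induced sequence of target-node samples is the $U$-marginal of an ergodic Markov chain whose stationary distribution is $\pi_u=p_u=b_u/A_0$ with $A_0\triangleq\sum_{w\in V}a_w$ (in the plain UNI case $a_v\equiv 1$ the draws are in fact i.i.d.). Applying the ratio identity~\eqref{eq:LLN} to the numerator and denominator of \eqref{eq:est_vsa1} with $\tilde f(u)=f(u)/b_u$ and $g(u)=1/b_u$, the $1/b_u$ weights are designed precisely to cancel the bias $b_u$: $\E[\pi]{\tilde f}=\sum_{u\in U}\frac{b_u}{A_0}\frac{f(u)}{b_u}=\frac{1}{A_0}\sum_{u\in U}f(u)$ and $\E[\pi]{g}=\sum_{u\in U}\frac{b_u}{A_0}\frac{1}{b_u}=\frac{n}{A_0}$. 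Hence
\[
  \hat\theta^{\VS{A}\text{-I}}
    = \frac{\sum_{u\in S}f(u)/b_u}{\sum_{u\in S}1/b_u}
    \xrightarrow{a.s.}\frac{\E[\pi]{\tilde f}}{\E[\pi]{g}}
    = \frac{\sum_{u\in U}f(u)}{n} = \theta,
\]
establishing asymptotic unbiasedness.

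The only genuinely non-routine step is the first one: correctly deriving the marginal sampling probability $p_u$ by enumerating all venues $v\in V_u$ through which $u$ can be reached and properly accounting for the $1/d_v^{(b)}$ factor contributed by the uniform neighbor choice. Everything after that is bookkeeping, since once $\pi_u\propto b_u$ is in hand the self-normalizing weights $1/b_u$ are tailored to cancel the $b_u$ in both expectations, and Lemma~\ref{lemma:lln} delivers the conclusion. I would also flag the reachability caveat $V_u\neq\emptyset$ explicitly, because any target node with no bipartite neighbor has $p_u=0$ and would otherwise be silently dropped from the estimand, biasing the estimate toward the reachable subset of $U$.
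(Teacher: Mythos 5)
Your proof is correct and follows essentially the same route as the paper: view the i.i.d.\ draws as a degenerate Markov chain with stationary distribution $\pi_u = p_u \propto b_u$ and apply the ratio form of Lemma~\ref{lemma:lln} so the $1/b_u$ weights cancel the sampling bias. The only additions are the explicit derivation of $p_u$ (which the paper simply asserts in Eq.~\eqref{eq:vsa1-pu}) and the reachability caveat $V_u\neq\emptyset$ (which the paper defers to the Remark following \VS{A}-II); both are welcome but do not change the argument.
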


\begin{proof}
  \VS{A}-I can be viewed as sampling $U$ with replacement according to
  distribution $\{p_u\}_{u\in U}$.
  This can be further viewed as generating samples according to a Markov chain
  which has a probability transition matrix with all rows the same vector
  $[p_u]_{u\in U}$, and $\pi_u=p_u,\forall u\in U$.
  This allows us to leverage Lemma~\ref{lemma:lln}, and obtain that
  \begin{align*}
    \lim_{B\rightarrow\infty}\hat{\theta}^{\VS{A}\text{-I}}
    &=\frac{\E{f(u)/b_u}}{\E{1/b_u}}
    =\frac{\sum_{u\in U}p_uf(u)/b_u}{\sum_{u\in U}p_u/b_u} \\
    &=\frac{1}{n}\sum_{u\in U}f(u)
    =\theta \qquad a.s.
  \end{align*}
This thus completes the proof.
\end{proof}

\VS{A}-I has one drawback, i.e., to correct the bias of each sample $u\in S$, we
require $b_u$, which further requires $a_v$ for each neighbor of $u$ in $G_b$ by
Eq.~\eqref{eq:vsa1-pu}.
This is not an issue if we are conducting UNI (or its variants) on the auxiliary
graph, as we have known $a_v, \forall v\in V$ before conducting UNI.
But in some cases where more complicated vertex sampling methods are applied on
auxiliary graph, this condition may not be met: we know $a_v$ only if $v$ is
sampled, otherwise $a_v$ is not known in advance.
And this is actually the case we met in Example~\ref{eg:weibo}: we know the
probability of obtaining a venue sample only if the venue is sampled (this should
become clear when we describe the venue sampling method in
Section~\ref{sec:experiment}).
To address this problem, we propose another sampling method \VS{A}-II.

\header{\VS{A}-II.}
When a node $v\in V$ is sampled in auxiliary graph, we collect all of its
neighbors in the bipartite graph as samples; we repeat this process until enough
samples are collected.
We use these samples to estimate $\theta$.
The detailed design of \VS{A}-II is described as follows.

\header{Sampling design.}
\VS{A}-II repeats the following steps to obtain two sample collections $S$ and
$S'$ from $G$ and $G'$ respectively.
Samples in $S$ are used to estimate $\theta$.
\begin{itemize}
\item Sample a node $v$ from auxiliary graph $G'$;
\item If $v$ has neighbors in bipartite graph $G_b$, put $v$ into samples $S'$,
  and put all the neighbors of $v$ in $G_b$ into samples $S$.
\end{itemize}

\header{Estimator design for \VS{A}-II.}
We propose to estimate $\theta$ using the following estimator:
\begin{equation}\label{eq:est_vsa2}
  \hat\theta^{\text{\VS{A}-II}}
  =\frac{1}{Z^{\text{\VS{A}-II}}}
  \sum_{v\in S'}\frac{1}{a_v}\sum_{u\in U_v}\frac{f(u)}{d_u^{(b)}},
\end{equation}
where $U_v\subseteq U$ is the set of neighbors of $v$ in $G_b$, and
$Z^{\text{\VS{A}-II}}\triangleq\sum_{v\in S'}1/a_v\sum_{u\in U_v}1/d_u^{(b)}$.
The following theorem guarantees its unbiasedness.

\begin{theorem}
  Estimator~\eqref{eq:est_vsa2} is asymptotically unbiased.
\end{theorem}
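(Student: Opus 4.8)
The plan is to mirror the argument used for Theorem~\ref{thm:vsa1}, but now treating the sampling of auxiliary nodes as the underlying stochastic process and pushing the reweighting through the bipartite edges. First I would observe that \VS{A}-II draws each $v\in V$ with probability $p_v\propto a_v$, so the sequence of auxiliary nodes collected in $S'$ can be viewed as a sample path of a Markov chain on $V$ whose stationary distribution is $\pi_v=p_v$ (an i.i.d.\ sampler being the special case whose transition matrix has all rows equal to $[p_v]_{v\in V}$). Auxiliary nodes with no neighbor in $G_b$ produce empty inner sums and contribute nothing, so they may be ignored. This lets me invoke the ratio form of the Law of Large Numbers in Lemma~\ref{lemma:lln}.

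Next I would define the two bookkeeping functions on $V$,
\[
  \tilde f(v)\triangleq \frac{1}{a_v}\sum_{u\in U_v}\frac{f(u)}{d_u^{(b)}},
  \qquad
  \tilde g(v)\triangleq \frac{1}{a_v}\sum_{u\in U_v}\frac{1}{d_u^{(b)}},
\]
so that $\hat\theta^{\text{\VS{A}-II}}$ is exactly the ratio of $\sum_{v\in S'}\tilde f(v)$ to $\sum_{v\in S'}\tilde g(v)$. By Lemma~\ref{lemma:lln} this ratio converges almost surely to $\E[\pi]{\tilde f}/\E[\pi]{\tilde g}$, reducing the problem to evaluating these two expectations.

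The heart of the proof is computing these expectations. Writing $p_v=a_v/C$ with $C\triangleq\sum_{w\in V}a_w$, the weight $1/a_v$ cancels the sampling bias, giving $\E[\pi]{\tilde f}=\frac{1}{C}\sum_{v\in V}\sum_{u\in U_v}f(u)/d_u^{(b)}$. I would then swap the order of summation over the edges of $G_b$, using $\sum_{v\in V}\sum_{u\in U_v}=\sum_{u\in U}\sum_{v\in V_u}$, and note that $|V_u|=d_u^{(b)}$, i.e.\ a user is collected once for every bipartite neighbor it has. This is the step I expect to carry the whole argument: the factor $d_u^{(b)}$ produced by the inner sum exactly cancels the $1/d_u^{(b)}$ reweighting, so $\E[\pi]{\tilde f}=\frac{1}{C}\sum_{u\in U}f(u)$. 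The same computation with $f\equiv 1$ yields $\E[\pi]{\tilde g}=n/C$.

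Finally I would take the ratio,
\[
  \lim_{B\to\infty}\hat\theta^{\text{\VS{A}-II}}
  =\frac{\E[\pi]{\tilde f}}{\E[\pi]{\tilde g}}
  =\frac{\tfrac{1}{C}\sum_{u\in U}f(u)}{n/C}
  =\frac{1}{n}\sum_{u\in U}f(u)=\theta
  \quad a.s.,
\]
which establishes the claimed asymptotic unbiasedness. The only subtlety to watch is justifying the summation swap together with the identity $|V_u|=d_u^{(b)}$; both hold because the memberships $u\in U_v$ and $v\in V_u$ encode the same bipartite edge $(u,v)\in E_b$. Everything else is routine bookkeeping.
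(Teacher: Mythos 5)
Your proposal is correct and follows essentially the same route as the paper: view the i.i.d.\ draws of auxiliary nodes as a degenerate Markov chain so that the ratio form of Lemma~\ref{lemma:lln} applies, then compute the two expectations by swapping the double sum over bipartite edges so that the factor $d_u^{(b)}$ arising from $|V_u|=d_u^{(b)}$ cancels the $1/d_u^{(b)}$ reweighting. Your treatment is slightly more explicit than the paper's (naming the normalizing constant $C$ and noting that auxiliary nodes with empty neighborhoods contribute zero to both numerator and denominator), but the argument is the same.
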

\begin{proof}
  Using the similar idea as we proved Theorem~\ref{thm:vsa1}, we have
  \begin{align*}
    \E{\frac{1}{a_v}\sum_{u\in U_v}\frac{f(u)}{d_u^{(b)}}}
    &= \sum_{v\in V}\frac{p_v}{a_v}\sum_{u\in U_v}\frac{f(u)}{d_u^{(b)}}
    = c\sum_{v\in V}\sum_{u\in U_v}\frac{f(u)}{d_u^{(b)}} \\
    &= c\sum_{u\in U}d_u^{(b)}\frac{f(u)}{d_u^{(b)}}
    = c\sum_{u\in U}f(u)
    = cn\theta
  \end{align*}
  where $c\triangleq p_v/a_v$ is a constant.
  The third equation holds because each inside item is added exactly $d_u^{(b)}$
  times before we merge the two sums into one sum.
  Similarly,
  \[
    \E{\frac{1}{a_v}\sum_{u\in U_v}\frac{1}{d_u^{(b)}}}
    = \sum_{v\in V}\frac{p_v}{a_v}\sum_{u\in U_v}\frac{1}{d_u^{(b)}}
    = c\sum_{v\in V}\sum_{u\in U_v}\frac{1}{d_u^{(b)}}
    = cn.
  \]
  By Lemma~\eqref{lemma:lln}, we thus obtain
  \[
    \lim_{B\rightarrow\infty}\hat\theta^{\text{\VS{A}-II}}
    =\frac{\E{1/a_v\sum_{u\in U_v}f(u)/d_u^{(b)}}}
    {\E{1/a_v\sum_{u\in U_v}1/d_u^{(b)}}} =\theta
    \quad a.s.
  \]
\end{proof}

\header{Remark.}
It is important to know that \VS{A} (either \VS{A}-I or \VS{A}-II) can provide
unbiased estimate of target graph characteristic under the condition that {\em
  every node in the target graph is connected to nodes in the auxiliary graph}.
If a node $u$ is not connected to any node in $G'$, $u$ cannot be indirectly
sampled by \VS{A}.
This will result in biased estimates, and it is difficult to correct the bias.
In Example~\ref{eg:weibo}, since we are only interested in users who share their
check-ins in Weibo, therefore Example~\ref{eg:weibo} satisfies this condition.

\subsection{Random Walk on Target Graph Incorporating with Vertex
  Sampling on Auxiliary Graph (\RW{T}\VS{A})}

In some situations, $d_u^{(b)}=0$ for some $u\in U$, such as the case in
Example~\ref{eg:osn}, where some Pinterest users may not have Facebook accounts at
all, and these users cannot be sampled by \VS{A} (and as a result, \VS{A} can not
provide unbiased estimates of Pinterest user characteristics).
To address this issue, we propose a second sampling method \RW{T}\VS{A}, which
combines random walk sampling on the target graph with vertex sampling on the
auxiliary graph.

The basic idea of \RW{T}\VS{A} is that, we launch a random walk on the target
graph, and at each step allow the walker to jump with a probability dependent on
the node where the walker currently resides.
This is similar to RWwJ on the target graph $G$, but with the major difference that
in \RW{T}\VS{A} the walker jumps to a node in $G$ by jumping first to a node in
$G'$, and then randomly selecting one of its neighbors in $G_b$ (similar to
\VS{A}-I).
We refer to this as an {\em indirect jump}, and show in experiments that indirect
jumps in \RW{T}\VS{A} bring similar benefits as the direct jumps in RWwJ.
An additional advantage of using random walk on the target graph is that it better
characterizes highly connected nodes than uniform node sampling as random walks are
biased towards high degree nodes in $G$.
We depict \RW{T}\VS{A} in Fig.~\ref{fig:rwtvsa}, where each node in $G$ is
connected to a virtual jumper node to conduct indirect jumps, through doing vertex
sampling over auxiliary graph $G'$.

\begin{figure}[htp]
\centering
\begin{tikzpicture}[
  every node/.style={inner sep=1pt, minimum size=15pt, font=\footnotesize},
  und/.style={draw, thick, circle},
  vnd/.style={draw, thick, rectangle},
  att/.style={left, minimum size=0},
  arr/.style={thick, -latex},
  jumper/.style={und, dashed},
  box/.style={draw, dashed, inner sep=5pt},
  ]

  \node(u1) [und] at (0,0) {$u_1$};
  \node(u2) [und, right=.8 of u1] {$u_2$};
  \node(u3) [und, right=.8 of u2] {$u_3$};
  \node(p1) [att, right=.8 of u3] {\large $\cdots$};
  \node(un) [und, right=.8 of p1] {$u_n$};
  \node(j) [jumper, above=0.9 of u3] {$j$};
  \node[box, fit={(u1) (un)}] {};
  \node[att, right=3mm of un] {$G(U,E)$};
  \node[att, right=3mm of j]  {jumper node};

  \node(v1) [vnd] at (0,3) {$v_1$};
  \node(v2) [vnd, right=.8 of v1] {$v_2$};
  \node(v3) [vnd, right=.8 of v2] {$v_3$};
  \node(p2) [att, right=.8 of v3] {\large $\cdots$};
  \node(vn) [vnd, right=.8 of p2] {$v_{n'}$};
  \node[box, fit={(v1) (vn)}] {};
  \node[att, right=3mm of vn] {$G'(V,E')$};

  \foreach \v/\a in {v1/1, v2/2, v3/3, vn/n'}{
    \node[att, above=0.4 of \v] (tmp) {$p_{v_{\a}}$};
    \draw[arr, very thick] (tmp)--(\v);
  }

  \foreach \i in {1,2,3,n}{
    \draw[thick, densely dotted] (u\i) to
    node[att,inner sep=0, font=\scriptsize] {$w_{u_\i}$} (j);
  }

  \draw[blue, arr, dashed] (j) .. controls (1.8,1.8) and (2,2.5) .. (v2);
  \draw[blue, arr, dashed] (v2) .. controls (.5,2.5) and (.5,.8) .. (u1);
  \node[att, text=blue] at (2.4,2) {(i)};
  \node[att, text=blue] at (.5,1.5) {(ii)};

\end{tikzpicture}

\caption{{\bf Illustration of \RW{T}\VS{A} and indirect jump}.
  Each node $u$ in $G$ is virtually connected to a jumper node $j$ with weight
  $w_u$.
  An indirect jump is performed by: {\bf (i)}~randomly sampling a node $v$ in
  $G'$, and {\bf (ii)}~randomly choosing a neighbor of $v$ in $G_b$ as the target
  node to jump to.}
\label{fig:rwtvsa}
\end{figure}
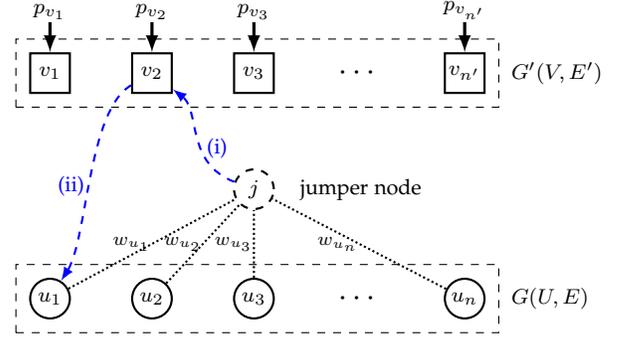

Similar to \VS{A}, we assume that a node $v$ in $G'$ can be sampled with
probability $p_v\propto a_v >0$.
Similar to the discussion of RWwJ in Section~\ref{sec:preliminaries}, in
\RW{T}\VS{A}, we virtually connect each node $u\in U$ to a jumper node $j$ with
edge $(u,j)$, and assign a weight $w_u$ for edge $(u,j)$.
The main challenge in designing \RW{T}\VS{A} is to determine the edge weights
$\{w_u\}_{u\in U}$.
With proper edge weights assignment, we can guarantee the {\em time
  reversibility}\footnote{A Markov chain is said to be time reversible with respect
  to $\pi$ if it satisfies condition $\pi_ip_{ij}=\pi_jp_{ji}, \forall i,j$.}
of random walks, which can facilitate us to determine the stationary probability of
a random walk visiting a node on target graph, and also simplify the estimator
design.
The following theorem states our main result on edge weights assignment.

\begin{theorem}\label{thm:rwtvsa-w}
  If we assign the edge weights $\{w_u\}_{u\in U}$ by
  \begin{equation}\label{eq:wu_rwtvsa}
    w_u=\alpha\sum_{v\in V_u}\frac{a_v}{d_v^{(b)}},\ u\in U
  \end{equation}
  for any $\alpha \geq 0$, then the random walk in \RW{T}\VS{A} is time
  reversible, and the stationary probability of the random walk visiting node
  $u\in U$ satisfies $\pi_u\propto d_u+w_u$, where $d_u$ is the degree of $u$ in
  $G$.
\end{theorem}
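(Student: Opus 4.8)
The plan is to realize \RW{T}\VS{A} as a simple random walk on a weighted augmented graph and then invoke the standard fact that any such walk is time reversible with stationary distribution proportional to weighted degree. Concretely, I would form the augmented graph $\tilde G$ on vertex set $U\cup\{j\}$ whose edges are the edges of $G$, each carrying weight $1$, together with a virtual edge $(u,j)$ of weight $w_u$ for every $u\in U$. A walker that, from its current node, moves to a neighbor with probability proportional to the incident edge weight reproduces exactly the \RW{T}\VS{A} dynamics: from $u$ it either traverses one of its $d_u$ unit-weight edges in $G$, or moves to $j$ with probability proportional to $w_u$, from which it re-enters $U$.

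The crux of the argument, and the step that pins down the formula \eqref{eq:wu_rwtvsa}, is to check that the transition out of the jumper node coincides with the indirect-jump mechanism. An indirect jump first samples $v\in V$ with probability $p_v\propto a_v$ and then picks a uniform $G_b$-neighbor of $v$, so it lands on $u$ with probability $\sum_{v\in V_u} p_v/d_v^{(b)}\propto\sum_{v\in V_u} a_v/d_v^{(b)}=b_u$. On the other hand, the weighted walk leaves $j$ for $u$ with probability $w_u/\sum_{u'} w_{u'}$. These two distributions agree for all $u$ precisely when $w_u\propto b_u$, and writing the proportionality constant as $\alpha$ gives $w_u=\alpha\sum_{v\in V_u} a_v/d_v^{(b)}$, which is \eqref{eq:wu_rwtvsa}. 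The free parameter $\alpha\ge 0$ plays the same role as the jump weight in RWwJ, tuning how aggressively the walk jumps relative to walking, and it cancels in the jumper transition so that matching only requires proportionality.

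Once the identification is made, time reversibility is automatic: a random walk on any undirected weighted graph satisfies the detailed-balance equations $\pi_i P_{ij}=\pi_j P_{ji}$ with $\pi_i$ proportional to the total weight incident to $i$. Restricting to $u\in U$, the incident weight is $d_u$ (from the unit-weight edges of $G$) plus $w_u$ (from the jumper edge), so $\pi_u\propto d_u+w_u$, as claimed. If I preferred a self-contained verification that avoids reasoning about the jumper node, I would instead write the transition kernel directly on $U$, namely $P_{uu'}=\frac{\mathbf{1}((u,u')\in E)}{d_u+w_u}+\frac{w_u}{d_u+w_u}\cdot\frac{b_{u'}}{\sum_{u''} b_{u''}}$, and check detailed balance against $\pi_u\propto d_u+w_u$; the walk term is symmetric because $G$ is undirected, while the jump term becomes $\frac{\alpha\,b_u b_{u'}}{\sum_{u''} b_{u''}}$, which is manifestly symmetric in $u$ and $u'$ exactly because $w_u\propto b_u$.

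The main obstacle is really the consistency check in the second paragraph: everything hinges on recognizing that the indirect-jump landing distribution is proportional to $b_u$, and that forcing the jumper-edge weights to match this distribution is what simultaneously guarantees reversibility and forces the specific weight formula. The reversibility and the stationary-distribution formula are then routine consequences of the weighted-graph structure, in direct analogy with the modified-graph viewpoint already used for RWwJ.
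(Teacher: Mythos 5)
Your proposal is correct and follows essentially the same route as the paper: the paper also verifies detailed balance on the augmented graph with jumper node $j$, observing that the indirect jump lands on $u$ with probability proportional to $b_u=\sum_{v\in V_u}a_v/d_v^{(b)}$ so that choosing $w_u=\alpha b_u$ makes the walk a reversible weighted random walk with $\pi_u\propto d_u+w_u$. Your identification of the matching condition at the jumper node as the crux, and your collapsed-kernel symmetry check, are just explicit restatements of the paper's detailed-balance computation.
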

\begin{proof}
  If the random walk is time reversible, the stationary probabilities of visiting
  $u$ and $j$ are
  \[
    \pi_u = \frac{d_u+w_u}{2|E|+2\sum_uw_u} \quad \text{and} \quad
    \pi_j = \frac{\sum_uw_u}{2|E|+2\sum_uw_u}.
  \]
  Because for any $w_u\geq 0$, it always holds that
  \[
  \pi_up_{uu'} = \pi_{u'}p_{u'u} = \frac{1}{2|E|+2\sum_uw_u}, \forall (u,u')\in E.
  \]
  That is, the random walk is always time reversible along the transitions in $E$.
  We only need to prove that with the $w_u$ given by Theorem~\ref{thm:rwtvsa-w},
  the random walk is also time reversible along the transitions $(u,j)$ and
  $(j,u)$, i.e., $\pi_up_{uj}=\pi_jp_{ju}$.

  The walker residing at node $u$ moves to $j$ to perform an indirect jump with
  probability $p_{uj} = w_u/(d_u+w_u)$.
  Because an indirect jump is performed by first sampling a node $v$ in $G'$, and
  then choosing a neighbor $u$ of $v$ uniformly at random.
  Thus, the walker jumps from $j$ to $u$ with probability
  \begin{equation}\label{eq:qu}
    p_{ju}=c\sum_{v\in V_u}\frac{a_v}{d_v^{(b)}}\triangleq cb_u
  \end{equation}
  where $c$ is a constant.
  When $w_u=\alpha b_u$, so $\sum_uw_u=\alpha/c$, it indeed holds that
  \[
    \pi_up_{uj} = \pi_jp_{ju} = \frac{w_u}{2|E|+2\alpha/c},\ \forall u\in U.
  \]
  This demonstrates that when $w_u=\alpha b_u$, the random walk is time
  reversible, and the stationary probability of visiting $u$ satisfies
  $\pi_u\propto d_u+w_u$.
\end{proof}

Note that if $d_u^{(b)}=0$, then $p_{uj}=p_{ju}=0$, i.e., the walker does not jump
from/to $u$; the worker just moves from/to $u$ to/from a neighbor of $u$.
Hence, $u$ can still be sampled by the random walk.
$\alpha$ controls the probability of conducting a jump on a node.
If $\alpha=0$, \RW{T}\VS{A} does not perform jumps, and it actually becomes a
simple random walk on the target graph; if $\alpha\rightarrow\infty$, \RW{T}\VS{A}
is equivalent to \VS{A}-I.
Thus, \RW{T}\VS{A} behaves similarly as RWwJ.

\header{Sampling design.}
Suppose the random walk starts at node $x_1\in U$, and at step $i$ the random walk
is at node $x_i$.
We calculate the probability of jumping $w_{x_i}$ by Eq.~\eqref{eq:wu_rwtvsa}.
At step $i$, the walker jumps with probability $w_{x_i}/(d_{x_i} + w_{x_i})$;
otherwise, the walker moves to a neighbor $u$ of $x_i$ chosen uniformly at random
and let $x_{i+1}=u$.
An indirect jump is performed as follows:
\begin{itemize}
\item sample a node $v\in V$ in the auxiliary graph;
\item sample a neighbor $u$ of $v$ in $G_b$ uniformly at random, and let
  $x_{i+1}=u$.
\end{itemize}

\header{Estimator.}
Using the collected samples, denoted by $S=(x_i,\ldots,x_B)$, we propose to
estimate $\theta$ by
\begin{equation}\label{eq:est_rwtvsa}
  \hat\theta^{\RW{T}\VS{A}}
  = \frac{1}{Z^{\RW{T}\VS{A}}}\sum_{u\in S}\frac{f(u)}{d_u+w_u},
\end{equation}
where $Z^{\RW{T}\VS{A}}\triangleq\sum_{u\in S} 1/(d_u+w_u)$.

\begin{theorem}\label{th:rwtvsa}
  Estimator~\eqref{eq:est_rwtvsa} is asymptotically unbiased.
\end{theorem}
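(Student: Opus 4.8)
The plan is to reduce the claim to the ratio form of the Law of Large Numbers, exactly as was done for RWwJ in Section~\ref{sec:preliminaries} and for \VS{A}-I in Theorem~\ref{thm:vsa1}, using the stationary distribution already pinned down by Theorem~\ref{thm:rwtvsa-w}. First I would observe that the samples $S=(x_1,\ldots,x_B)$ produced by \RW{T}\VS{A} are the successive positions of a Markov chain on the node set $U$ of the target graph. The one point requiring care is that the virtual jumper node $j$ introduced in Theorem~\ref{thm:rwtvsa-w} is never recorded as a sample: an indirect jump is observed as a single transition $u\to u'$, since the walk passes through $j$ instantaneously, so the recorded sequence is the chain on $U\cup\{j\}$ with $j$ censored out. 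Because Theorem~\ref{thm:rwtvsa-w} establishes time reversibility with visiting probabilities $\pi_u\propto d_u+w_u$ for $u\in U$, and censoring the single auxiliary state $j$ leaves the relative weights among the $U$-states unchanged, the chain driving $S$ has stationary distribution $\pi_u=(d_u+w_u)/C$ with $C\triangleq\sum_{u\in U}(d_u+w_u)$. I would also note that this chain is irreducible on finite $U$ — which is precisely the benefit of the two-layered structure, as the indirect jumps reconnect the possibly disconnected components of $G$ — so that Lemma~\ref{lemma:lln} is applicable.

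Next I would apply Lemma~\ref{lemma:lln} to estimator~\eqref{eq:est_rwtvsa}, taking the roles of $f$ and $g$ in the lemma to be $f(u)/(d_u+w_u)$ and $1/(d_u+w_u)$, respectively. This yields
\[
  \lim_{B\rightarrow\infty}\hat\theta^{\RW{T}\VS{A}}
  =\frac{\E[\pi]{f(u)/(d_u+w_u)}}{\E[\pi]{1/(d_u+w_u)}}
  =\frac{\sum_{u\in U}\pi_u f(u)/(d_u+w_u)}{\sum_{u\in U}\pi_u/(d_u+w_u)}
  \quad a.s.
\]
Substituting $\pi_u=(d_u+w_u)/C$, the factor $d_u+w_u$ cancels in every summand of both numerator and denominator, leaving $\bigl(\tfrac{1}{C}\sum_{u\in U}f(u)\bigr)\big/\bigl(\tfrac{1}{C}\sum_{u\in U}1\bigr)=\tfrac{1}{n}\sum_{u\in U}f(u)=\theta$, which establishes asymptotic unbiasedness.

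The main obstacle lies in the first paragraph rather than in the calculation: one must justify that the observed sequence on $U$ inherits the stationary form $\pi_u\propto d_u+w_u$ despite the jumper node $j$ entering the reversibility argument of Theorem~\ref{thm:rwtvsa-w}, and one must confirm irreducibility so that the convergence in Lemma~\ref{lemma:lln} is legitimate. Once these structural points are settled, the cancellation of the common factor $d_u+w_u$ collapses the limit to $\theta$ immediately, mirroring the RWwJ and \VS{A}-I arguments.
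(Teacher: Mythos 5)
Your proposal is correct and follows essentially the same route as the paper: invoke the stationary distribution $\pi_u\propto d_u+w_u$ from Theorem~\ref{thm:rwtvsa-w} and apply the ratio form of Lemma~\ref{lemma:lln} to the functions $f(u)/(d_u+w_u)$ and $1/(d_u+w_u)$, after which the weights cancel to give $\theta$. Your additional care about censoring the virtual jumper node and about irreducibility is sound and fills in details the paper's terser proof leaves implicit, but it does not constitute a different argument.
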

\begin{proof}
  Since $\pi_u\propto d_u+w_u$, then
  \[
    \E[\pi]{\frac{f(u)}{d_u+w_u}}
    = \sum_{u\in U}\pi_u\frac{f(u)}{d_u+w_u}
    = cn\theta.
  \]
  Similarly,
  \[
    \E[\pi]{\frac{1}{d_u+w_u}}
    = \sum_{u\in U}\pi_u\frac{1}{d_u+w_u}
    = cn.
  \]
  By Lemma~\eqref{lemma:lln}, we obtain
  \[
    \lim_{B\rightarrow\infty}\hat\theta^{\RW{T}\VS{A}}
    =\frac{\E[\pi]{f(u)/(d_u+w_u)}}{\E[\pi]{1/(d_u+w_u)}}
    =\theta \quad a.s.
  \]
\end{proof}

\header{Remark.}
Note that \RW{T}\VS{A} requires vertex sampling (e.g., UNI) on the auxiliary graph
$G'$.
If vertex sampling is also not allowed on $G'$, \RW{T}\VS{A} is inapplicable.
However, one can replace the vertex sampling on $G'$ by a random walk on $G'$.
Unfortunately, this naive approach can perform very poorly when the auxiliary
graph $G'$ is not well connected, because a poorly connected graph can easily trap
a simple random walk in a community.
In what follows, we design a third method to address this challenge.

\subsection{Random Walk on Target Graph Incorporating with Random
  Walk on Auxiliary Graph (\RW{T}\RW{A})}

When both the target and auxiliary graphs do not support vertex sampling, neither
\VS{A} nor \RW{T}\VS{A} is applicable.
Therefore, we design the \RW{T}\RW{A} method to address this challenge.
\RW{T}\RW{A} consists of two parallel random walks on $G$ and $G'$ respectively.
The two random walks cooperate with each other, and can be viewed as two RWwJs, as
illustrated in Fig.~\ref{fig:rwtrwa}.
Unlike \RW{T}\VS{A} where only nodes in $G$ are virtually connected to a jumper
node, in \RW{T}\RW{A}, nodes in both $G$ and $G'$ are virtually connected to two
jumper nodes $j$ and $j'$ with weights $\{w_u\}_{u\in U}$ and $\{w_v\}_{v\in V}$
to perform indirect jumps on $G$ and $G'$ respectively.

\begin{figure}[htp]
\centering
\begin{tikzpicture}[
  every node/.style={inner sep=1pt, minimum size=15pt, font=\footnotesize},
  und/.style={draw, thick, circle},
  vnd/.style={draw, thick, rectangle},
  att/.style={left, minimum size=0},
  arr/.style={thick, -latex, blue, dashed},
  jumper/.style={und, dashed},
  box/.style={draw, dashed, inner sep=5pt},
  ]

  \node(u1) [und] at (0,0) {$u_1$};
  \node(u2) [und, right=.8 of u1] {$u_2$};
  \node(u3) [und, right=.8 of u2] {$u_3$};
  \node(p1) [att, right=.8 of u3] {\large $\cdots$};
  \node(un) [und, right=.8 of p1] {$u_n$};

  \node(j1) [jumper,above left=1.08 and 0.1 of u3] {$j$};

  \node[box, fit={(u1) (un)}] {};

  \node[att, right=3mm of un] {$G(U,E)$};

  \foreach \v in {1,2,3,n}{
    \draw[thick, densely dotted] (j1) to
      node[att] {\scriptsize $w_{u_{\v}}$} (u\v);
  }

  \node(v1) [vnd] at (0,3) {$v_1$};
  \node(v2) [vnd, right=.8 of v1] {$v_2$};
  \node(v3) [vnd, right=.8 of v2] {$v_3$};
  \node(p2) [att, right=.8 of v3] {\large $\cdots$};
  \node(vnp) [vnd, right=.8 of p2] {$v_{n'}$};

  \node(j2) [jumper, right=0.5 of j1] {$j'$};

  \node[box, fit={(v1) (vnp)}] {};

  \node[att, right=3mm of vnp] {$G'(V,E')$};
  \node[att, right=2mm of j2] {jumper nodes};

  \foreach \v/\a in {1/1,2/2,3/3,np/n'} {
    \draw[thick, densely dotted] (j2) to
      node[att] {\scriptsize $w_{v_{\a}}$} (v\v);
  }

  \draw[arr] (j1) .. controls (1.5,2) and (0.5,2) .. (v1);
  \draw[arr] (v1) .. controls (-.3,2) and (.3,1) .. (u1);
  \draw[arr] (j2) .. controls (3.8,0.8) and (5,0.8) .. (un);
  \draw[arr] (un) .. controls (6,1.5) .. (vnp);
\end{tikzpicture}

\caption{Illustration of \RW{T}\RW{A} and indirect jumps}
\label{fig:rwtrwa}
\end{figure}
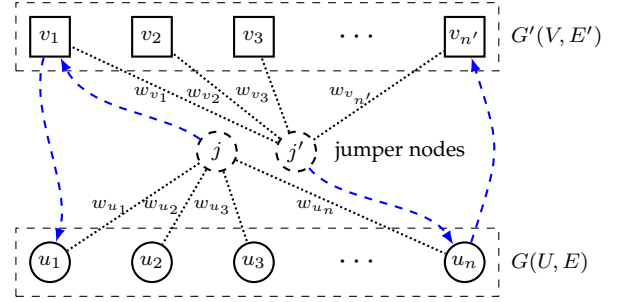

The basic idea behind \RW{T}\RW{A} is as follows.
Suppose the two random walks are $RW$ on $G$ and $RW'$ on $G'$, and at step $i$,
they reside at $x_i\in U$ and $y_i\in V$, respectively.
If one random walk needs to jump at step $i$, say $RW$ on $G$, then it jumps to a
uniformly at random chosen neighbor of $y_i$ in the bipartite graph, which is
assigned to $x_{i+1}$.
Similar jumping procedure also applies to $RW'$ on $G'$.
Hence, they are analogous to two RWwJs, and both can avoid being trapped on $G$
and $G'$.

Similar to \RW{T}\VS{A}, the main challenge in designing \RW{T}\RW{A} is to
determine edge weights $\{w_u\}_{u\in U}$ and $\{w_v\}_{v\in V}$, which control
the probability of jumping of the two random walks.
Obviously, the stationary distributions $\{\pi_u\}_{u\in U}$ and $\{\pi_v\}_{v\in
  V}$ of the two random walks are also related to these weights.
Here we leverage our previous analysis of \RW{T}\VS{A}, and derive that, when
parameters $w_u$ and $w_v$ satisfy the following conditions
\begin{equation}\label{eq:w}
  w_u = \alpha\sum_{v\in V_u}\frac{\pi_v}{d_v^{(b)}}, u\in U,\quad
  w_v = \beta\sum_{u\in U_v}\frac{\pi_u}{d_u^{(b)}}, v\in V,
\end{equation}
for any $\alpha, \beta>0$, the stationary distributions of the two random walks on
$G$ and $G'$ (discarding states $j$ and $j'$) are
\begin{equation}\label{eq:pi}
  \pi_u = \frac{d_u+w_u}{2|E|+\alpha}, u\in U, \quad
  \pi_v = \frac{d_v+w_v}{2|E'|+\beta}, v\in V.
\end{equation}
The matrix forms of Eqs.~\eqref{eq:w}--\eqref{eq:pi} yield
\begin{align}
  w_U &=\alpha AD_V^{-1}\pi_V,
  &w_V &= \beta A^TD_U^{-1}\pi_U, \label{eqs:W} \\
  \pi_U &= \frac{d_U+w_U}{2|E|+\alpha},
  &\pi_V &= \frac{d_V+w_V}{2|E'|+\beta}, \label{eqs:PI}
\end{align}
where $A_{n\times n'}$ is the adjacency matrix of $G_b$, $w_U=[w_u]_{u\in U}^T$,
$w_V=[w_v]_{v\in V}^T$, $\pi_U=[\pi_u]_{u\in U}^T$, $\pi_V=[\pi_v]_{v\in
  V}^T$, $d_U=[d_u]_{u\in U}^T$ and $d_V=[d_v]_{v\in V}^T$ are vectors,
$D_U=diag(d_{u_1}^{(b)}, \ldots,d_{u_n}^{(b)})$ and
$D_V=diag(d_{v_1}^{(b)},\ldots,d_{v_{n'}}^{(b)})$ are diagonal matrices.

Equations~\eqref{eqs:W}--\eqref{eqs:PI} uniquely determine $w_U$ and $w_V$,
i.e.,
\begin{align*}
  w_U^* &= c(I-cc'AD_V^{-1}A^TD_U^{-1})^{-1} AD_V^{-1}(d_V+c'A^TD_U^{-1}d_U) \\
  w_V^* &= c'(I-cc'A^TD_U^{-1}AD_V^{-1})^{-1} A^TD_U^{-1}(d_U+cAD_V^{-1}d_V)
\end{align*}
where $c=\alpha/(2|E'|+\beta)$ and $c'=\beta/(2|E|+\alpha)$ are constants.

The above results illustrate that, when $\alpha$ and $\beta$ are given, $w_U$ and
$w_V$ are uniquely determined.
However, one needs complete knowledge of $G$, $G'$ and $G_b$ to determine their
values.
In graph sampling, we are interested in methods without having to know the
complete graph structure in advance.
In what follows, we design \RW{T}\RW{A} in a way that only makes use of {\em local
  knowledge} of these graphs.

In general, if $w_U\neq w_U^*$ (or $w_V\neq w_V^*$), the random walks on the two
modified graphs are no longer timer reversible, and
Eqs.~\eqref{eqs:W}--\eqref{eqs:PI} do not hold.
There is another way to understand why they do not hold, and this understanding
could motivate us to propose a solution.
Variables in Eqs.~\eqref{eqs:W}--\eqref{eqs:PI} form dependent relations, as
illustrated in Fig.~\ref{fig:dependence}.
Given $w_U$, we can obtain $\pi_U$ (from the first equation of~\eqref{eqs:PI}),
and then obtain $w_V$ (from the second equation of~\eqref{eqs:W}), and finally
obtain $w_U'$ (from the first equation of~\eqref{eqs:W}).
If $w_U=w_U^*$, then $w_U'=w_U^*$; otherwise, $w_U'\neq w_U\neq w_U^*$, and this
forms a contradiction.

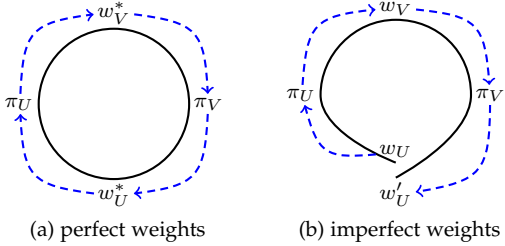
\begin{figure}[htp]
\centering
\subfloat[perfect weights]{\begin{tikzpicture}[
  thick,smooth cycle,
  arr/.style={densely dashed, ->, blue},
  att/.style={minimum size=0, inner sep=1pt, font=\footnotesize},
  ]

  \node[circle,draw,thick, minimum size=2cm] (cir) at (0,0) {};
  \node[att,below=0 of cir]  (wu) {$w_U^*$};
  \node[att,left=0 of cir] (pu) {$\pi_U$};
  \node[att,above=0 of cir] (wv) {$w_V^*$};
  \node[att,right=0 of cir] (pv) {$\pi_V$};
  \draw [arr] (wu) edge[bend left=40,looseness=1.5] (pu);
  \draw [arr] (pu) edge[bend left=40,looseness=1.5] (wv);
  \draw [arr] (wv) edge[bend left=40,looseness=1.5] (pv);
  \draw [arr] (pv) edge[bend left=40,looseness=1.5] (wu);
\end{tikzpicture}

}\qquad
\subfloat[imperfect weights]{\begin{tikzpicture}[
  thick, smooth,
  arr/.style={densely dashed, ->, blue},
  att/.style={minimum size=0, inner sep=1pt, font=\footnotesize},
  ]

  \node[circle,minimum size=2cm] (cir) at (0,0) {};
  \draw plot[tension=1] coordinates {(0,-0.9) (-1,0) (0,1) (1,0) (0,-1.1)};
  \node[att,below=-.4 of cir] (wu) {$w_U$};
  \node[att,below=.1 of cir] (wup) {$w_U'$};
  \node[att,left=0 of cir] (pu) {$\pi_U$};
  \node[att,above=0 of cir] (wv) {$w_V$};
  \node[att,right=0 of cir] (pv) {$\pi_V$};
  \draw[arr] (wu) edge[bend left=40,looseness=1.5] (pu);
  \draw[arr] (pu) edge[bend left=40,looseness=1.5] (wv);
  \draw[arr] (wv) edge[bend left=40,looseness=1.5] (pv);
  \draw[arr] (pv) edge[bend left=40,looseness=1.5] (wup);
\end{tikzpicture}

}
\caption{Dependent relations among variables.
  The variable at the head of an arrow depends on the variable at the tail of the
  arrow.}
\label{fig:dependence}
\end{figure}

We find that this contradiction has a physical meaning, and it is fixable.
The normalized weights $w_U$ can be viewed as a distribution that describes the
probability a walker jumping to a node in $G$.
When we specify some particular weights $w_U$, it means that we expect the walker
to jump to a node in $G$ following a distribution specified by $w_U$.
If $w_U\neq w_U^*$, we will derive a different $w_U'$ using
Eqs.~\eqref{eqs:W}--\eqref{eqs:PI}.
It means that the walker actually jumps to a node in $G$ following a different
distribution specified by $w_U'$.
This is the reason why the random walk is not time reversible.
Fortunately, with this understanding, the contradiction becomes fixable by
applying the famous Metropolis-Hastings (MH) sampler~\cite{Robert2004}.
We can treat (normalized) $w_U$ as the {\em desired distribution}, and
(normalized) $w_U'$ as the {\em proposal distribution}, and we use a MH sampler to
build a Markov chain (referred as the MH chain) that generates samples with the
desired distribution.
Each time when the walker requires jumping, it jumps to a node generated by the MH
chain.
This guarantees that the walker jumps to nodes in $G$ following the desired
distribution, and ensures that $\pi_U$ and $\pi_V$ are still the stationary
distributions of the random walks.

\header{Sampling design.}
The complete design of \RW{T}\RW{A} comprises three parallel Markov chains as
illustrated in Fig.~\ref{fig:mcs}, and we need to specify desired weights
$w_U$ in advance, e.g., from a uniform distribution.

\begin{figure}[htp]
\centering
\footnotesize
\begin{tikzpicture}[
  every node/.style={inner sep=0, minimum size=18pt, font=\footnotesize},
  und/.style={draw,thick,circle},
  vnd/.style={draw,thick,rectangle, minimum size=16pt},
  att/.style={left,minimum size=0},
  arr/.style={thick, -latex},
  ]

  \node[und] (u1) at (0,0) {$x_1$};
  \node[att,right=.5 of u1] (p1)  {\large$\cdots$};
  \node[und,right=.5 of p1] (ui) {$x_i$};
  \node[und,right=.5 of ui] (ui1) {$x_{i+\!1}$};
  \node[att,right=.5 of ui1] (p2) {\large$\cdots$};
  \draw[arr] (u1)--(p1);
  \draw[arr] (p1)--(ui);
  \draw[arr] (ui)--(ui1);
  \draw[arr] (ui1)--(p2);
  \node[att,left=.3 of u1] {RW on $G$:};

  \node[und,above=.2 of u1] (up1) {$x'_1$};
  \node[att,right=.5 of up1] (pp1) {\large$\cdots$};
  \node[und,right=.5 of pp1] (upi) {$x'_i$};
  \node[und,right=.5 of upi] (upi1) {$x'_{i+\!1}$};
  \node[att,right=.5 of upi1] (pp2) {\large$\cdots$};
  \draw[arr] (up1)--(pp1);
  \draw[arr] (pp1)--(upi);
  \draw[arr] (upi)--(upi1);
  \draw[arr] (upi1)--(pp2);
  \node[att,left=.3 of up1] {MH chain:};

  \node[vnd,above=.2 of up1] (v1) {$y_1$};
  \node[att,right=.55 of v1] (pv1) {\large$\cdots$};
  \node[vnd,right=.55 of pv1] (vi) {$y_i$};
  \node[vnd,right=.55 of vi] (vi1) {$y_{i+\!1}$};
  \node[att,right=.55 of vi1] (pv2) {\large$\cdots$};
  \draw[arr] (v1)--(pv1);
  \draw[arr] (pv1)--(vi);
  \draw[arr] (vi)--(vi1);
  \draw[arr] (vi1)--(pv2);
  \node[att,left=.3 of v1] {RW on $G'$:};
\end{tikzpicture}

\caption{Three parallel Markov chains in \RW{T}\RW{A}.}
\label{fig:mcs}
\end{figure}
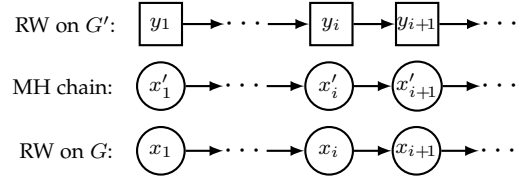

\bullethdr{Random walk on auxiliary graph $G'$:}
Suppose the random walk resides at node $y_i\in V$ at step $i$.
Then we can calculate $w_{y_i}$ according to Eq.~\eqref{eq:w}.
At step $i+1$, the random walk executes one of the following two steps.
\begin{description}[\emph{Walk}]
\item[\emph{Jump}:] With probability $w_{y_i}/(d_{y_i}+w_{y_i})$, the walker jumps
  to a random neighbor $v\in V$ of node $x_i$ in $G_b$, and $y_{i+1}=v$;
\item[\emph{Walk}:] Otherwise, the walker moves to a random neighbor $v\in V$ of
  $y_i$ in $G'$, and $y_{i+1}=v$.
\end{description}

\bullethdr{MH chain:}
Suppose the MH chain resides at node $x'_i$ at step $i$.
At step $i+1$, we randomly choose a neighbor $u\in U$ of $y_i$ in $G_b$. This is
equivalent to sample a node $u\in U$ with probability proportional to $w_u'$.
\begin{description}[\emph{Acceptance}]
\item[\emph{Acceptance}:] With probability $r_i$, we accept $u$ and $x'_{i+1}=u$,
  where $r_i=\min\{1,(w_uw_{x'_i}')/(w_{x'_i}w_u')\}$;
\item[\emph{Rejection}:] Otherwise, we reject $u$ and $x'_{i+1}=x'_i$.
\end{description}

\bullethdr{Random walk on target graph $G$:}
Suppose the random walk resides at node $x_i\in U$ at step $i$.
At step $i+1$, the walker executes one of the following two steps.
\begin{description}[\emph{Jump}]
\item[\emph{Jump}:] With probability $w_{x_i}/(d_{x_i}+w_{x_i})$, the walker jumps
  to $x'_{i+1}$, and $x_{i+1}=x'_{i+1}$;
\item[\emph{Walk}:] Otherwise, the walker moves to a random neighbor $u\in U$ of
  $x_i$ in $G$, and $x_{i+1}=u$.
\end{description}

This sampling design ensures that we use only local knowledge of the three graphs
to obtain a sample path $S=(x_1,\ldots,x_B)$, which can yield unbiased estimate of
$\theta$.

\header{Estimator.}
Given the sample path $S=(x_1,\ldots,x_B)$, we propose to use the following
estimator to estimate $\theta$.
\begin{equation}\label{eq:est_rwtrwa}
  \hat\theta^{\RW{T}\RW{A}}
  =\frac{1}{Z^{\RW{T}\RW{A}}}\sum_{u\in S}\frac{f(u)}{d_u+w_u},
\end{equation}
where $Z^{\RW{T}\RW{A}}\triangleq\sum_{u\in S}1/(d_u+w_u)$.

\begin{theorem}
  Estimator~\eqref{eq:est_rwtrwa} is asymptotically unbiased.
\end{theorem}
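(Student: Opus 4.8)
The plan is to follow the same ratio-of-estimators strategy used for Theorem~\ref{th:rwtvsa}, reducing everything to the ratio form of the Law of Large Numbers (Lemma~\ref{lemma:lln}). Concretely, once I can show that the random walk on the target graph $G$ (the $x$-chain in Fig.~\ref{fig:mcs}) admits a stationary distribution satisfying $\pi_u\propto d_u+w_u$, the argument closes exactly as before: writing $\pi_u=c(d_u+w_u)$ with $c=1/\sum_{u'}(d_{u'}+w_{u'})$, I obtain $\E[\pi]{f(u)/(d_u+w_u)}=c\sum_{u\in U}f(u)=cn\theta$ and $\E[\pi]{1/(d_u+w_u)}=cn$, so Lemma~\ref{lemma:lln} gives $\lim_{B\to\infty}\hat\theta^{\RW{T}\RW{A}}=\theta$ almost surely. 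Thus the entire content of the theorem is the stationarity claim $\pi_u\propto d_u+w_u$ for the \emph{desired} (user-specified) weights $w_U$, not the self-consistent weights $w_U^*$.

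The key step is to pin down the role of the Metropolis--Hastings (MH) chain. Unlike \RW{T}\VS{A}, here the jump target cannot be drawn directly from the desired law $\propto w_u$, because a locally generated proposal (take the current $G'$-state $y_i$, then a uniform $G_b$-neighbor) lands on $u$ with probability $\propto w_u'$, the \emph{proposal} distribution of Fig.~\ref{fig:dependence}, rather than $\propto w_u$. I would first verify that the MH chain, using this proposal together with acceptance ratio $r_i=\min\{1,(w_uw_{x'_i}')/(w_{x'_i}w_u')\}$, satisfies detailed balance with respect to the desired distribution $\pi^{\mathrm{MH}}_u\propto w_u$; this is the standard MH identity, and it is precisely where the ``imperfect weights'' contradiction of Fig.~\ref{fig:dependence} is resolved. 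Consequently, each time the $x$-chain elects to jump and sets $x_{i+1}=x'_{i+1}$, the jump target is (asymptotically) distributed as $\propto w_u$.

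Given that jumps on $G$ then follow $\propto w_u$, the stationarity claim reduces to the same detailed-balance computation as Theorem~\ref{thm:rwtvsa-w}. Viewing $G$ augmented with a jumper node $j$, reversibility along the edges of $E$ is automatic, so I only need to check the jumper edges: with $p_{uj}=w_u/(d_u+w_u)$ and $p_{ju}=w_u/\sum_{u'}w_{u'}$, the candidate $\pi_u\propto d_u+w_u$, $\pi_j\propto\sum_{u'}w_{u'}$ yields $\pi_up_{uj}\propto w_u\propto\pi_jp_{ju}$, confirming detailed balance and hence $\pi_u\propto d_u+w_u$, in agreement with Eq.~\eqref{eq:pi}.

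The hard part will be justifying the coupling rigorously rather than the detailed-balance algebra. The three chains are mutually dependent: the MH proposal distribution $\propto w_u'$ is correct only when the $G'$-walk $\{y_i\}$ is at its stationary law $\pi_V$, while $w_V$ in turn depends on $\pi_U$ through Eq.~\eqref{eq:w}, and $w_U$ feeds back through $\pi_U$ --- exactly the circular dependency of Fig.~\ref{fig:dependence}. So the honest obstacle is to show that the coupled system $(x_i,x'_i,y_i)$ possesses a joint stationary regime whose $U$- and $V$-marginals are the $\pi_U,\pi_V$ of Eq.~\eqref{eq:pi}, and that the MH subchain is itself at stationarity at the moments a jump is consumed; once this fixed point is established, the three balance conditions hold simultaneously and the per-chain analysis above applies, completing the proof.
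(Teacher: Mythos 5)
Your proposal follows essentially the same route as the paper: the paper's own proof is a single sentence asserting that the construction yields $\pi_u\propto d_u+w_u$ and then invoking the identical ratio-form Law-of-Large-Numbers argument from Theorem~\ref{th:rwtvsa}. You in fact go further than the paper by spelling out the Metropolis--Hastings detailed-balance check for the jump distribution and by honestly flagging the joint stationarity of the three coupled chains as the step requiring rigorous justification --- a point the paper's proof glosses over entirely.
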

\begin{proof}
  Since we have constructed the Markov chain on $G$ with stationary distribution
  $\pi_u\propto d_u + w_u$, the proof is exactly the same as
  Theorem~\ref{th:rwtvsa}.
\end{proof}

\section{Experiments}
\label{sec:experiment}

In this section, we conduct experiments on both synthetic and real datasets to
validate our sampling designs.
Our goal is to demonstrate the unbiasedness of proposed estimators
(\eqref{eq:est_vsa1}, \eqref{eq:est_vsa2}, \eqref{eq:est_rwtvsa},
and~\eqref{eq:est_rwtrwa}) and study their estimation errors with respect to
different factors such as sampling budget $B$ and parameter settings $\alpha$ and
$\beta$.

We consider to estimate the PDF and CCDF of degree distribution of a graph.
For PDF, the characteristic function is defined as $f_d(u)\triangleq\indr{d_u=d}$,
where $\indr{\cdot}$ is the indicator function, and the graph characteristic is
the distribution $\{\theta_d\}_{d\geq 0}$ where $\theta_d=\sum_u f_d(u)/n$ is the
fraction of nodes with degree $d$ in graph $G$.
For CCDF, the characteristic function is defined as
$f_d(u)\triangleq\indr{d_u>d}$, and the graph characteristic is the distribution
$\{\theta_d\}_{d\geq 0}$ where $\theta_d=\sum_u f_d(u)/n$ is the fraction of nodes
with degree larger than $d$ in graph $G$.
In some experiments, we will only show the results of estimating CCDF due to
space limitation.

\subsection{Experiments on Synthetic Data}

In the first experiment, we validate the sampling methods using synthetic data.

\header{Synthetic data.}
We generate a two-layered network structure by connecting three
Barab\'{a}si-Albert (BA) graphs~\cite{Barabasi1999} $G_1, G_2$ and $G_3$.
Each BA graph contains 100,000 nodes, and the three BA graphs have average degree
$4$, $10$ and $20$, respectively.
$G_1$ and $G_3$ are connected by one edge to form the target graph $G$, which thus
has a barbell structure.
$G_2$ is the auxiliary graph $G'$, and the bipartite graph $G_b$ is formed by
connecting nodes in $G$ and $G'$ according to the following two steps:
\begin{itemize}
\item connect every node in $G$ to a randomly selected node in $G'$;
\item randomly connect $200,000$ pairs of nodes, and each pair has one node in $G$
  and the other node in $G'$.
\end{itemize}

The first step ensures that every node in $U$ satisfies $d_u^{(b)}>0$ so that we
can apply \VS{A} on this dataset.

\header{Results and analysis.}
First we demonstrate that the proposed estimators $\hat\theta_d^{\VS{A}\text{-I}},
\hat\theta_d^{\VS{A}\text{-II}}, \hat\theta_d^{\RW{T}\VS{A}}$, and
$\hat\theta_d^{\RW{T}\RW{A}}$ are asymptotically unbiased.
To show this, we apply these sampling methods to estimate the fraction of nodes
with degree $2$ and $12$ in the target graph, denoted by $\theta_2$ and
$\theta_{12}$.
We compare their estimates to the ground truth for different sampling budgets $B$.
We also show the estimates using a simple random walk on the target graph.
Because the target graph has a barbell structure, the random walk is easily to be
trapped into one component and fail to explore the other component.
We expect to see that the random walk estimator does not perform well.
The results are depicted in Fig.~\ref{fig:unbias}.
Indeed, the random walk incurs large biases, and always overestimates $\theta_2$
and $\theta_{12}$.
In comparison, our proposed estimators can obtain more accurate estimates, and it
is clear to see that when sampling budget $B$ increases, all our proposed
estimators can converge to the ground truth.
Hence, these results demonstrate that our proposed estimators are asymptotically
unbiased.

\begin{figure}[tp]
\centering
\subfloat[$\hat\theta^{\VS{A}\text{-I}}_d$]{%
  \includegraphics[width=.5\linewidth]{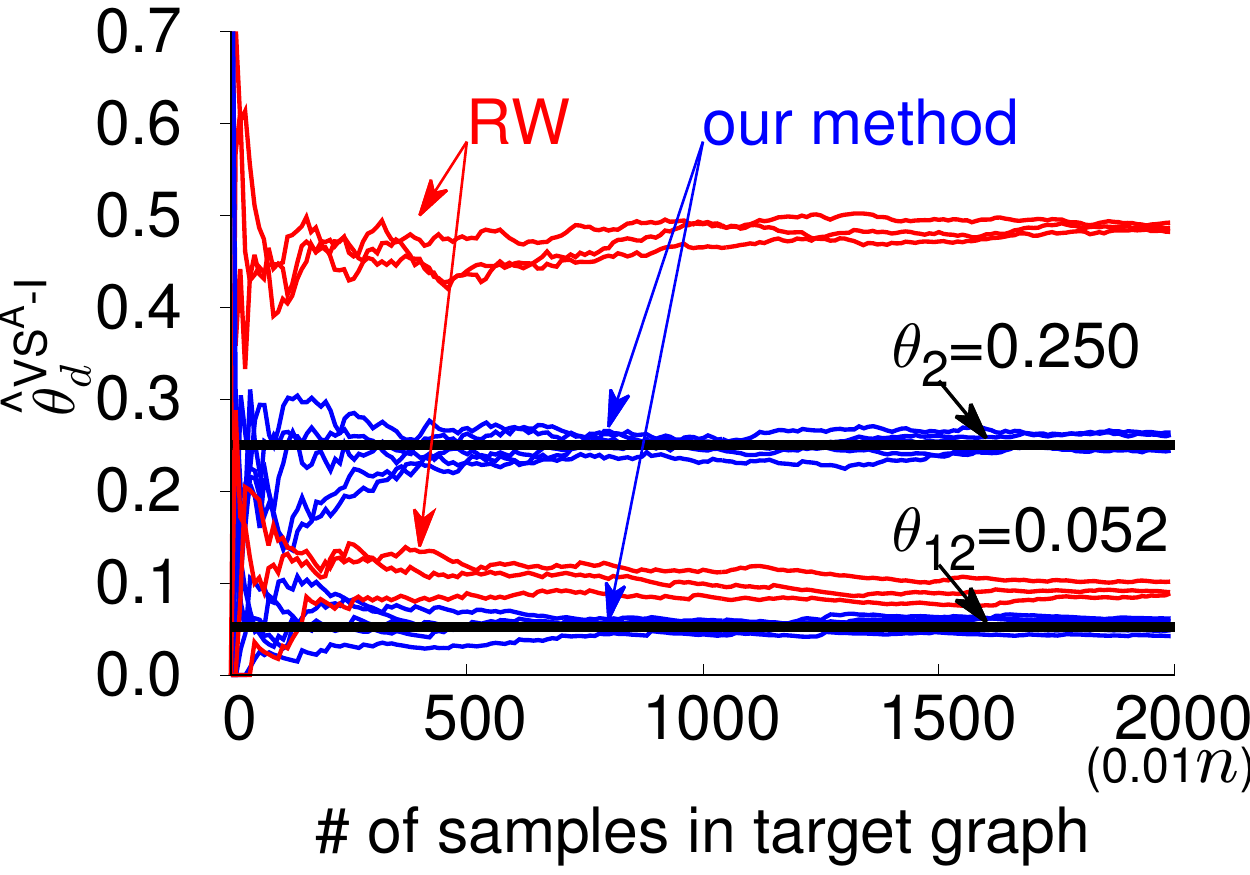}}
\subfloat[$\hat\theta^{\VS{A}\text{-II}}_d$]{%
  \includegraphics[width=.5\linewidth]{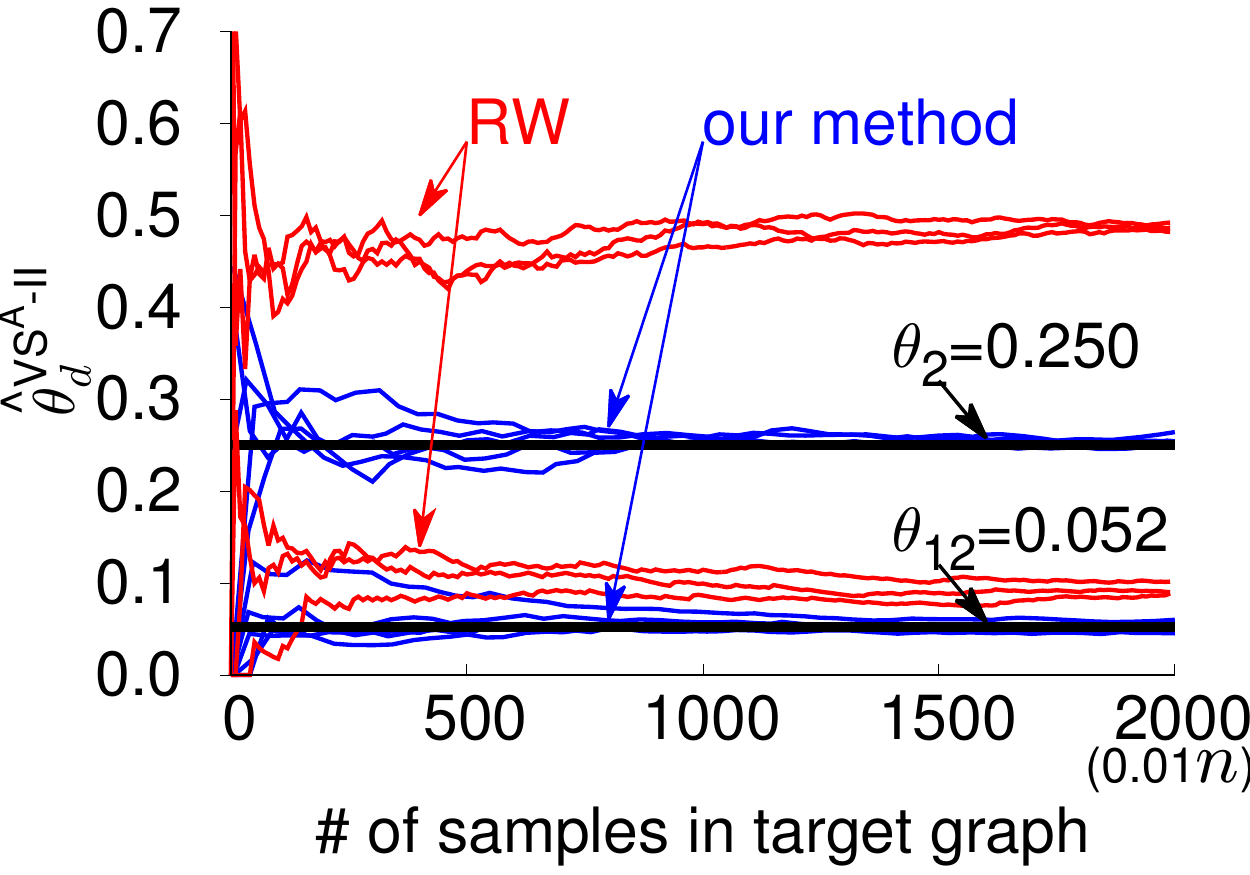}} \\
\subfloat[$\hat\theta^{\RW{T}\VS{A}}_d$ ($\alpha=10$)]{%
  \includegraphics[width=.5\linewidth]{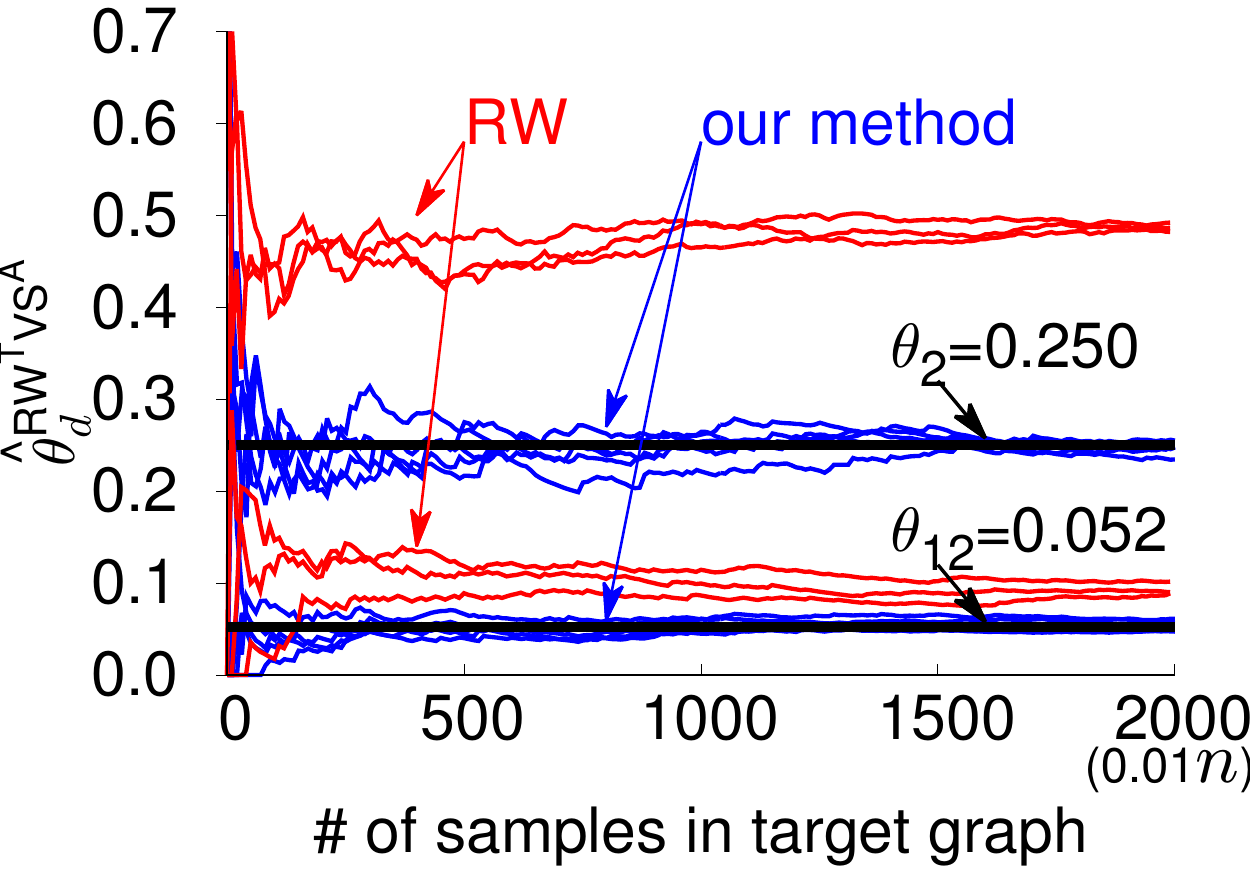}}
\subfloat[$\hat\theta^{\RW{T}\RW{A}}_d$ ($\alpha=\beta=10$)]{%
  \includegraphics[width=.5\linewidth]{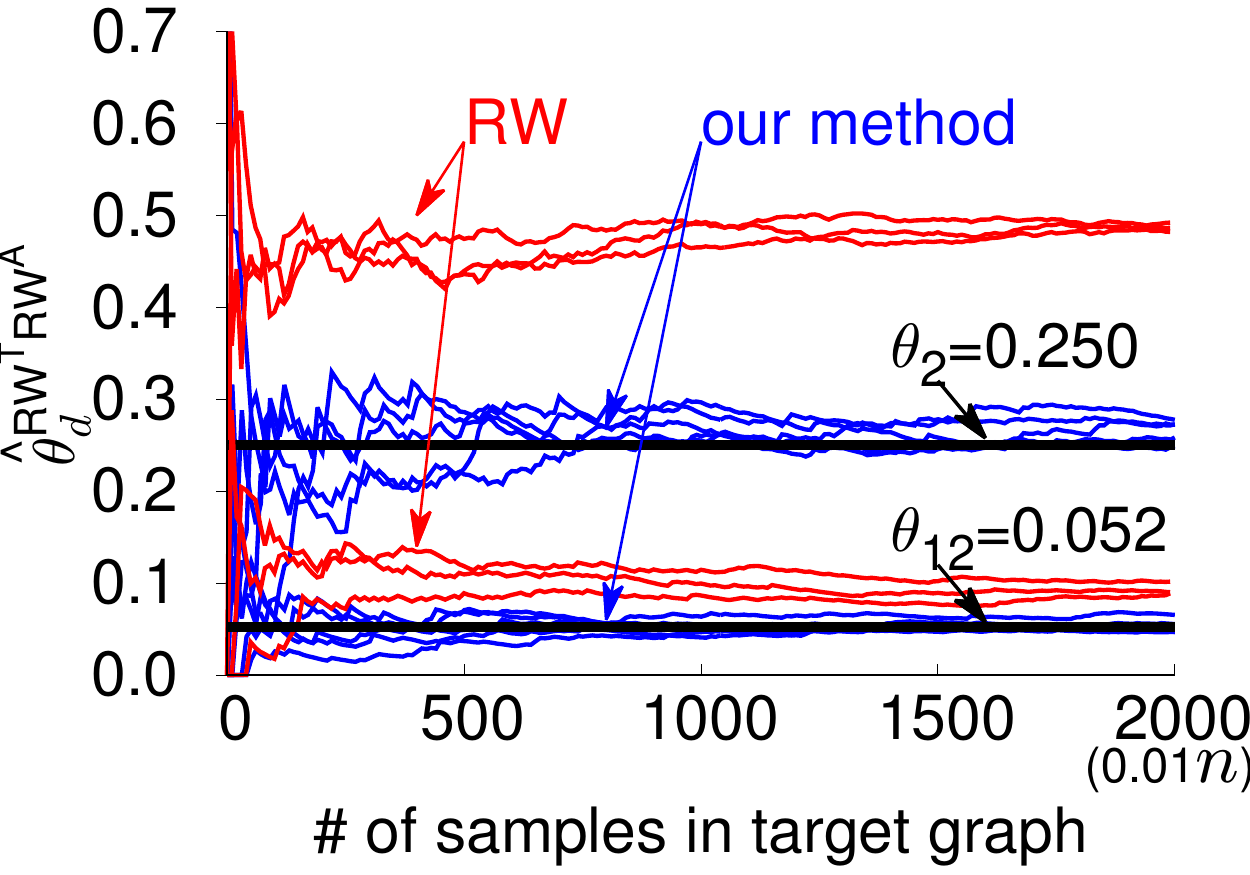}}
\caption{Asymptotic unbiasedness of estimators.
  \label{fig:unbias}}
\end{figure}

Next, we study the estimation error of each estimator for estimating the PDF and
CCDF of degree distribution.
We choose the \emph{normalized rooted mean squared error} (NRMSE) as a metric to
evaluate the estimation error of an estimator, which is defined as follows
\[
  \text{NRMSE}(\hat\theta)
  =\frac{\sqrt{\E{(\hat\theta-\theta)^2}}}{\theta}.
\]
NRMSE measures the relative difference between an estimated value $\hat\theta$ and
a real value $\theta$.
The smaller the NRMSE, the more accurate the estimator $\hat\theta$ is.
To compare the NRMSE of different estimators, we fix the sampling budget $B$ to be
$1\%$ of the target graph size, and calculate the averaged empirical NRMSE over
$1,000$ runs.
The results are depicted in Figs.~\ref{fig:syn_nrmse_pdf}
and~\ref{fig:syn_nrmse_ccdf}.

\begin{figure}[tp]
  \centering
  \subfloat[\VS{A}-I]{\includegraphics[width=.5\linewidth]{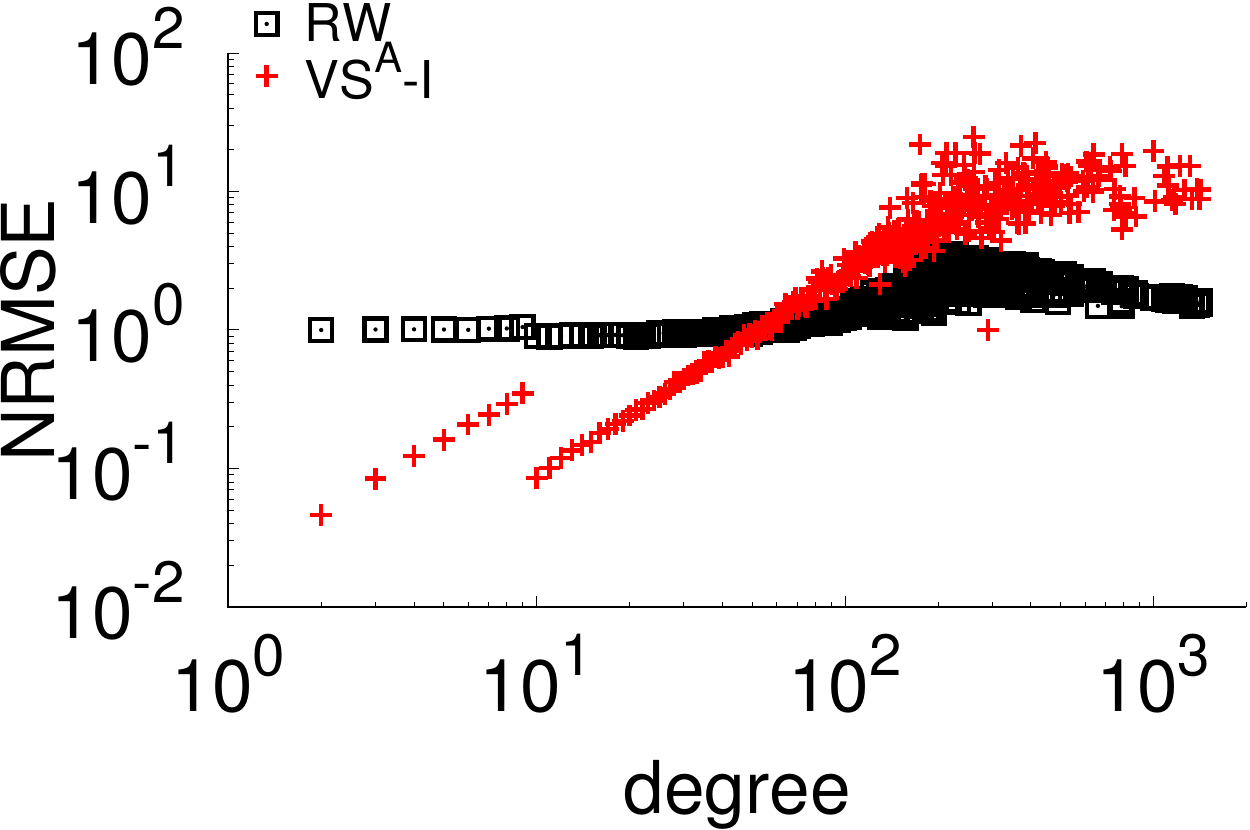}}
  \subfloat[\VS{A}-II]{\includegraphics[width=.5\linewidth]{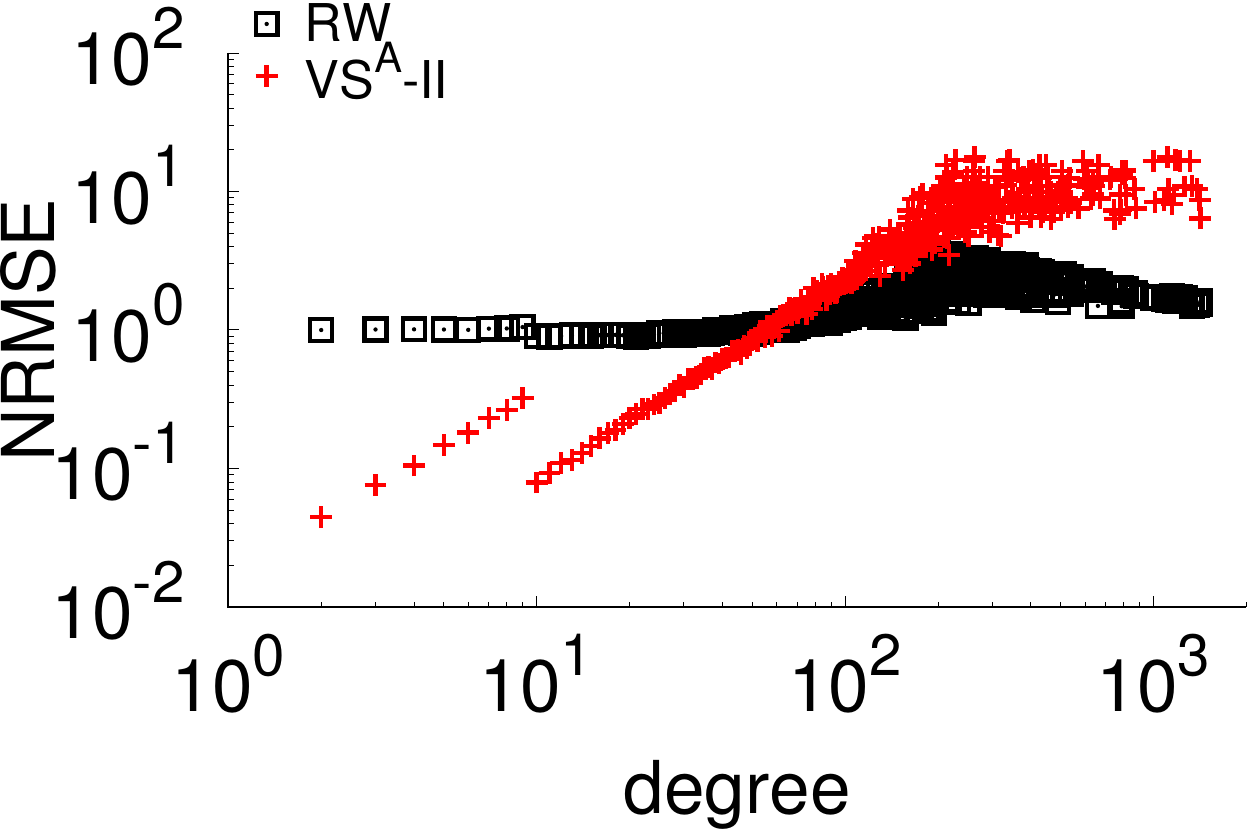}} \\
  \subfloat[\RW{T}\VS{A}]{\includegraphics[width=.5\linewidth]{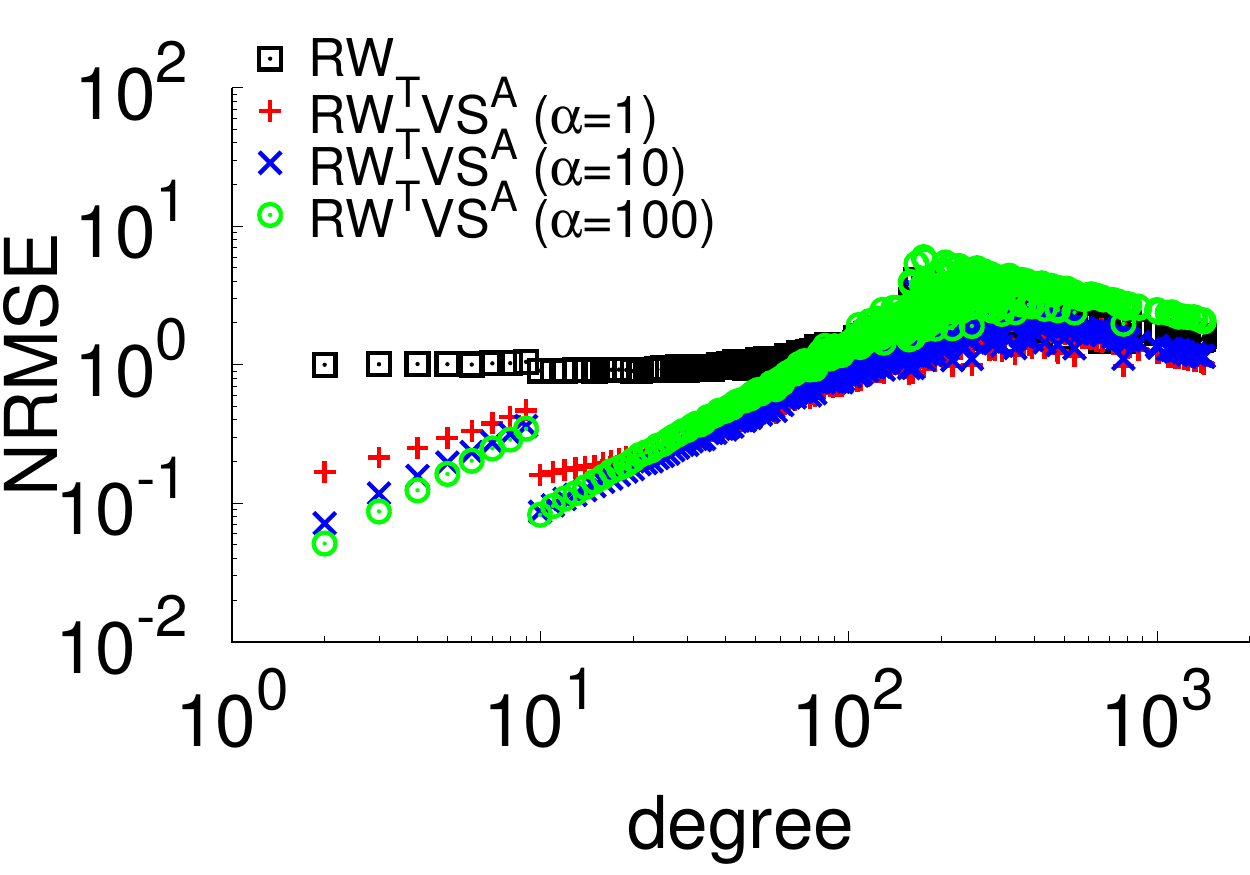}}
  \subfloat[\RW{T}\RW{A}]{\includegraphics[width=.5\linewidth]{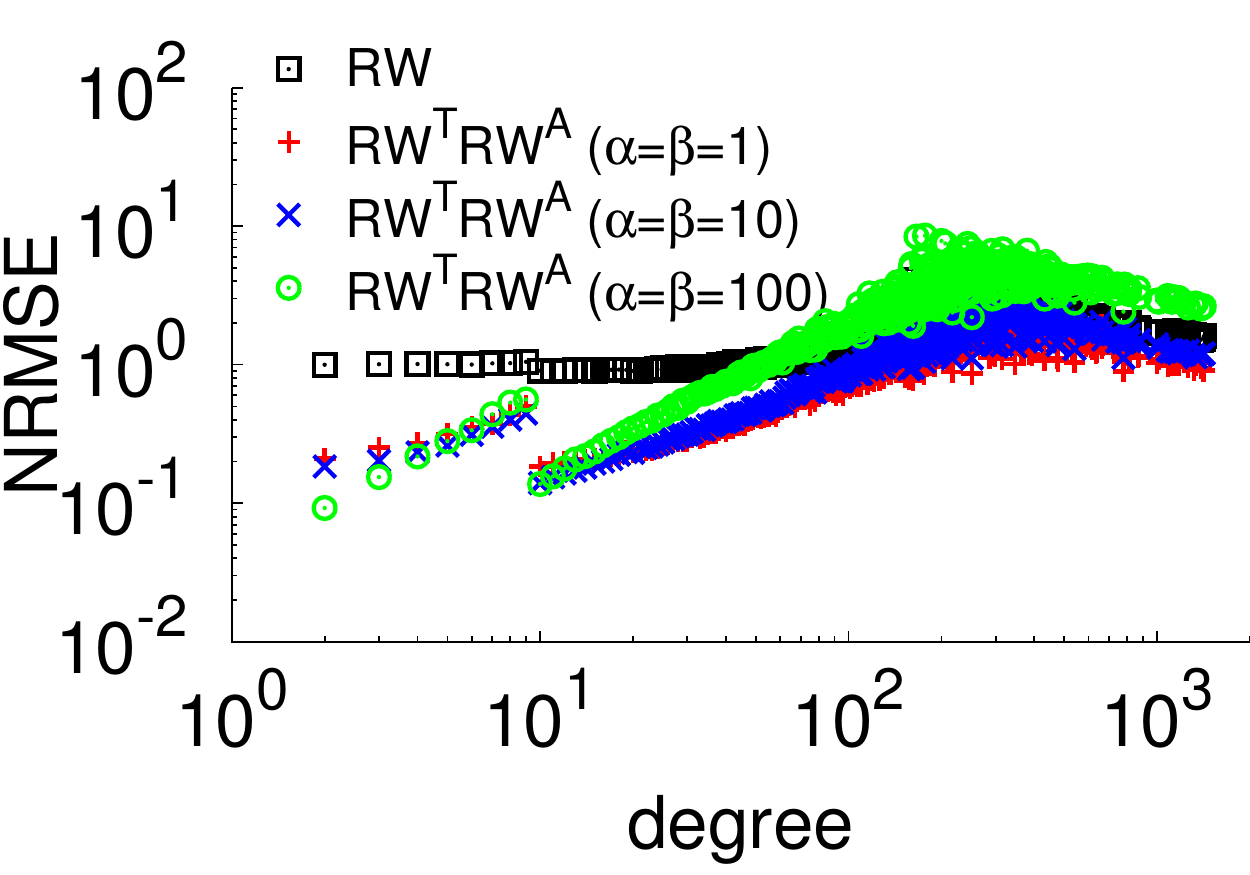}}
  \caption{PDF NRMSE of different estimators. \label{fig:syn_nrmse_pdf}}
\end{figure}

\begin{figure}[tp]
  \centering
  \subfloat[\VS{A}-I]{\includegraphics[width=.5\linewidth]{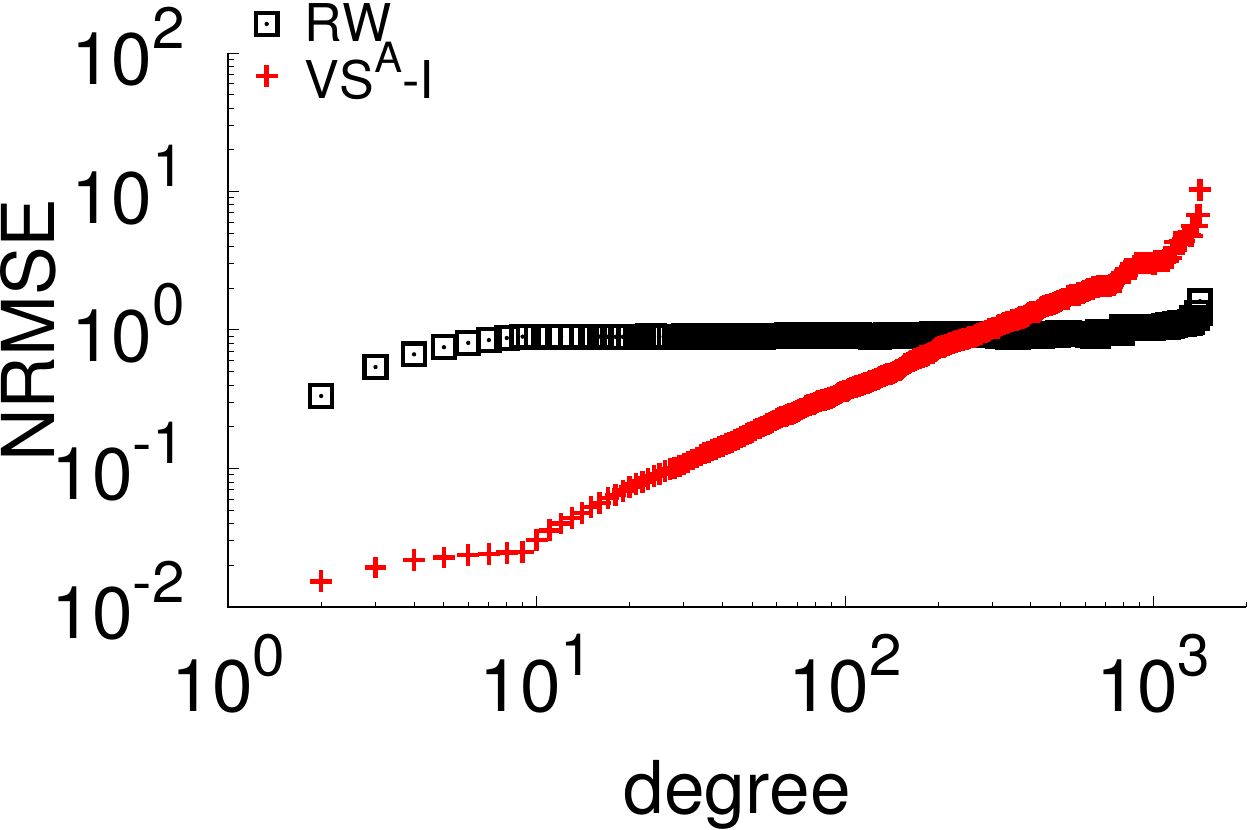}}
  \subfloat[\VS{A}-II]{\includegraphics[width=.5\linewidth]{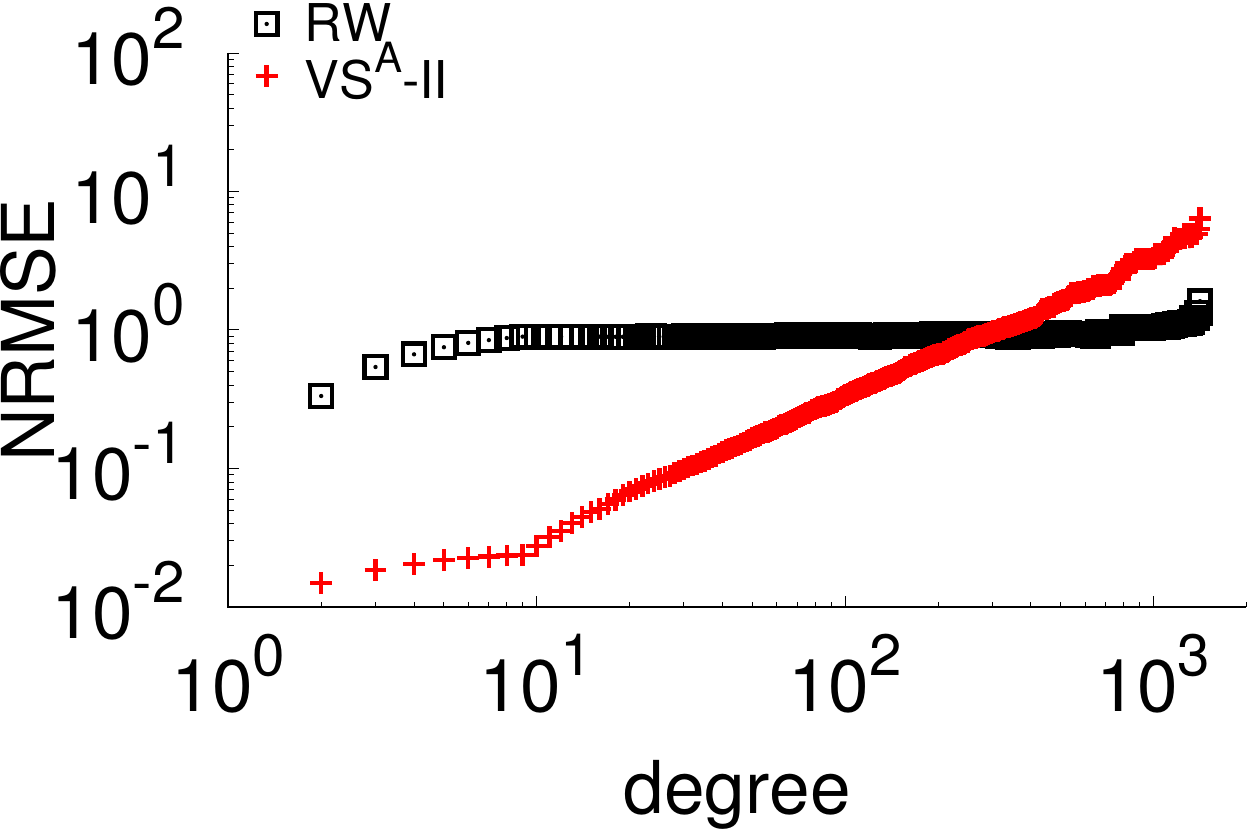}} \\
  \subfloat[\RW{T}\VS{A}]{\includegraphics[width=.5\linewidth]{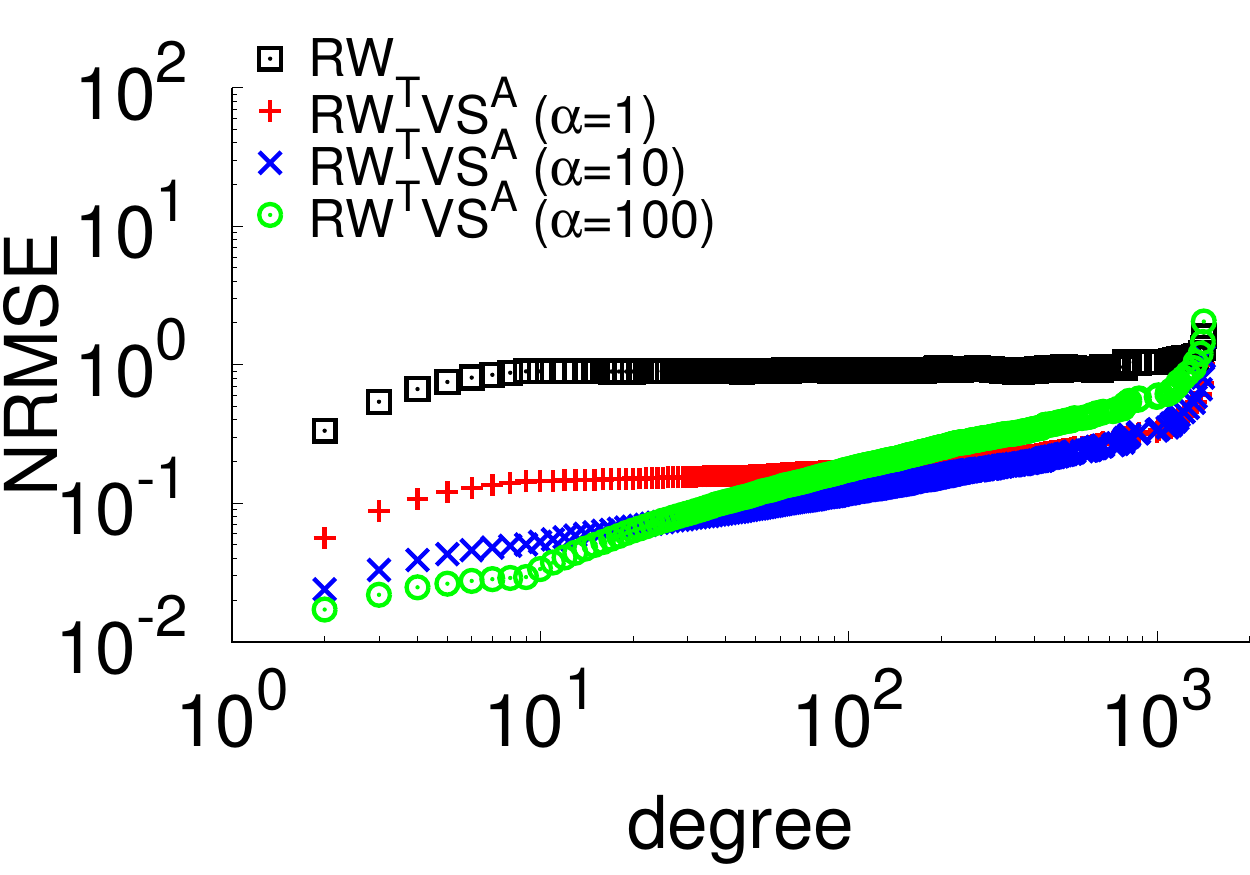}}
  \subfloat[\RW{T}\RW{A}]{\includegraphics[width=.5\linewidth]{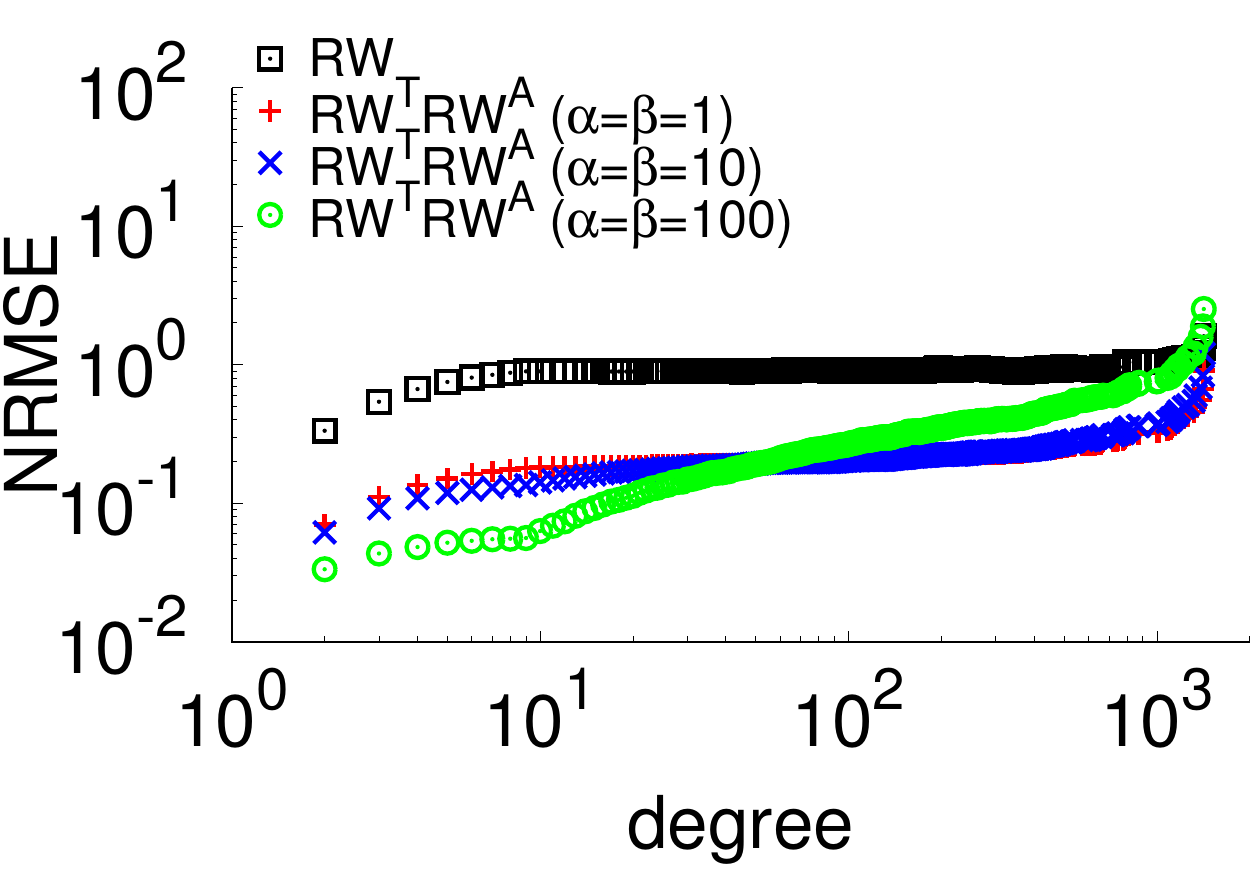}}
  \caption{CCDF NRMSE of different estimators. \label{fig:syn_nrmse_ccdf}}
\end{figure}

To clearly see the performance difference, we also show the NRMSE of the random
walk (RW) estimator as a baseline.
Because a RW can hardly converge over a barbell graph within $B=0.01n$ steps, we
observe that NRMSE of RW is almost the largest among all estimators for low
degrees.
Comparing \VS{A}-I and \VS{A}-II with RW, we find that the two \VS{A} estimators
can provide smaller PDF/CCDF NRMSE for low degree nodes than RW.
However, \VS{A} estimators produce larger NRMSE for high degree nodes than RW.
Therefore, \VS{A} can better estimate low degree nodes than high degree nodes in a
graph.

The weakness of \VS{A} can be overcome by \RW{T}\VS{A} and \RW{T}\RW{A}.
From Fig.~\ref{fig:syn_nrmse_ccdf}, it is clearer to see that when indirect jumps
are incorporated into random walks in \RW{T}\VS{A} and \RW{T}\RW{A}, NRMSE for
high degree nodes decreases, and NRMSE for low degree nodes remains smaller than
RW.
If we increase the probability of jumping at each step of random walk by
increasing $\alpha$ and $\beta$, we observe that NRMSE for low degree nodes
decreases, but NRMSE for high degree nodes increases.
This behavior is similar to RWwJ~\cite{Avrachenkov2010,Ribeiro2012b} and
demonstrates that the indirect jumps in \RW{T}\VS{A} and \RW{T}\RW{A} indeed
behave similarly as the direct jumps in RWwJs.

\subsection{Experiments on LBSN Datasets}

In the second experiment, we apply the \VS{A}-II method on two real-world LBSN
datasets to solve the problem in Example~\ref{eg:weibo}, i.e., measure user
characteristics in an area of interest on the map.

\header{LBSN datasets.}
We obtain two public LBSN datasets from Brightkite and Gowalla~\cite{Cho2011}.
Brightkite and Gowalla are once two popular LBSNs where users shared their
locations by checking-in.
Users in the two LBSNs are also connected by undirected friendship relations,
which form two user social networks.
The statistics of these two datasets are summarized in Table~\ref{tab:lbsn}.

\begin{table}[t]
\centering
\caption{Summary of two LBSN datasets. \label{tab:lbsn}}
\begin{threeparttable}[b]
\begin{tabular}{|c|l@{}|r|r|}
\hline
\multicolumn{2}{|c|}{{\bf dataset}} & {\bf Brightkite} & {\bf Gowalla} \\
\hline
\hline
\multirow{5}{*}{$G$}
 & network type & undirected & undirected \\
 & users & $58,228$ & $196,591$ \\
 & friendship edges & $214,078$ & $950,327$ \\
 & users in LCC\tnote{1} & $56,739$ & $196,591$ \\
 & edges in LCC & $212,945$ & $950,327$ \\
\hline
\multirow{3}{*}{$G'$ and $G_b$}
 & venues & $772,966$ & $1,280,969$ \\
 & users having check-ins & $51,406$ & $107,092$ \\
 & check-ins & $4,491,143$ & $6,442,890$ \\
\hline
\multirow{3}{*}{\minitab{$G'$ and $G_b$ \\ for NYC}}
 & venues in NYC\tnote{2} & $23,484$ & $26,448$ \\
 & users checking in NYC & $4,257$ & $7,399$ \\
 & check-ins in NYC & $33,656$ & $113,423$ \\
\hline
\end{tabular}
\begin{tablenotes}
\item[1] The largest connected component.
\item[2] The New York City (Fig.~\ref{fig:nyc}).
\end{tablenotes}
\end{threeparttable}
\end{table}

Because we are only interested in users that have check-ins, i.e., each node in
the target graph connects to at least one node in the auxiliary graph, \VS{A} is
applicable on these two datasets.
Suppose that we want to measure characteristics of users located around New York
City (NYC), which is specified by a rectangle region on a map: latitude range
$40.4^\circ\sim 41.4^\circ$, longitude range $-74.3^\circ\sim -73.3^\circ$ (see
Fig.~\ref{fig:nyc}).
The goal is to estimate degree distribution of the users who checked in this
region.
As we explained in Introduction, directly sampling users is inefficient.
Here, we apply the \VS{A}-II along with a venue sampling method --- Random Region
Zoom-In (RRZI)~\cite{Wang2014} to illustrate how to sample users in NYC
more efficiently.

\begin{figure}[t]
\centering
\includegraphics[width=.65\linewidth]{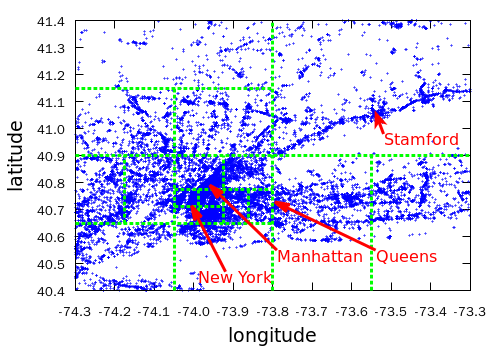}
\caption{Venue distribution in New York City and illustration of accessible
  subregions used by RRZI.
  Each subregion contains less than $K$ venues.}
\label{fig:nyc}
\end{figure}

\header{Venue sampling.}
RRZI utilizes a venue query API provided by LBSNs to sample venues on a map.
The API requires a user to specify a rectangle region by providing the south-west
and north-east corners latitude-longitude coordinates, and then the API returns a
set of venues in this region.
Usually, the API can only return at most $K$ venues in a queried region.
RRZI regularly zooms in the region until the subregion is fully \emph{accessible},
i.e., the API returns less than $K$ venues in the subregion.
The zooming-in process is equivalent to dividing the region into many
non-overlapping accessible subregions, as illustrated in Fig.~\ref{fig:nyc}, and
each subregion is associated with a fixed probability related to the zooming-in
strategy.
This feature enables RRZI to sample venues within an area of interest.

\header{Results.}
Combining \VS{A}-II with RRZI, denoted by RRZI-\VS{A}, we conduct experiments on
Brightkite and Gowalla to indirectly sample users in NYC.
We totally sample $5\%$ of venues in NYC and calculate the degree distribution of
users in NYC.
The results are depicted in Fig.~\ref{fig:lbsn}.

\begin{figure}[t]
  \centering
  \subfloat[estimates (Brightkite)\label{f:bri_est}]{%
    \includegraphics[width=.5\linewidth]{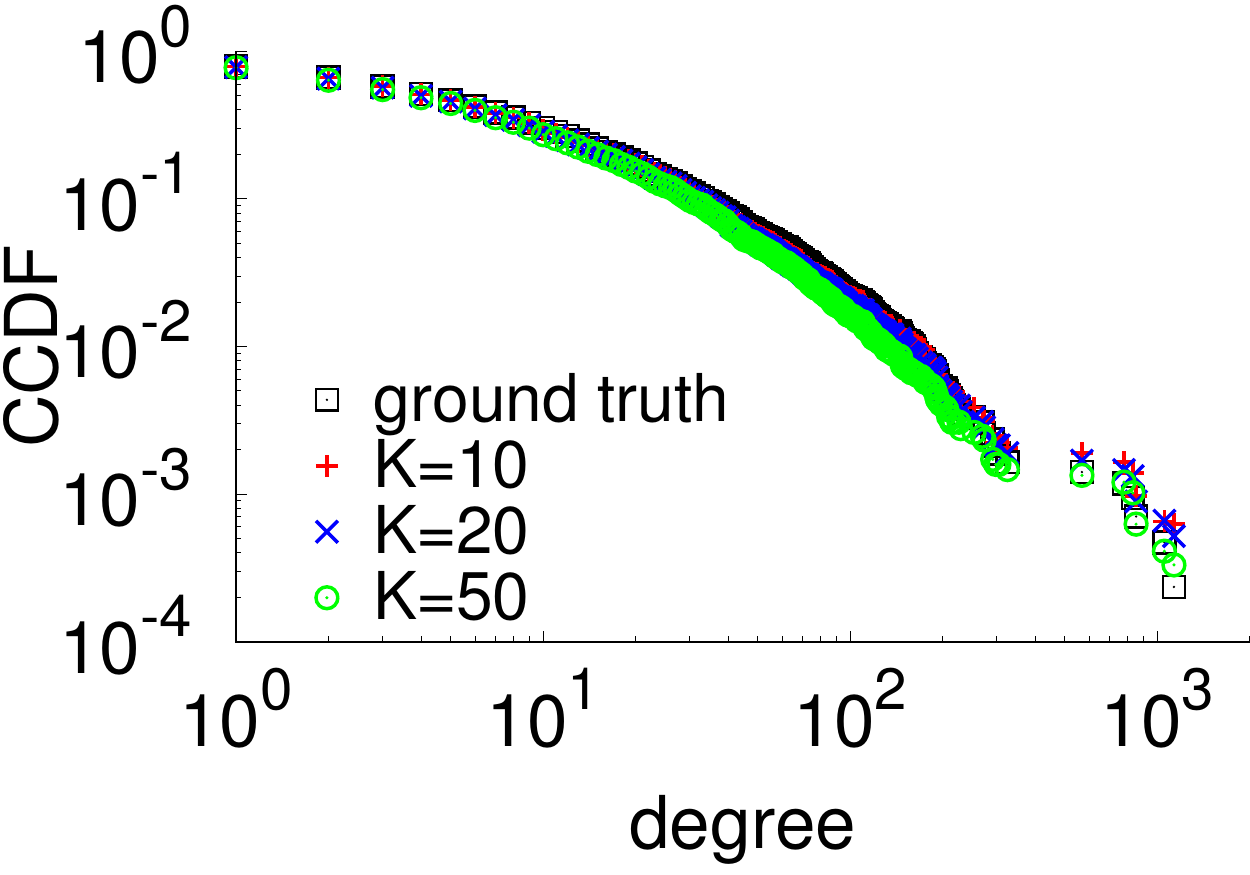}}
  \subfloat[estimates (Gowalla)\label{f:gow_est}]{%
    \includegraphics[width=.5\linewidth]{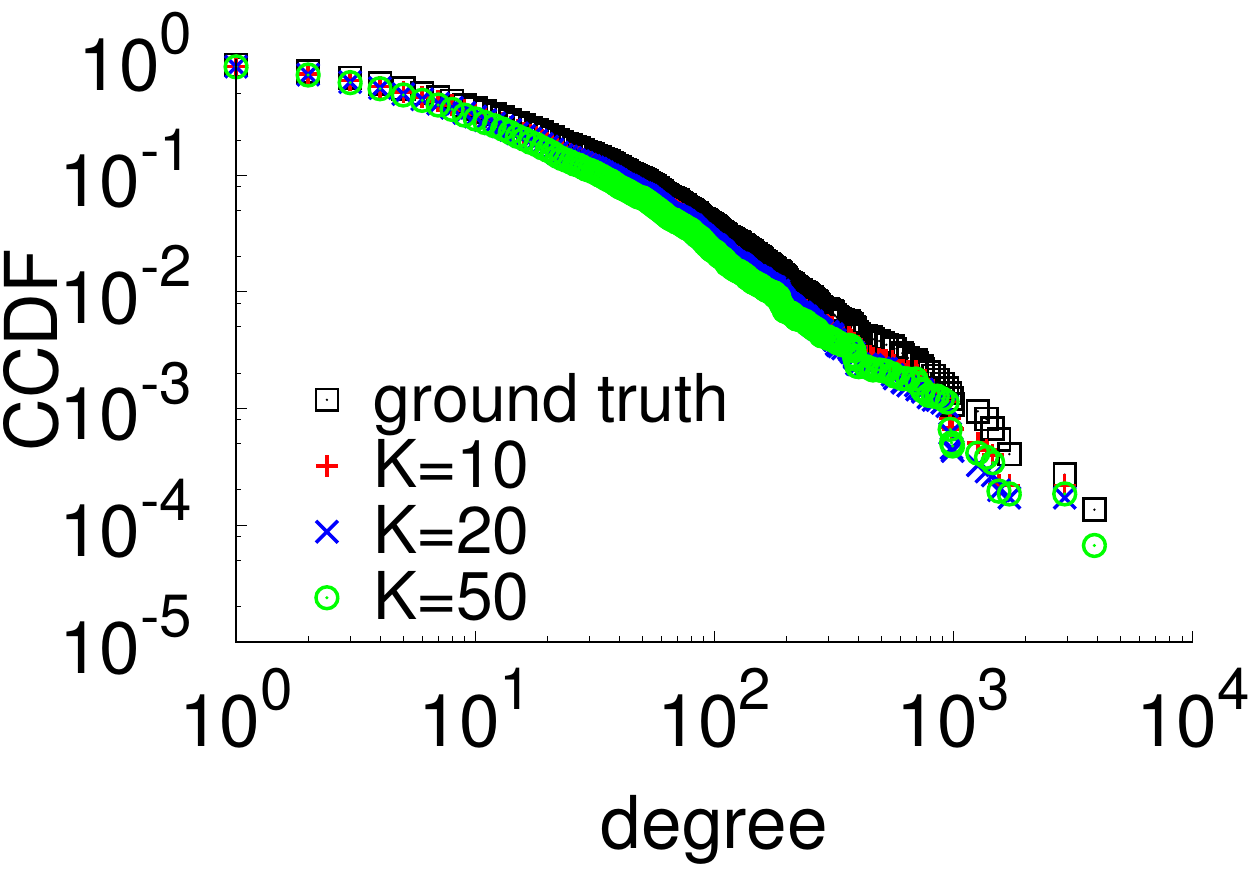}} \\
  \subfloat[PDF NRMSE (Brightkite)\label{f:bri_nrmse_pdf}]{%
    \includegraphics[width=.5\linewidth]{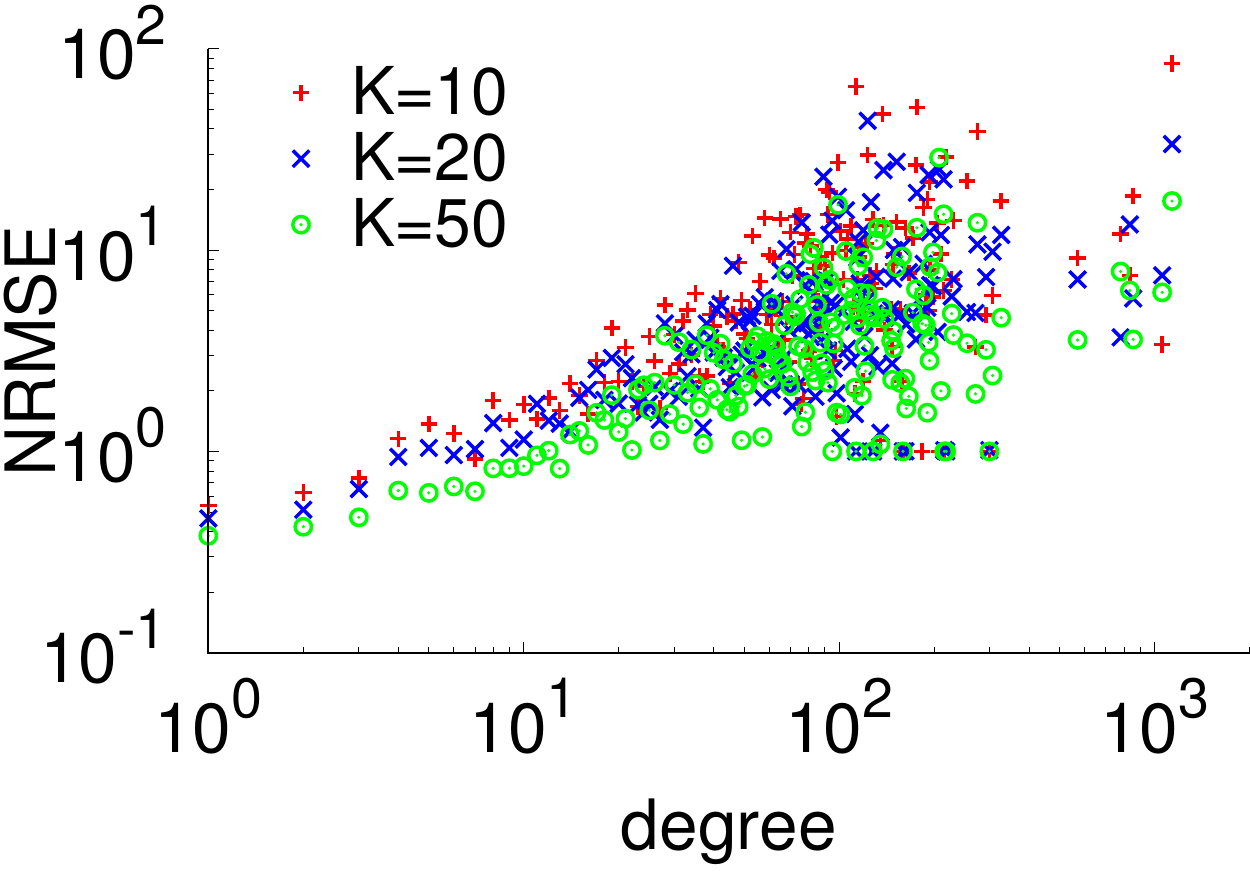}}
  \subfloat[PDF NRMSE (Gowalla)\label{f:gow_nrmse_pdf}]{%
    \includegraphics[width=.5\linewidth]{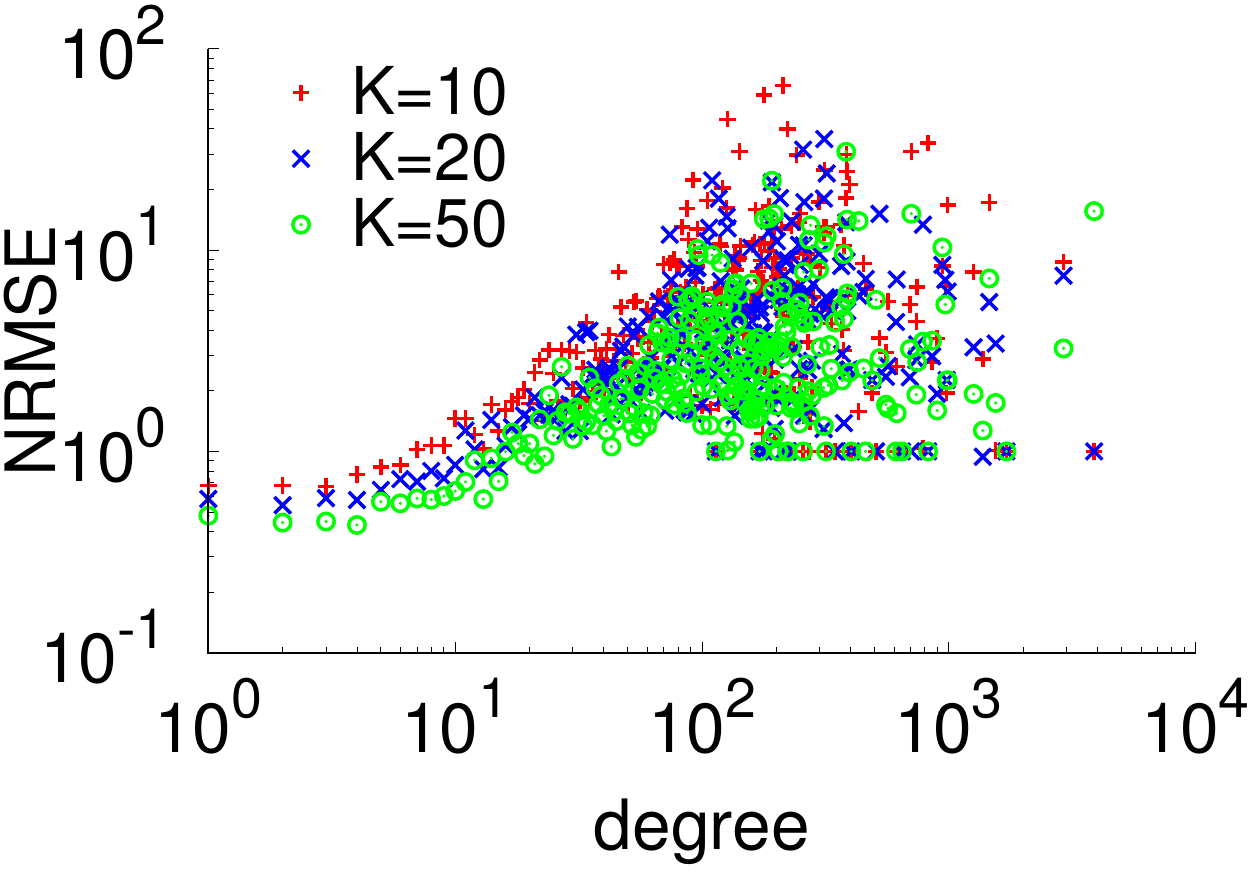}} \\
  \subfloat[CCDF NRMSE (Brightkite)\label{f:bri_nrmse_ccdf}]{%
    \includegraphics[width=.5\linewidth]{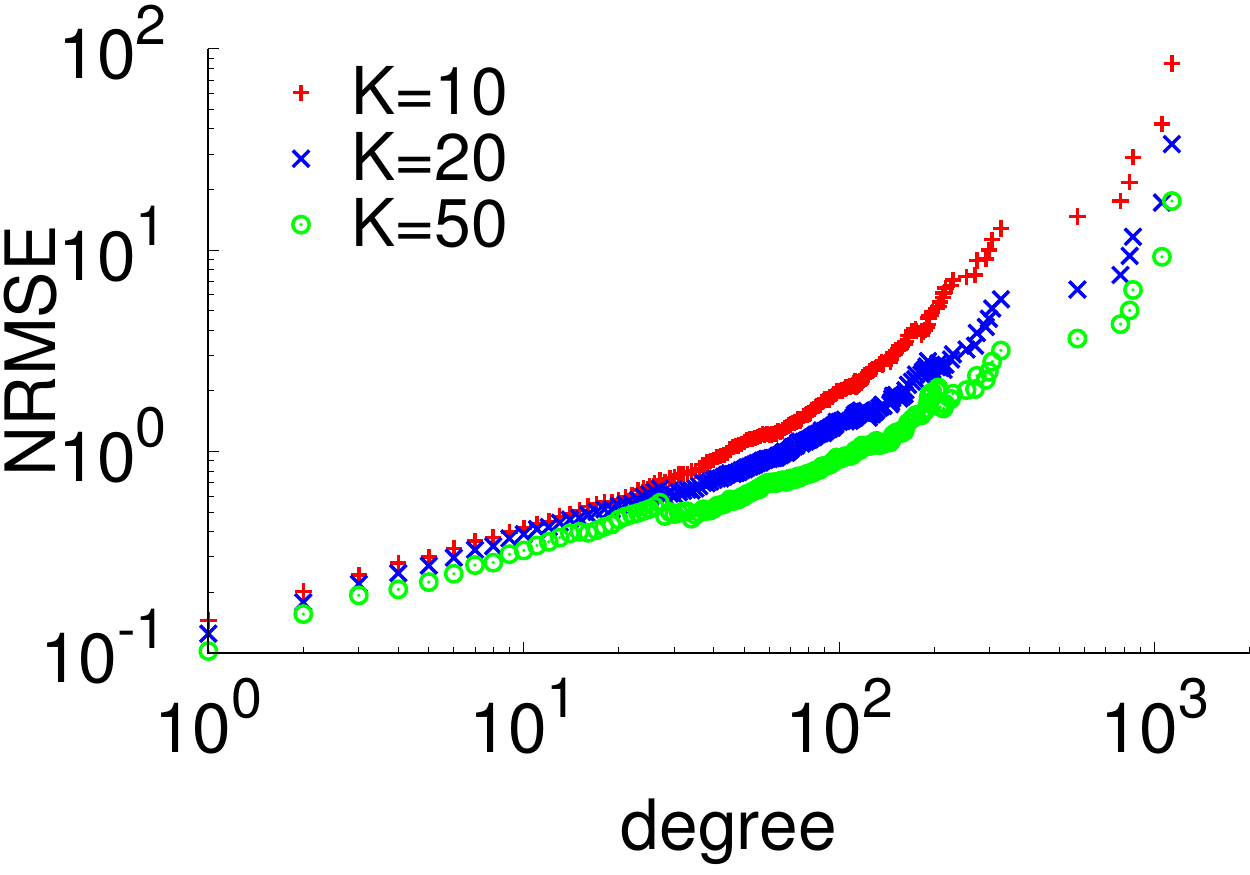}}
  \subfloat[CCDF NRMSE (Gowalla)\label{f:gow_nrmse_ccdf}]{%
    \includegraphics[width=.5\linewidth]{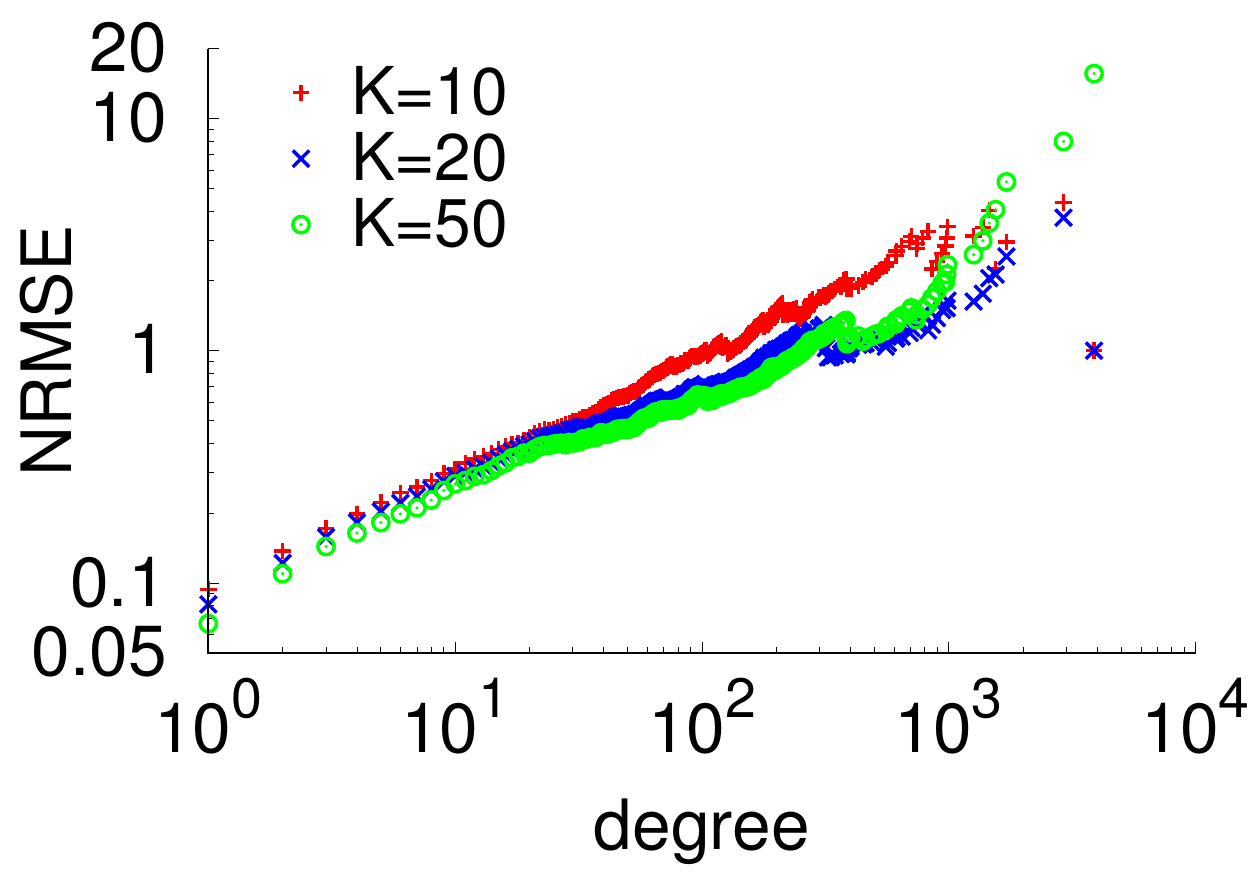}}
  \caption{Performance of RRZI-\VS{A} on Brightkite and Gowalla.\label{fig:lbsn}}
\end{figure}

Figures~\ref{f:bri_est} and~\ref{f:gow_est} depict the estimates of CCDF with
different query capacity $K$.
We observe that our RRZI-\VS{A} method can provide good estimates of user
characteristics in NYC on both datasets.
Specifically, the estimates for low degree users are better than high degree
users, and this is clear to see from the PDF/CCDF NRMSE plots.
This feature coincides with our previous analysis using synthetic data.
From the NRMSE plots, we can also find an approximate law that a larger query
capacity $K$, i.e., the maximum number of venues the API can return, reduces the
estimation error of RRZI-\VS{A}.
However, it is not true for estimating high degree users on Gowalla in
Fig.~\ref{f:gow_nrmse_ccdf}.
In fact, a better way to reduce estimation error is to combine \VS{A}-II with
other better venue sampling methods discussed in \cite{Li2012,Li2014,Wang2014}.
However, we have to omit this due to space limitation.

\subsection{Experiments on Amazon Product Co-purchasing Network}

In the third experiment, we compare the performance of \VS{A}-I and \RW{T}\VS{A}
sampling methods on the Amazon product co-purchasing network.

\header{Amazon product co-purchasing network.}
We build an Amazon product co-purchasing network from the Amazon dataset provided
by~\cite{McAuley2015}.
The network is created based on ``customers who bought this item also bought''
feature of the Amazon website.
That is, if a product $i$ is co-purchased with product $j$, the network contains
an undirected edge between $i$ and $j$.
In addition, each product belongs to at least one category on Amazon, and Amazon
provides a complete category list on its homepage to facilitate customers to
conveniently browse the products.
Thus, we can leverage this category list to perform indirect sampling of the
co-purchasing network.
The detailed statistics of the Amazon dataset are provided in
Table~\ref{tab:amazon}.

\begin{table}[htp]
  \centering
  \caption{Amazon product co-purchasing network statistics.}
  \label{tab:amazon}
  \begin{tabular}{|c|l|r|}
    \hline
    \multirow{3}{*}{$G$}
         & product co-purchasing network              & undirected   \\
         & \# of products                             & $4,015,942$  \\
         & \# of co-purchases                         & $78,792,050$ \\
    \hline
    $G'$ & \# of categories                           & $10,164$     \\
    \hline
    \multirow{3}{*}{$G_b$}
         & \# of product-category associations        & $15,829,046$ \\
         & avg. \# of categories a product belongs to & $4$          \\
         & avg. \# of products in a category          & $1,557$      \\
    \hline
  \end{tabular}
\end{table}

This dataset is suitable for us to study the performance of \VS{A}-I and
\RW{T}\VS{A}, where the availability of the complete category list allows us to
conduct uniform vertex sampling on the auxiliary graph.
Here we sample $1\%$ of the nodes from target graph, and compare the accuracy of
estimating PDF/CCDF degree distribution using different methods.
The results are averaged over $1,000$ runs and are depicted in
Fig.~\ref{fig:amazon}.

\begin{figure}[htp]
  \centering
  \subfloat[unbiasedness of CCDF estimates\label{f:amazon_unbias}]{%
    \includegraphics[width=.56\linewidth]{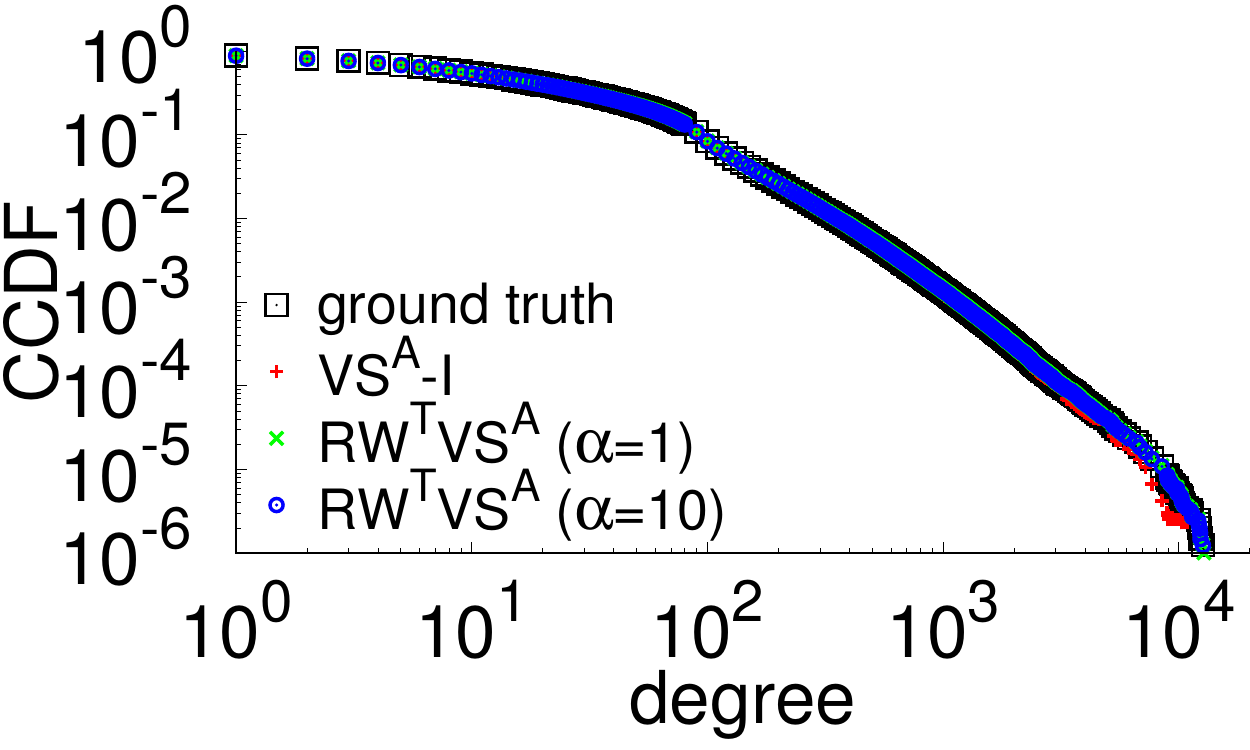}} \\
  \subfloat[PDF NRMSE\label{f:amazon_pnrmse}]{%
    \includegraphics[width=.5\linewidth]{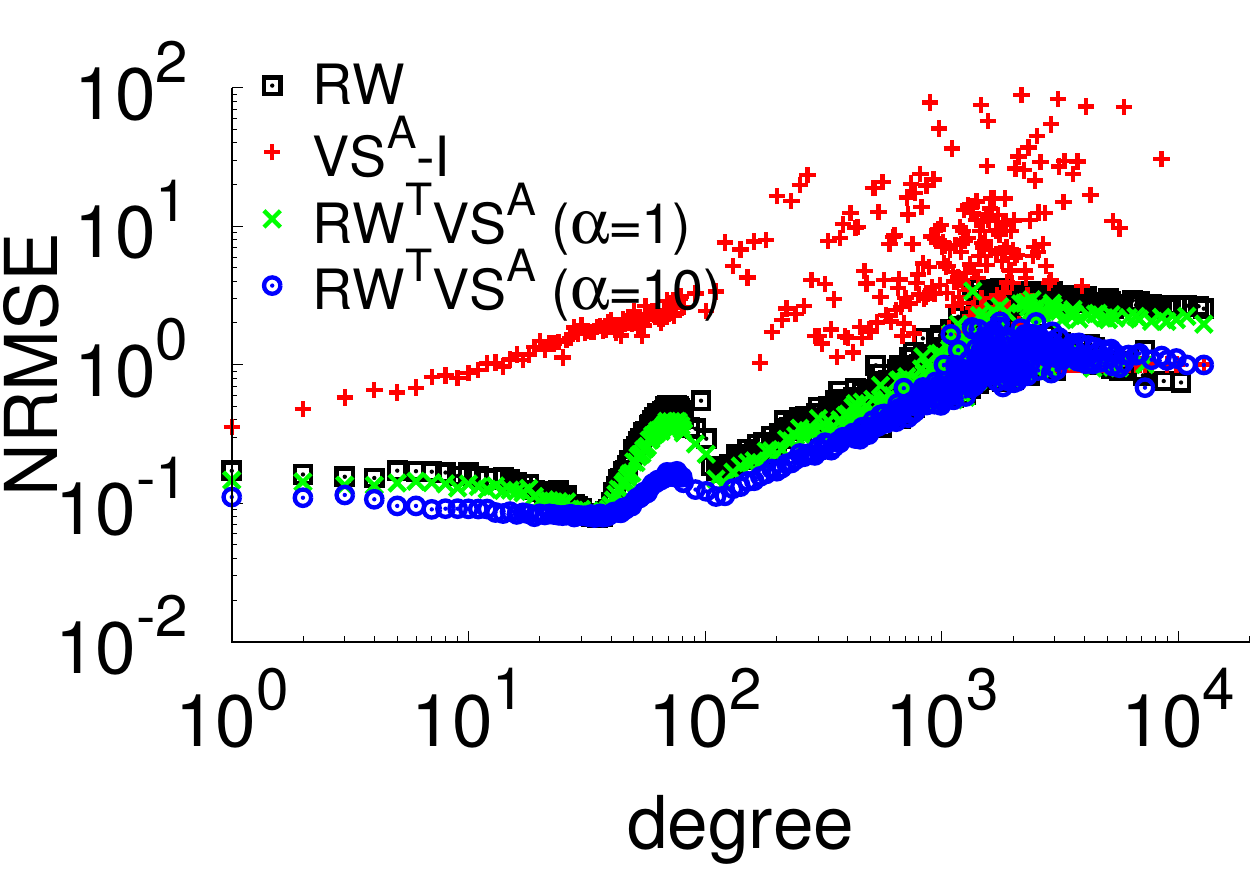}}
  \subfloat[CCDF NRMSE\label{f:amazon_cnrmse}]{%
    \includegraphics[width=.5\linewidth]{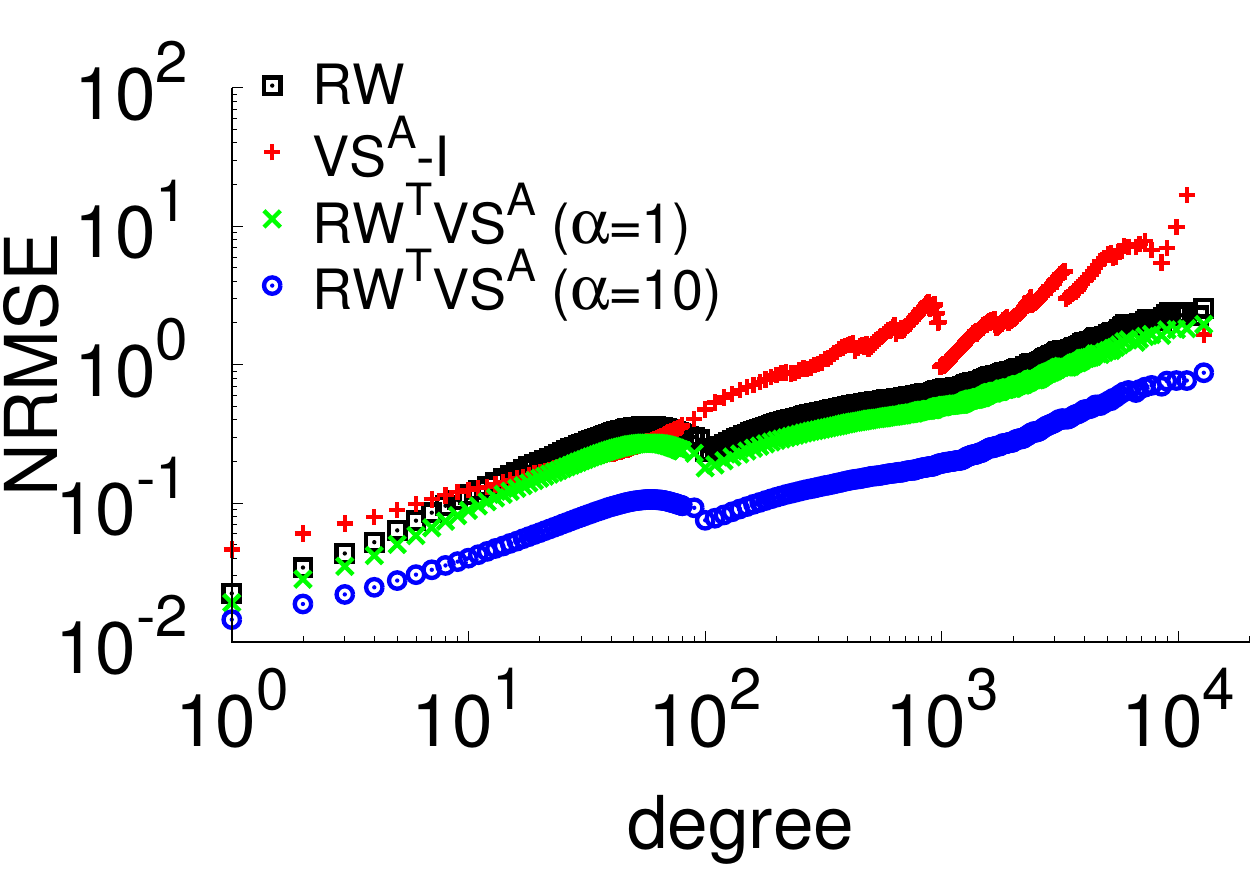}}
  \caption{Amazon product co-purchasing network characterizing.}
  \label{fig:amazon}
\end{figure}

\header{Results.}
From Fig.~\ref{f:amazon_unbias}, we observe that the two methods can indeed
provide unbiased estimates of the CCDF.
From Figs.~\ref{f:amazon_pnrmse} and~\ref{f:amazon_cnrmse}, we also observe that
different methods have different estimation accuracy.
In general, \VS{A}-I has relatively large estimation error, then comes the random
walk estimator, and \RW{T}\VS{A} has the lowest estimation error among these three
estimators.
\RW{T}\VS{A} leverages the category list to perform indirect jumps on the target
graph, and this approach can significantly improve the estimation accuracy.
If we slightly increase $\alpha$ to increase the jumping probability, we observe
that the estimation error further decreases.

\subsection{Experiments on Mtime Dataset}

In the fourth experiment, we apply \RW{T}\VS{A} and \RW{T}\RW{A} on Mtime to
measure the Mtime user characteristics.

\header{Mtime dataset.}
Mtime~\cite{Mtime} is a popular online movie database in China, which comprises
two types of accounts: Mtime users and movie actors.
Mtime users can follow each other to form a social network, and movie actors can
form connections with each other if they cooperated in the same movies.
A Mtime user can follow movie actors if she is a fan of the actor.
Suppose we want to measure Mtime user characteristics, then the relations between
Mtime users and movie actors naturally form a two-layered network structure, where
\begin{itemize}
\item the target graph consists of Mtime users and their following
  relations;
\item the auxiliary graph consists of movie actors and their cooperation
  relations;
\item and the bipartite graph consists of Mtime users, movie actors and the fan
  relations between them.
\end{itemize}

To build a groundtruth dataset, we have collected the complete Mtime network by
traversing Mtime user and movie actor ID spaces\footnote{The user ID space ranges
  from $100000$ to $10000000$, and actor ID space ranges from $892000$ to
  $2100000$.}.
For each Mtime user, we collect the set of users she follows and users who follow
her.
This builds up a directed follower network among Mtime users.
Each Mtime user maintains a list including a subset of movie actors she is
interested in.
This information is used to build up the fan-relations between Mtime users and
movie actors.
For each movie actor, we collect the movies she participated in, and if two actors
participated in a same movie, we connect them.
This builds up a cooperative network among actors.
The complete Mtime dataset is summarized in Table~\ref{tab:mt}.

\begin{table}[htp]
  \centering
  \caption{Summary of the Mtime dataset \label{tab:mt}}
  \begin{tabular}{|c|l@{}|r|}
    \hline
    \multirow{6}{*}{$G$}
    & user follower network type & directed \\
    & total users (isolated and non-isolated) & $1,878,127$ \\
    & non-isolated users in follower network & $1,035,164$ \\
    & following relations & $14,861,383$ \\
    & users in LCC & $987,055$ \\
    & following relations in LCC & $14,791,482$ \\
    \hline
    \multirow{6}{*}{$G'$}
    & actor cooperative network type & undirected \\
    & total actors (isolated and non-isolated) & $1,123,340$ \\
    & non-isolated actors in cooperative network & $1,122,166$ \\
    & cooperative relations & $10,344,364$ \\
    & actors in LCC & $1,114,065$ \\
    & cooperative relations in LCC & $10,328,904$ \\
    \hline
    \multirow{7}{*}{$G_b$}
    & fan relations & $225,558,343$ \\
    & users following actors & $1,419,339$ \\
    & isolated users following actors & $842,963$ \\
    & actors having fans & $441,413$ \\
    & isolated actors having fans & $1,174$ \\
    & isolated actors having only isolated fans & $225$ \\
    & isolated users following only isolated actors & $393$ \\
    \hline
\end{tabular}
\end{table}

\header{Analysis of the dataset.}
First we provide some analysis about the Mtime dataset.
In Table~\ref{tab:mt}, comparing the first block with second block, which are
related to target graph $G$ and auxiliary graph $G'$ respectively, we find that
about $19\%$ of the user IDs and $93\%$ of the actor IDs are valid.
This indicates that conducting UNI on the auxiliary graph is more efficient than
conducting UNI on the target graph.
Moreover, we find that more than $47\%$ of the Mtime users are not in LCC, but the
number for actors is less than $0.1\%$.
This indicates that the auxiliary graph is better connected than the target graph.
Although a large fraction of users are isolated nodes in the target graph, from
the last block, we find that almost all the isolated users are connected to
non-isolated actors (except a few hundreds of them).
So the majority of isolated users are indirectly connected to other users through
actors.
This is illustrated in Fig.~\ref{fig:mt_cpnt}.
The advantage of introducing the two-layered network structure is now clear for
Mtime dataset, i.e., we can study a larger user space than simply the LCC of
target graph.

\begin{figure}[tp]
  \centering
  \begin{tikzpicture}[
und/.style={draw,thick,circle,minimum size=6pt,inner sep=0},
vnd/.style={draw=blue,thick,rectangle,minimum size=6pt,inner sep=0},
att/.style={left,fill=white,minimum size=0,inner sep=0,align=center}]

\node[und] (u11) at (0,0) {};
\node[und,above right = 0.5 and 0.3 of u11] (u12) {};
\node[und,above right = 0.4 and 1.0 of u11] (u13) {};
\node[und,right = 1 of u11] (u14) {};
\node[und,above right = 0.1 and 0.5 of u11] (u15) {};
\node[und,yshift=0.5cm] (u16) at (u11) {};
\draw[thick] (u16)--(u11)--(u12)--(u13)--(u14)--(u11)--(u15)--(u12) (u15)--(u14);
\node[draw,dashed,fit={(u11) (u12) (u13) (u14) (u15)}] (fu1) {};
\node[att,below=0.1 of fu1] {LCC of $G$\\ ($52\%$ of all users)};

\node[und,right =3.5 of u11] (u21) {};
\node[und,xshift=0.1cm,yshift=0.5cm] (u22) at (u21) {};
\node[und,above right = 0.5 and 1.0 of u21] (u23) {};
\node[und,right = 1 of u21] (u24) {};
\node[und,xshift=0.5cm,yshift=0.65cm] (u25) at (u21) {};
\node[und,xshift=0.5cm,yshift=0.1cm] (u26) at (u21) {};
\node[und,xshift=0.3cm,yshift=0.3cm] (u27) at (u26) {};
\node[draw,dashed,fit={(u21) (u22) (u23) (u24) (u25)}] (fu2) {};
\draw[thick] (u22)--(u25) (u26)--(u27);
\node[att,below=0.1 of fu2] {Isolated parts of $G$\\ ($48\%$ of all users)};

\node[vnd,yshift=1.9cm] (v11) at (u11) {};
\node[vnd,above right = 0.5 and 0.3 of v11] (v12) {};
\node[vnd,above right = 0.4 and 1.0 of v11] (v13) {};
\node[vnd,right = 1 of v11] (v14) {};
\node[vnd,above right = 0.1 and 0.5 of v11] (v15) {};
\node[vnd,yshift=0.55cm] (v16) at (v11) {};
\draw[thick,blue] (v11)--(v12)--(v13)--(v14)--(v15)--(v13) (v11)--(v14) (v15)--(v12)--(v16);
\node[draw,dashed,fit={(v11) (v12) (v13) (v14) (v15)}] (fa1) {};
\node[att,above=0.1 of fa1] {LCC of $G'$\\ ($99\%$ of all actors)};

\node[vnd,right =3.5 of v11] (v21) {};
\node[vnd,xshift=0.1cm,yshift=0.5cm] (v22) at (v21) {};
\node[vnd,above right = 0.5 and 1 of v21] (v23) {};
\node[vnd,right = 1 of v21] (v24) {};
\node[vnd,xshift=0.5cm,yshift=0.65cm] (v25) at (v21) {};
\node[vnd,xshift=0.5cm,yshift=0.1cm] (v26) at (v21) {};
\node[vnd,xshift=0.4cm,yshift=0.3cm] (v27) at (v26) {};
\node[draw,dashed,fit={(v21) (v22) (v23) (v24) (v25)}] (fa2) {};
\draw[thick,blue] (v27)--(v25) (v21)--(v26);
\node[att,above=0.1 of fa2] {Isolated parts of $G'$\\ ($1\%$ of all actors)};

\draw[line width=3pt,red,dashed] (fu1)--(fa1)--(fu2) (fu1)--(fa2)--(fu2);
\end{tikzpicture}

  \caption{The Mtime network components.
    Dashed red lines denote fan relations between actors and users.}
  \label{fig:mt_cpnt}
\end{figure}
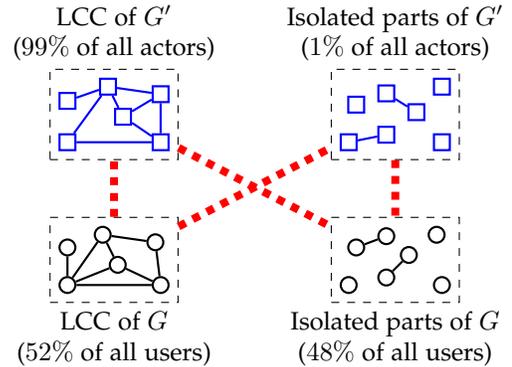

\header{Results.}
Using the Mtime dataset as a testbed, we demonstrate that \RW{T}\VS{A} and
\RW{T}\RW{A} methods can provide good estimates of user characteristics.
Although the user follower network is directed, we can build an undirected version
of the target graph on-the-fly while sampling because a user's in-coming and
out-going neighbors are known once the user is
queried~\cite{Ribeiro2010,Ribeiro2012b}.
Slightly different from previous experiments, here we will estimate both the in-
and out-degree distributions.

Figure~\ref{fig:mt_rwtvsa} depicts the results of \RW{T}\VS{A}.
In Figs.~\ref{f:mt_est_di} and~\ref{f:mt_est_do}, we show the in-degree and
out-degree CCDF estimates.
We can see that \RW{T}\VS{A} can provide unbiased estimates.
From Figs.~\ref{f:mt_nrmse_di_B} and~\ref{f:mt_nrmse_do_B}, we observe that when
sampling budget increases, the NRMSE decreases for both in-degree and out-degree
estimations.
From Figs.~\ref{f:mt_nrmse_di_a} and~\ref{f:mt_nrmse_do_a} we observe that when
more jumps are allowed by increasing $\alpha$ from $1$ to $100$, estimation
accuracy also increases.

\begin{figure*}
  \centering
  \subfloat[in-degree estimates ($\alpha=1$)\label{f:mt_est_di}]{%
    \includegraphics[width=.25\linewidth]{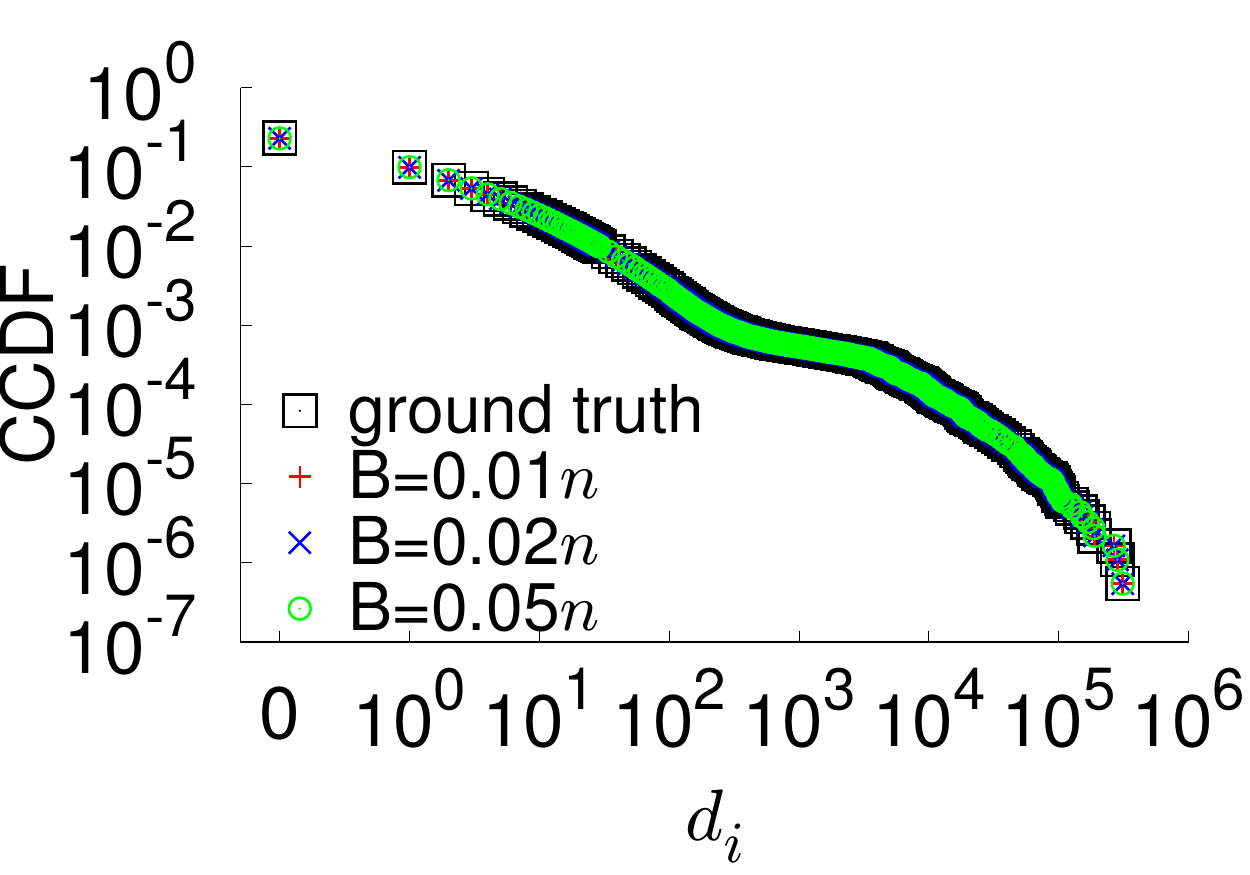}}
  \subfloat[CCDF NRMSE ($\alpha=1$)\label{f:mt_nrmse_di_B}]{%
    \includegraphics[width=.25\linewidth]{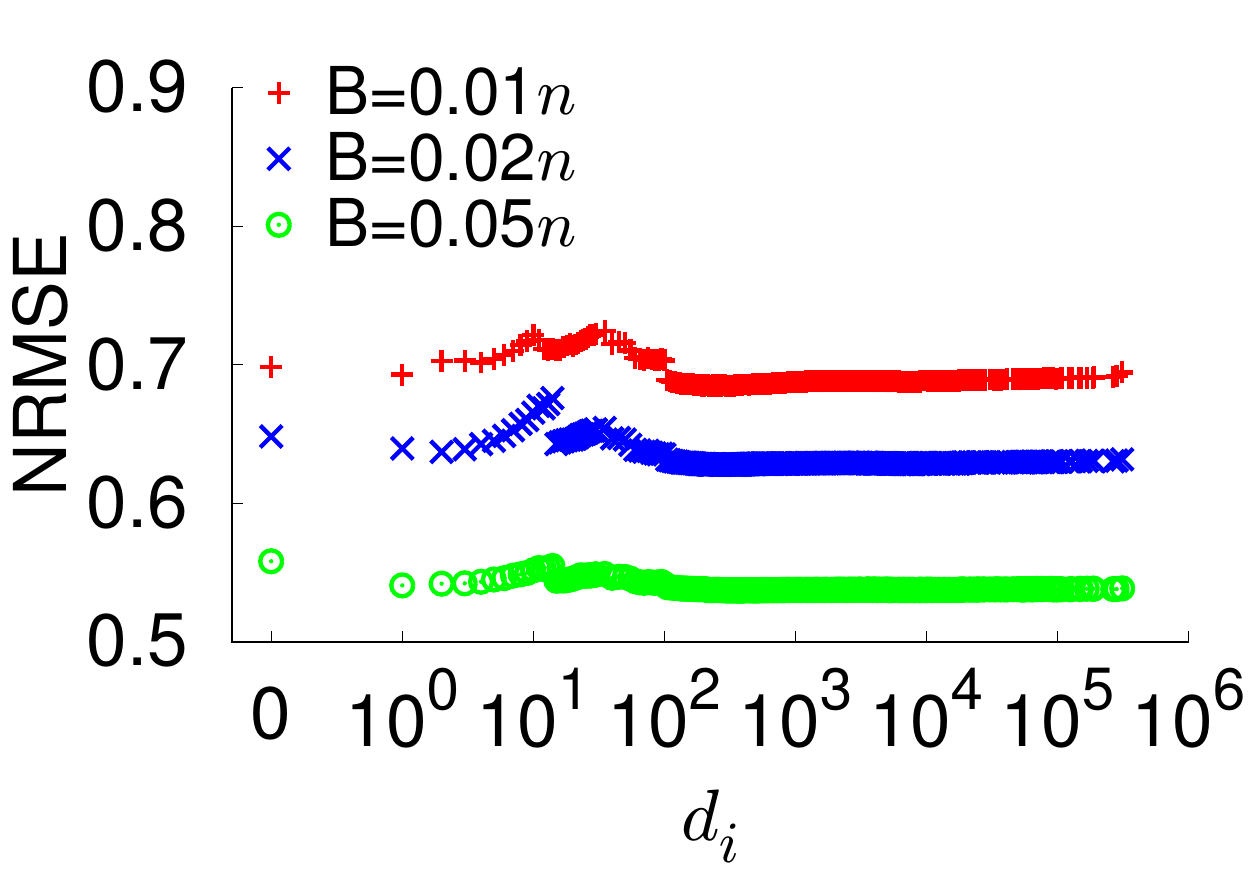}}
  \subfloat[CCDF NRMSE ($B=0.01n$)\label{f:mt_nrmse_di_a}]{%
    \includegraphics[width=.25\linewidth]{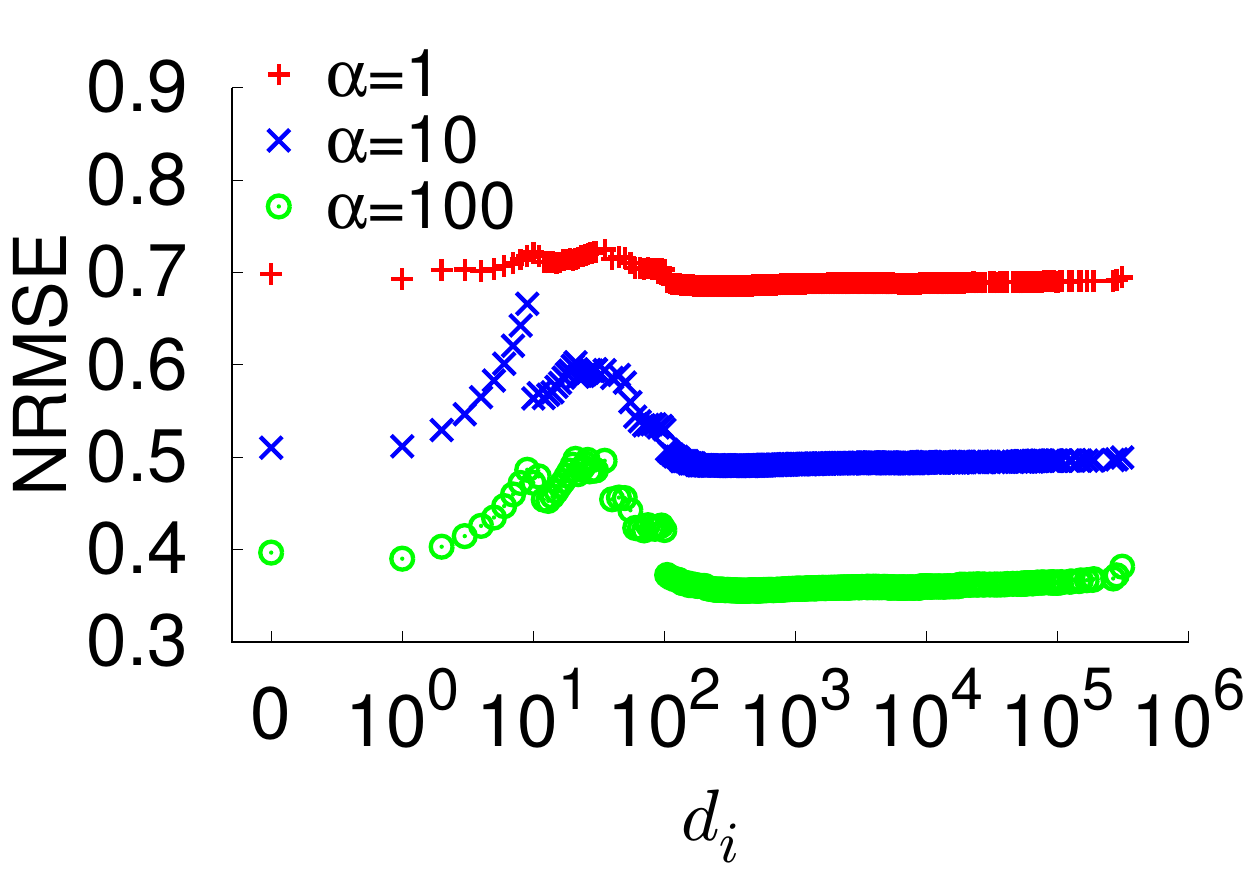}}
  \subfloat[too frequent jumping ($B\!=\!0.01n$)]{%
    \includegraphics[width=.25\linewidth]{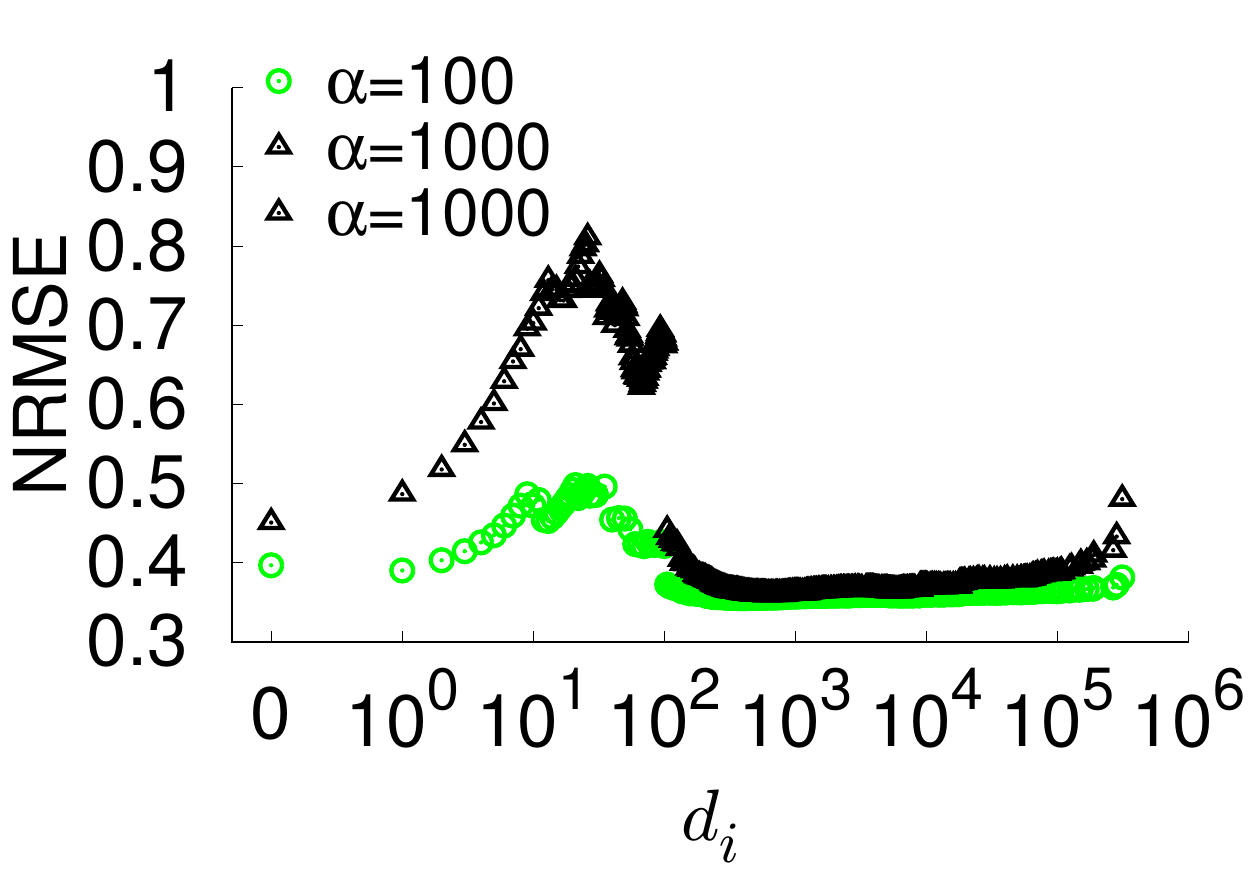}} \\
  \subfloat[out-degree estimates ($\alpha=1$)\label{f:mt_est_do}]{%
    \includegraphics[width=.25\linewidth]{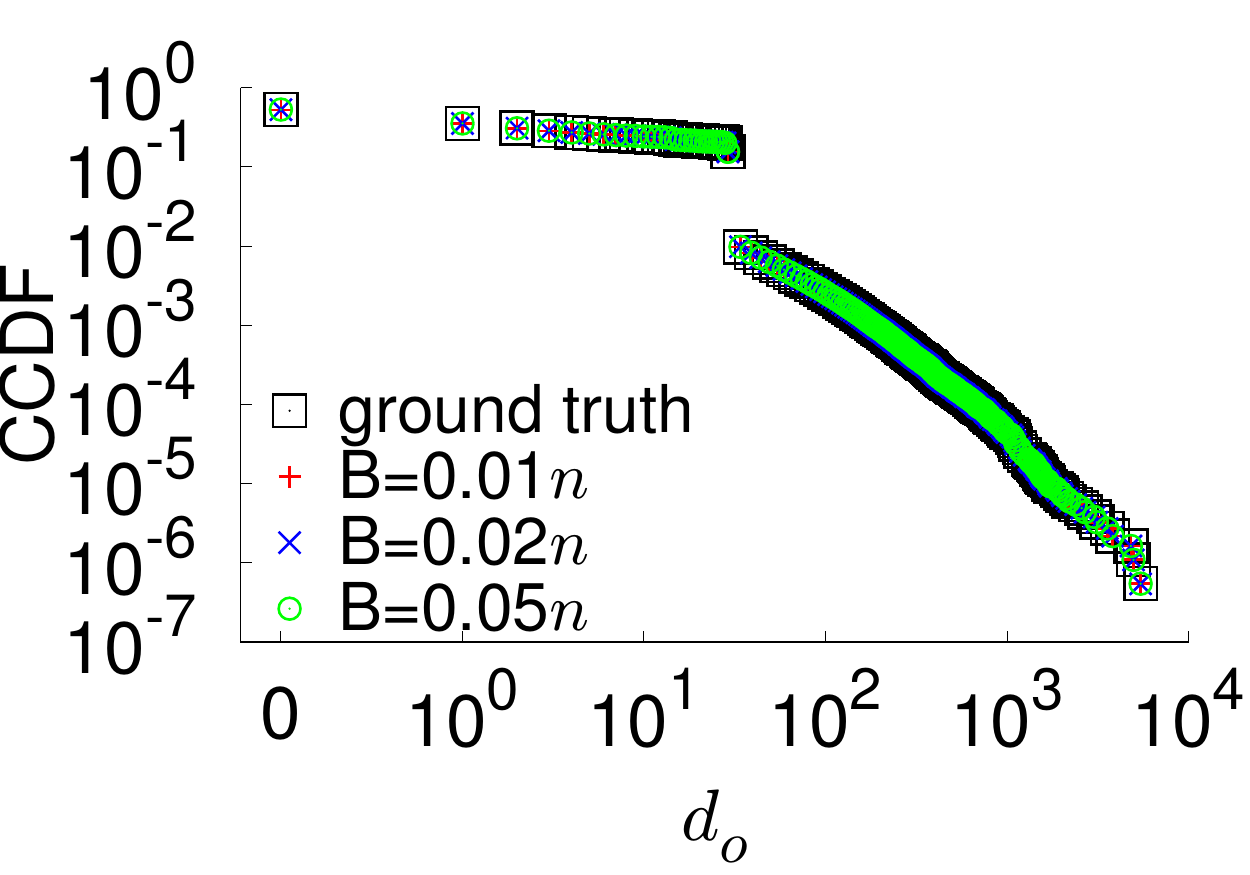}}
  \subfloat[CCDF NRMSE ($\alpha=1$)\label{f:mt_nrmse_do_B}]{%
    \includegraphics[width=.25\linewidth]{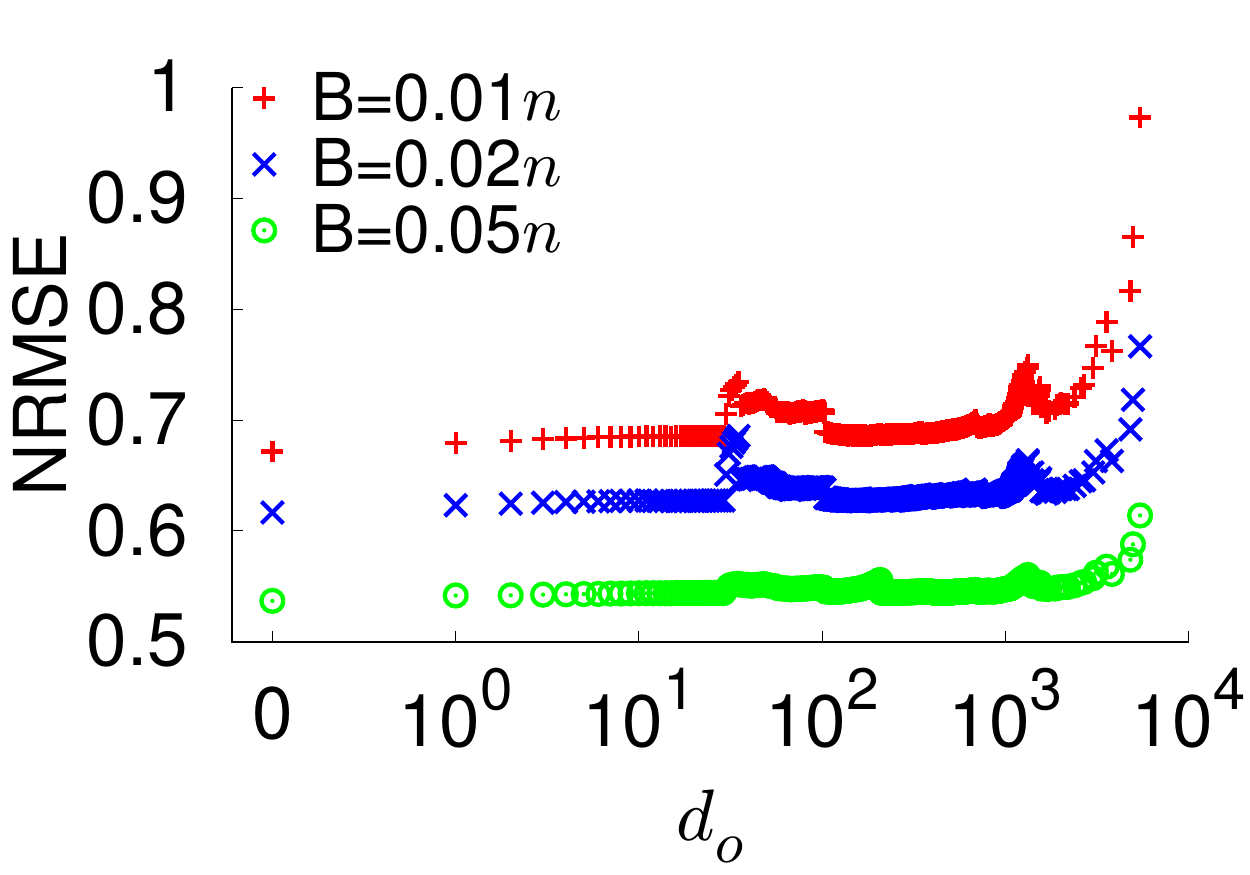}}
  \subfloat[CCDF NRMSE ($B=0.01n$)\label{f:mt_nrmse_do_a}]{%
    \includegraphics[width=.25\linewidth]{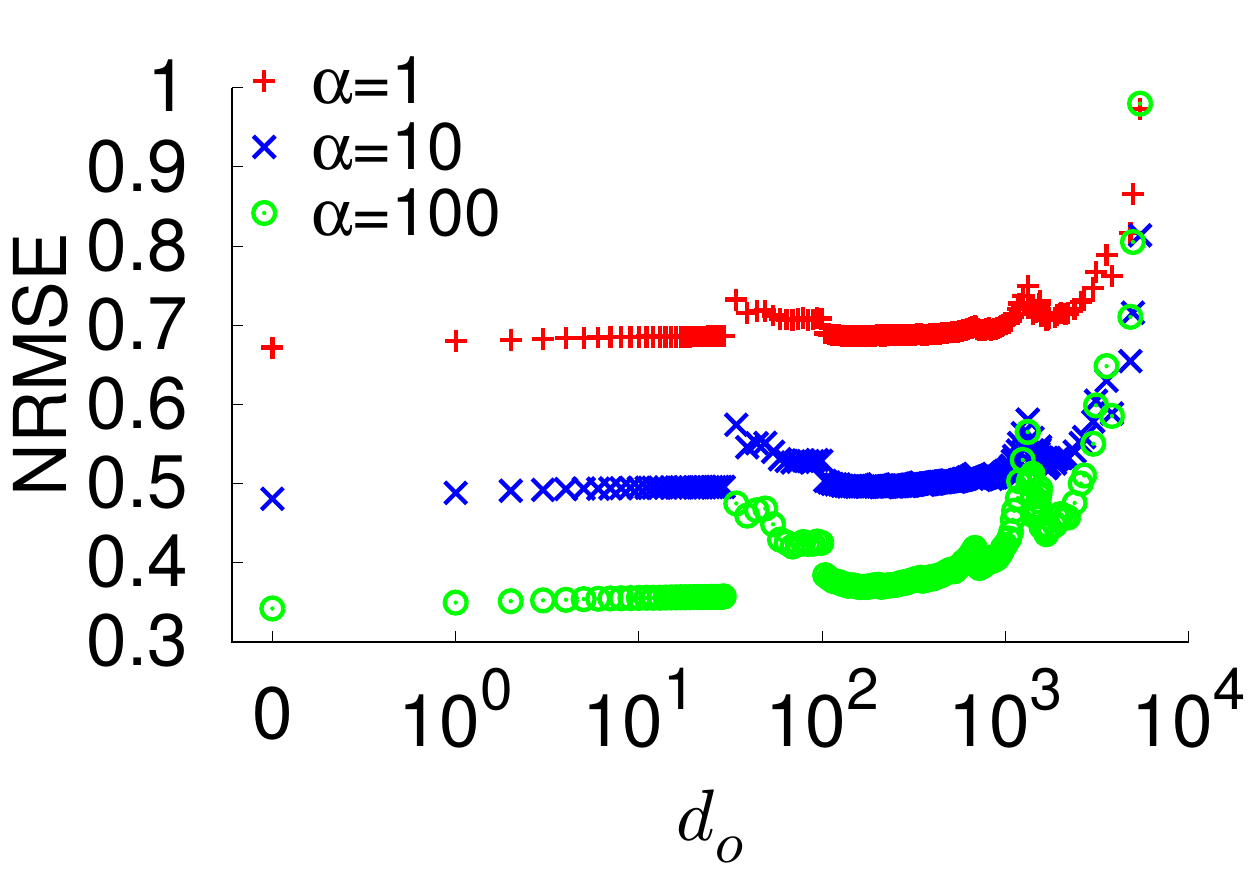}}
  \subfloat[too frequent jumping ($B\!=\!0.01n$)]{%
    \includegraphics[width=.25\linewidth]{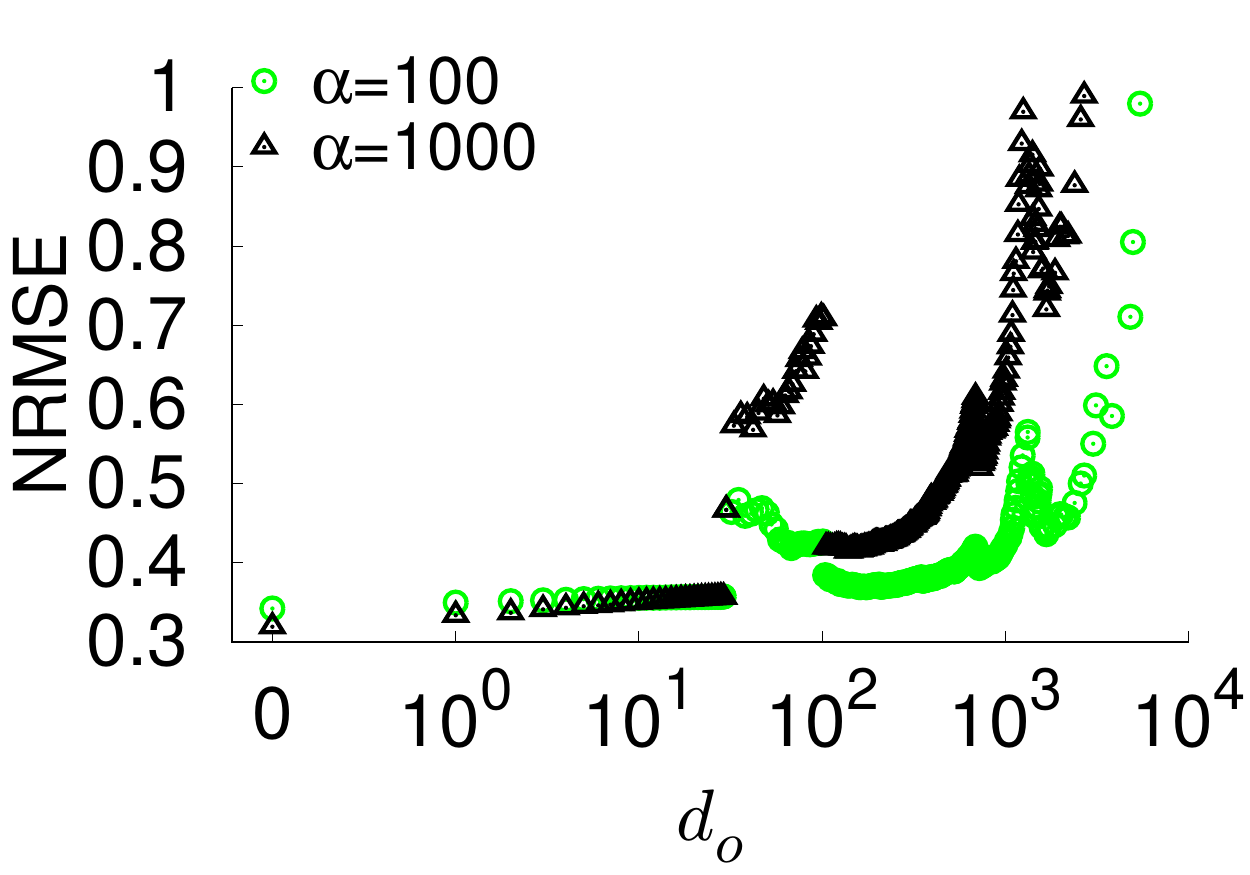}}
  \caption{\RW{T}\VS{A} degree distribution estimation and NRMSE analysis.
  \label{fig:mt_rwtvsa}}
\end{figure*}

Figure~\ref{fig:mt_rwtrwa} depicts the results of \RW{T}\RW{A}, and they are
similar to the results of \RW{T}\VS{A}.
First, from Figs.~\ref{f:mt_rwtrwa_est_di} and~\ref{f:mt_rwtrwa_est_do}, we
observe that \RW{T}\RW{A} can also provide unbiased estimates of the in- and
out-degree distributions.
Second, from Figs.~\ref{f:mt_rwtrwa_di_B} and~\ref{f:mt_rwtrwa_do_B}, we can find
that as sampling budget increases, the estimation error decreases accordingly for
both in- and out-degree estimations.
Last, from Figs.~\ref{f:mt_rwtrwa_di_ab} and~\ref{f:mt_rwtrwa_do_ab}, we find that
when jumping probability increases (by increasing $\alpha$ and $\beta$), the NRMSE
also decreases.

\begin{figure*}
  \centering
  \subfloat[in-degree est. ($\alpha=\beta=0.1$)\label{f:mt_rwtrwa_est_di}]{%
    \includegraphics[width=.25\linewidth]{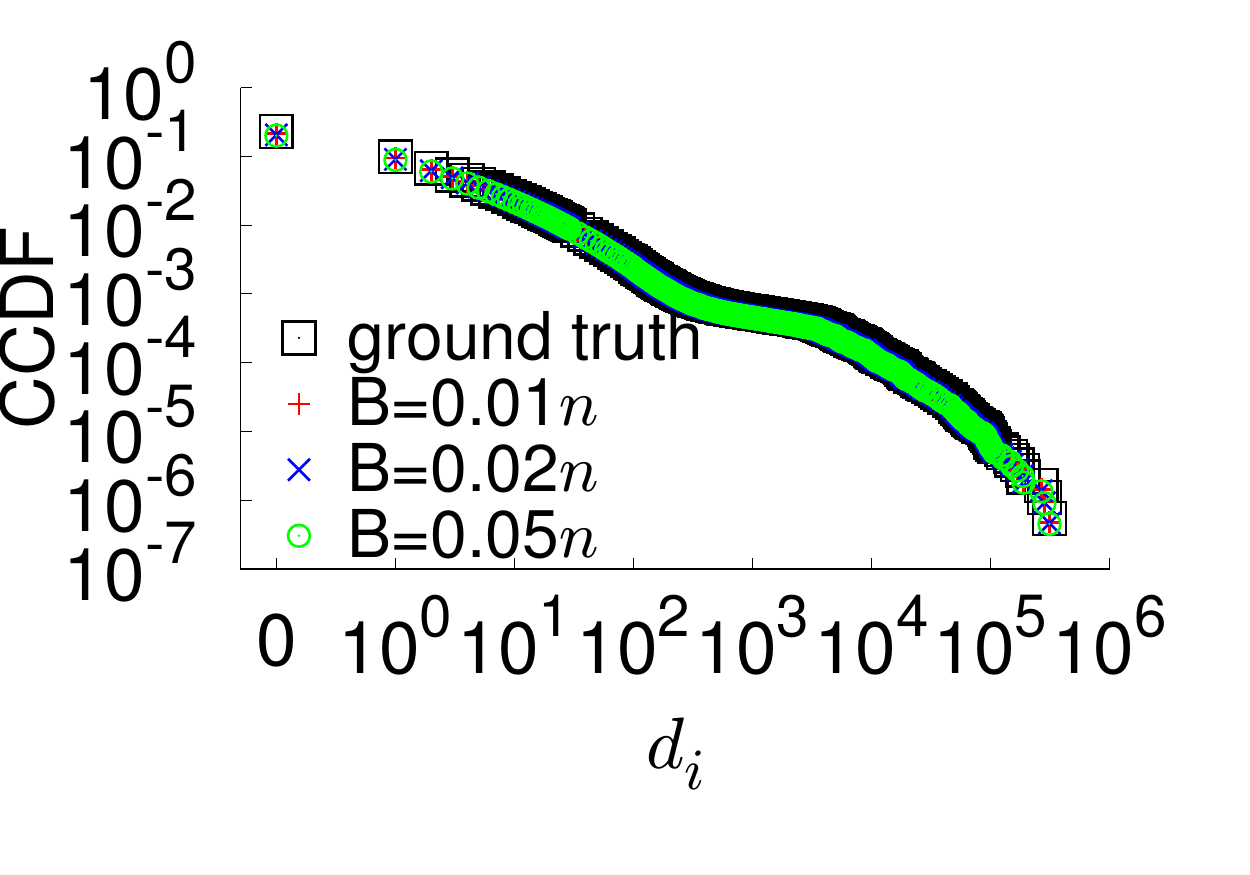}}
  \subfloat[CCDF NRMSE ($\alpha=\beta=0.1$)\label{f:mt_rwtrwa_di_B}]{%
    \includegraphics[width=.25\linewidth]{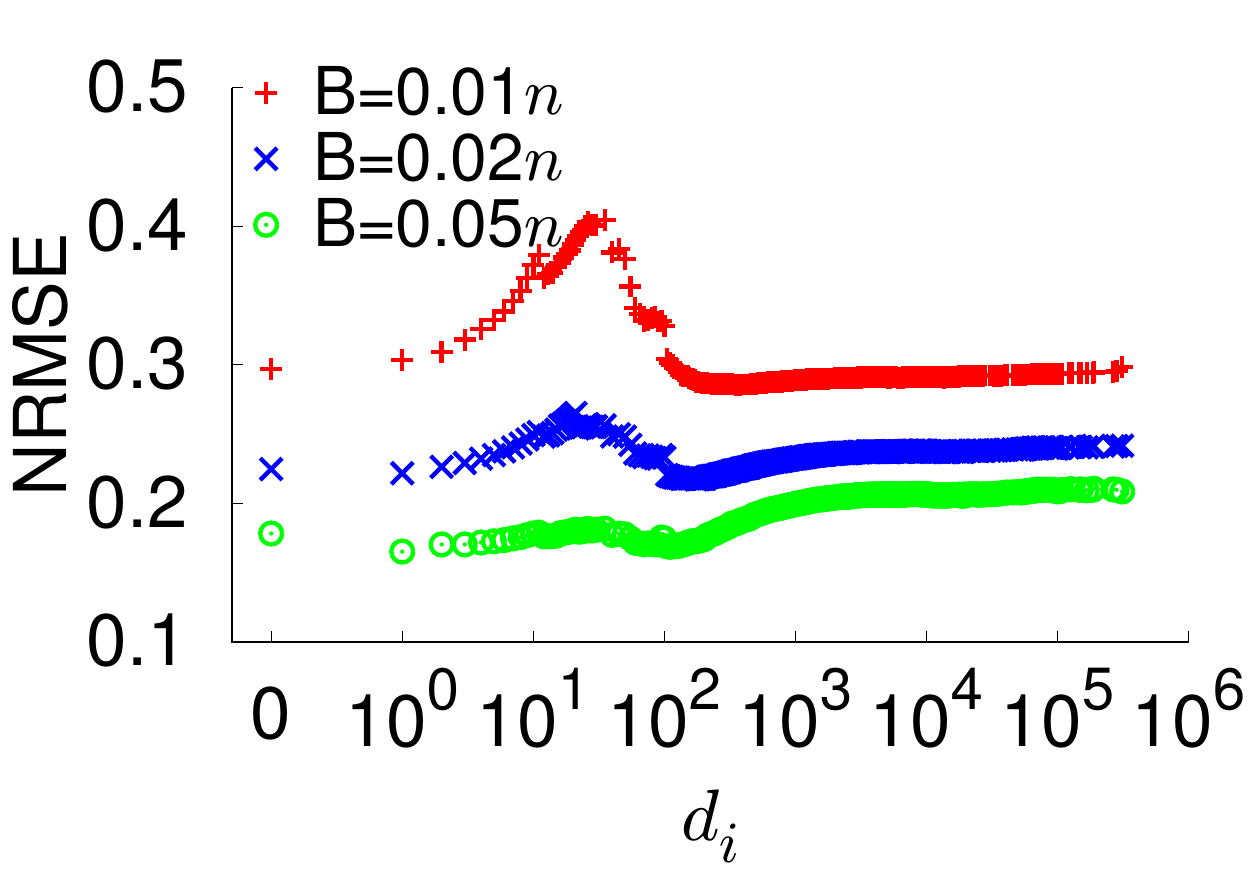}}
  \subfloat[CCDF NRMSE ($B=0.01n$)\label{f:mt_rwtrwa_di_ab}]{%
    \includegraphics[width=.25\linewidth]{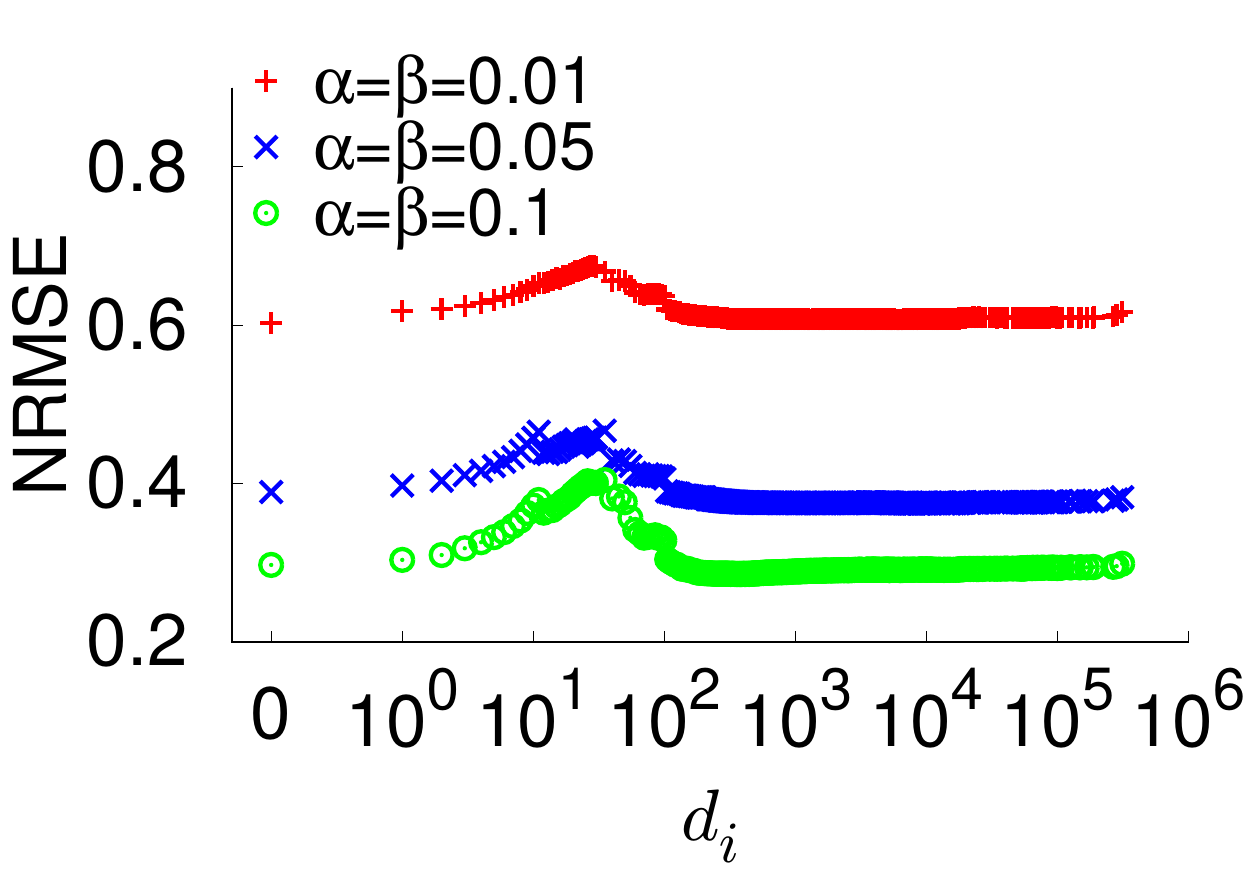}}
  \subfloat[too frequent jumping ($B\!=\!0.01n$)]{%
    \includegraphics[width=.25\linewidth]{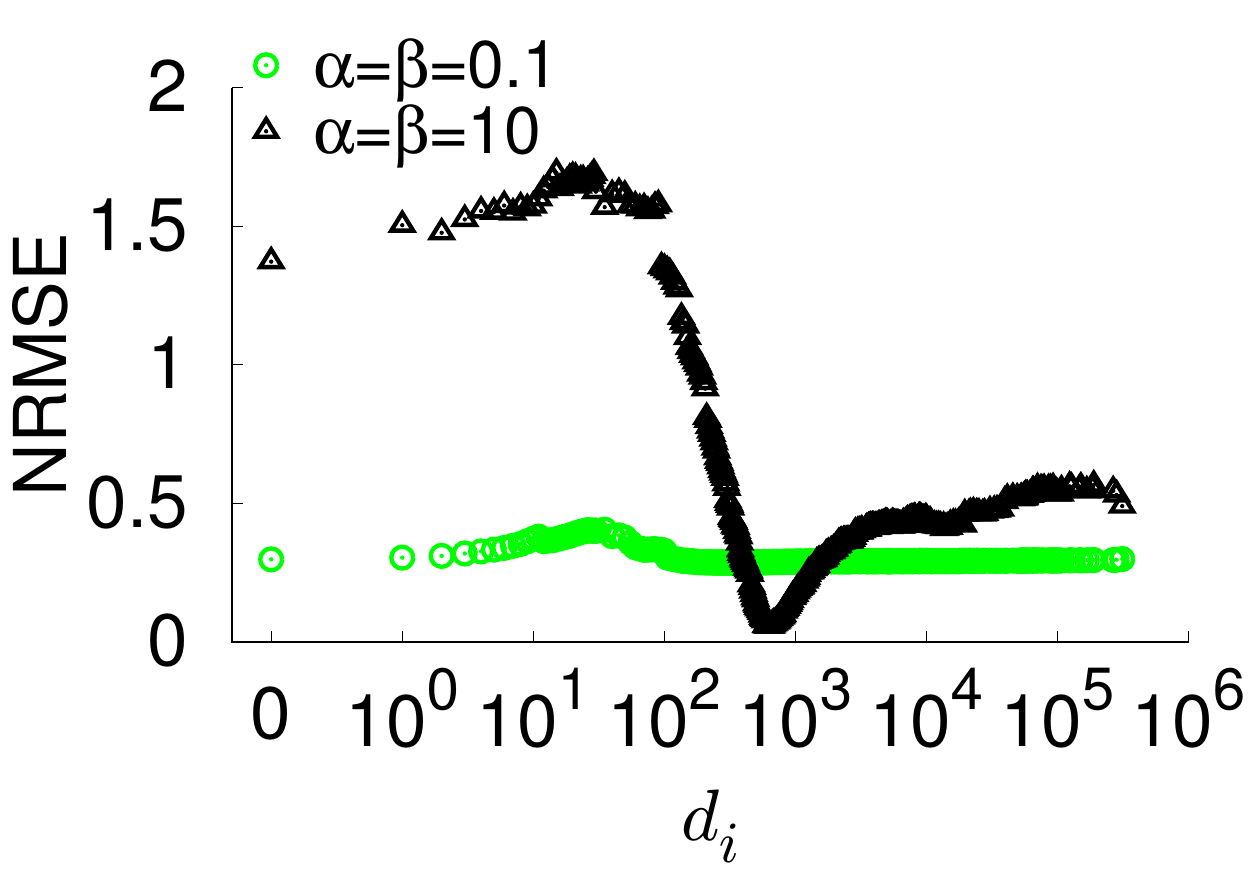}} \\
  \subfloat[out-degree est. ($\alpha=\beta=0.1$)\label{f:mt_rwtrwa_est_do}]{%
    \includegraphics[width=.25\linewidth]{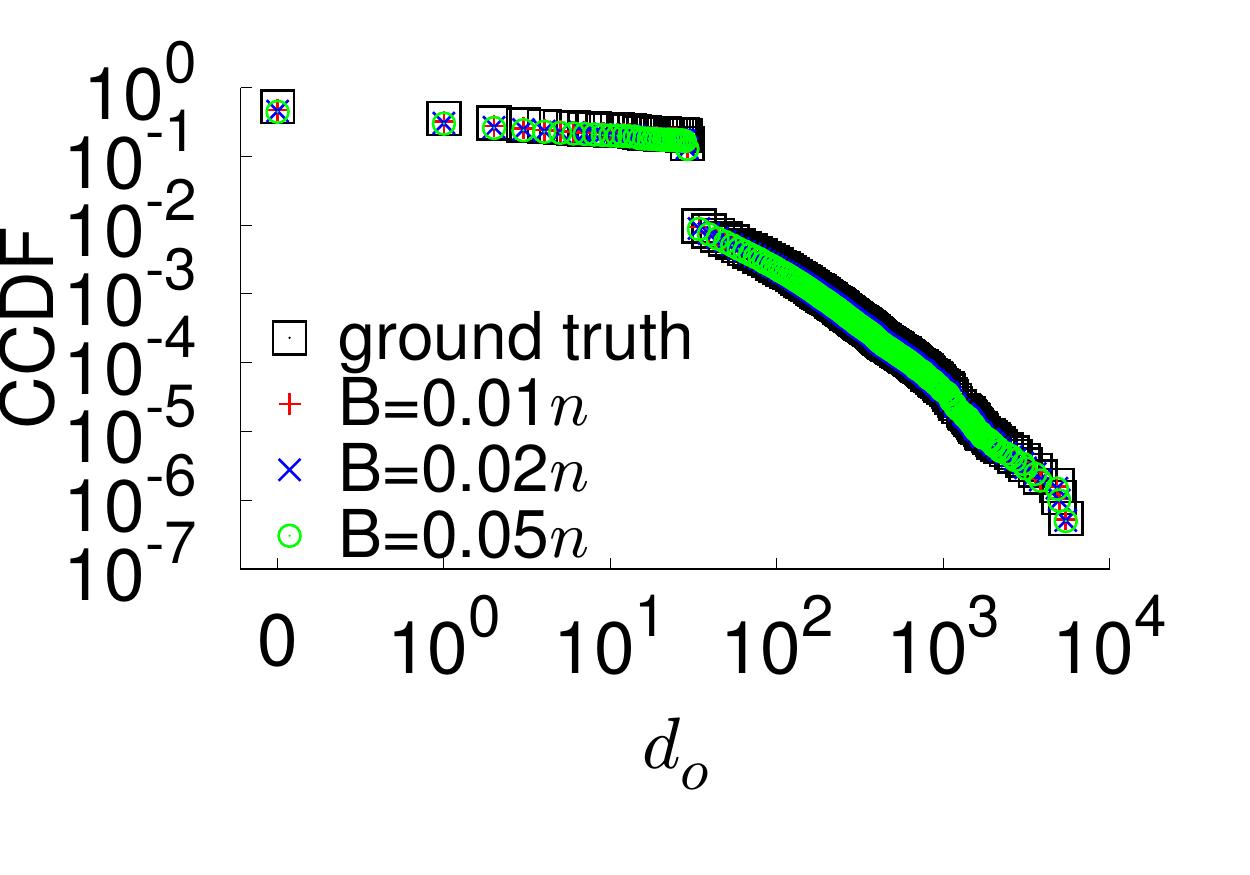}}
  \subfloat[CCDF NRMSE ($\alpha=\beta=0.1$)\label{f:mt_rwtrwa_do_B}]{%
    \includegraphics[width=.25\linewidth]{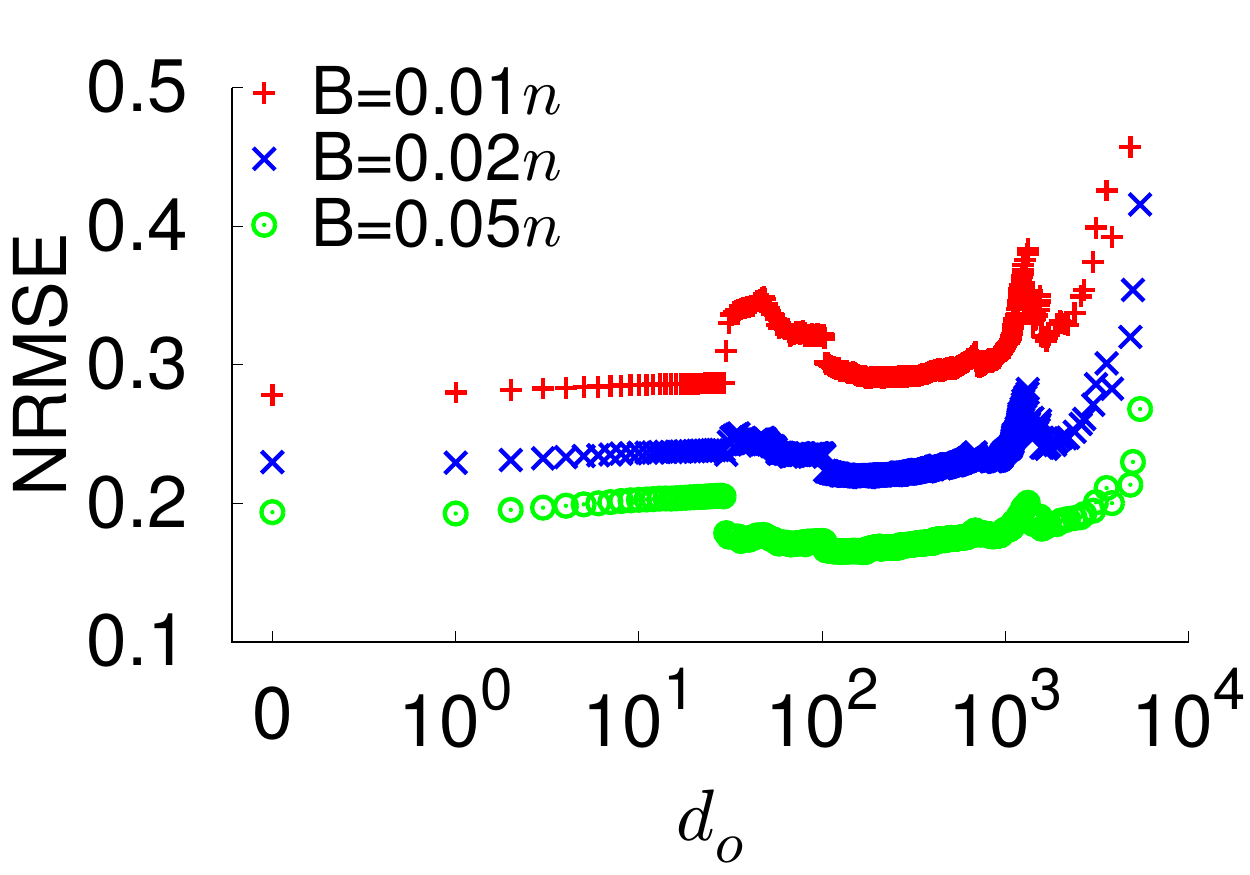}}
  \subfloat[CCDF NRMSE ($B=0.01n$)\label{f:mt_rwtrwa_do_ab}]{%
    \includegraphics[width=.25\linewidth]{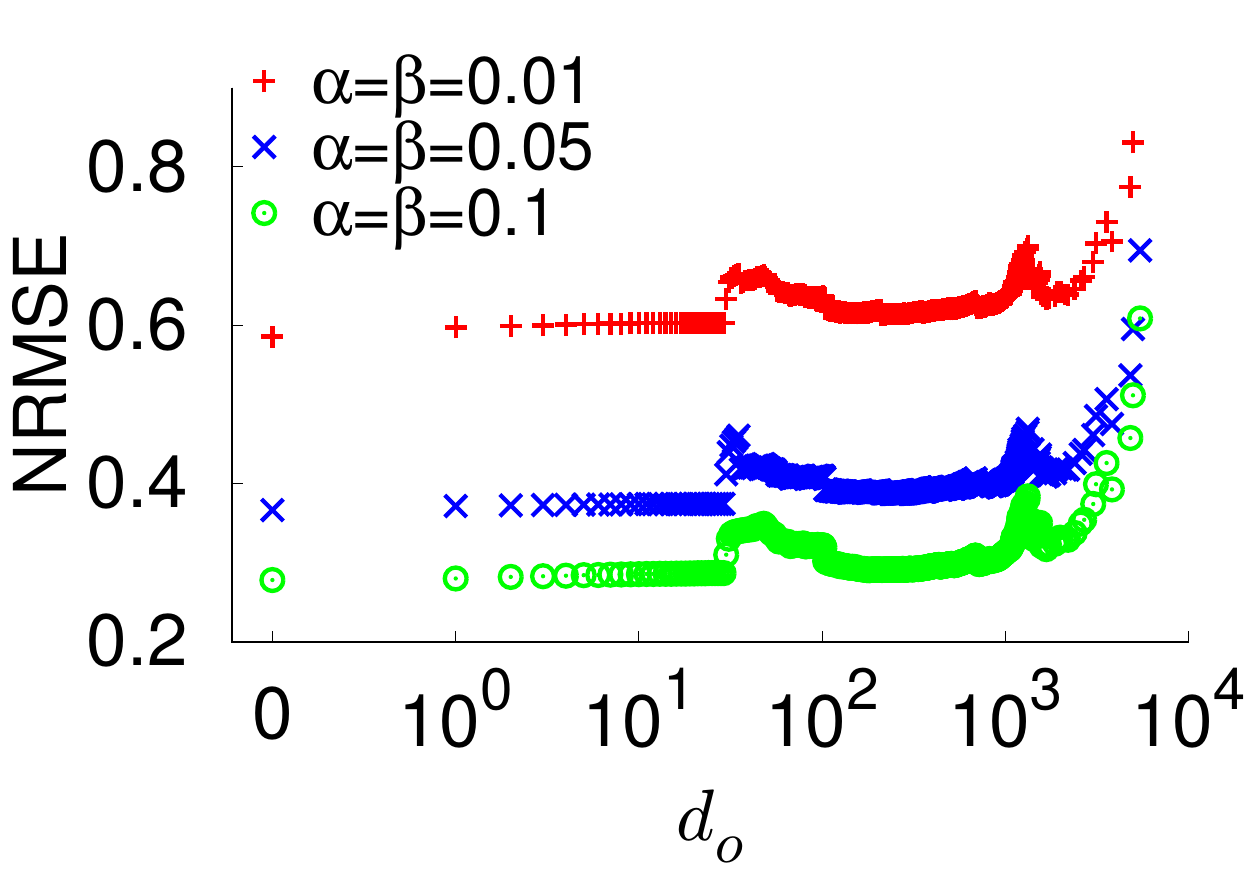}}
  \subfloat[too frequent jumping ($B\!=\!0.01n$)]{%
    \includegraphics[width=.25\linewidth]{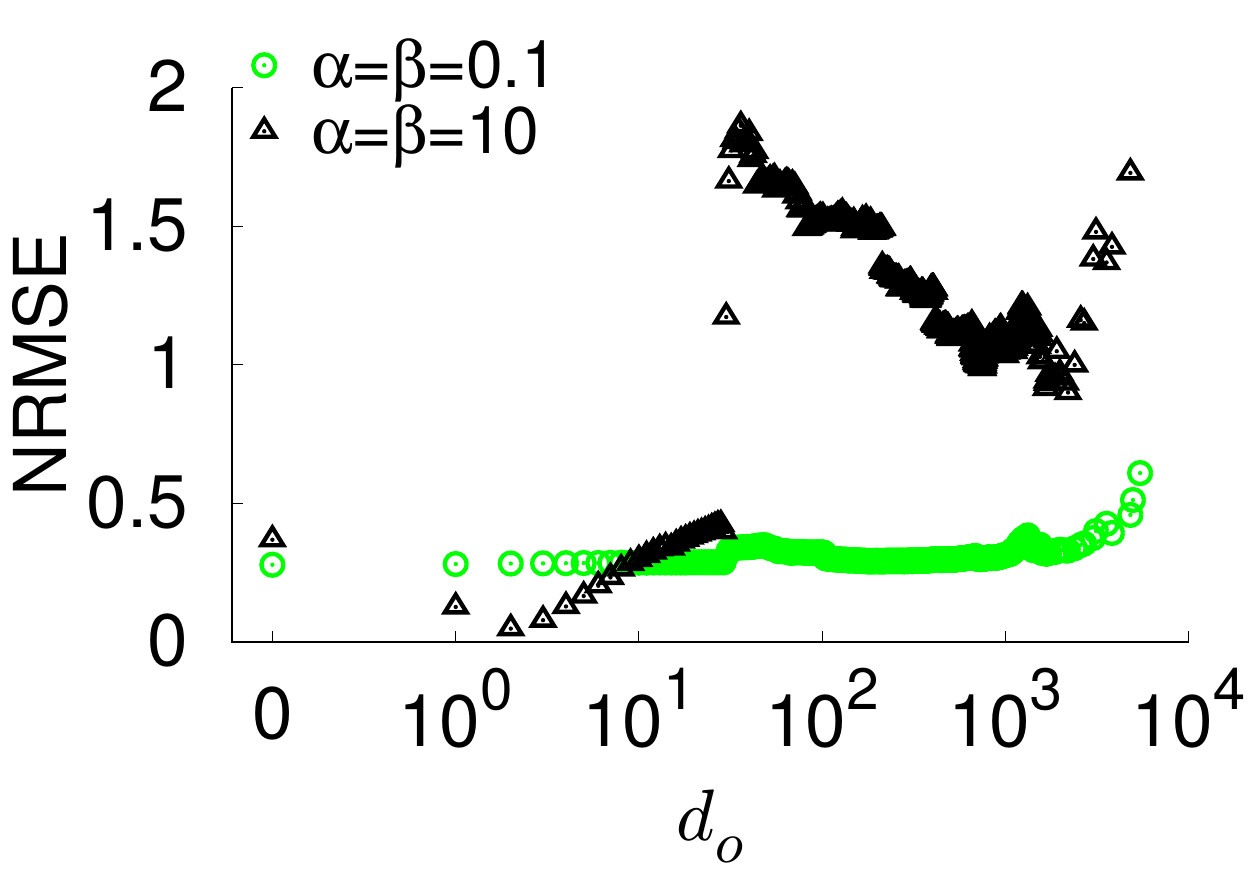}}
  \caption{\RW{T}\RW{A} degree distribution estimation and NRMSE analysis.
  \label{fig:mt_rwtrwa}}
\end{figure*}

However, it is worth noting that $\alpha$ and $\beta$ should not be too large for
both \RW{T}\VS{A} and \RW{T}\RW{A}.
Because we know that when $\alpha\rightarrow\infty$, \RW{T}\VS{A} becomes \VS{A},
which is biased on the Mtime dataset, and hence causes large NRMSE.
Similar behavior happens to \RW{T}\RW{A}, too.

\section{\textbf{Related Work}} \label{sec:related_work}

We briefly review some related literature in this section.

Graph sampling methods, especially random walk based graph sampling methods, have
been widely used to characterize large-scale complex networks.
These applications include, but are not limited to, estimating peer statistics in
peer-to-peer networks~\cite{Gkantsidis2006,Massoulie2006}, uniformly sampling
users from OSNs~\cite{Gjoka2010,Gjoka2011,Lee2012b,Xu2014}, characterizing
structure properties of large-scale
networks~\cite{Katzir2011,Hardiman2013,Seshadhri2013,Wang2014a}, and measuring
statistics of point-of-interests on maps~\cite{Wang2014}.
The above literature is mostly concerned with sampling methods that seek to
\emph{directly} sample nodes (or samples) in target graphs (or some sample
spaces).
However, direct sampling is not always efficient as we argued in this work.

When the target graph (or sample space) can not be directly sampled or direct
sampling is inefficient, several methods based on graph manipulation have been
proposed to improve sampling efficiency.
For example, Gjoka et al.~\cite{Gjoka2011a} study an approach to improve sampling
efficiency through building a \emph{multigraph} using different kinds of relations
(i.e., different types of edges) that exist on an OSN.
A multigraph is better connected than any individual graph formed by only one kind
of relations.
Therefore, the random walk can converge fast on this multigraph.
Zhou et al.~\cite{Zhou2013} exploit several criteria to rewire the target graph
on-the-fly to increase the graph conductance~\cite{Sinclair1989} and reduce mixing
time of a random walk.
Our method differs from theirs in that we do not manipulate target graphs.
We study a new approach that utilizes a widely existed two-layered network
structure to assist sampling on target graph indirectly.

Birnbaum and Sirken~\cite{Birnbaum1965} designed a survey method for estimating
the number of diagnosed cases of a rare disease in a population.
Directly sampling patients of a rare disease from the huge human population is
obviously inefficient, so they studied how to sample hospitals so as to sample
patients indirectly.
Their method motivates us to design the \VS{A} method.
However, as we pointed out, \VS{A} method cannot sample nodes that are not
connected to auxiliary graph, and we overcome this problem by designing
\RW{T}\VS{A} and \RW{T}\RW{A} methods.
Our work also complements existing sampling methods related to random walk with
jumps~\cite{Avrachenkov2010,Ribeiro2012b,Xu2014} by removing the necessity of
uniform node sampling on target graphs.

\section{Conclusion} \label{sec:conclusion}

When graphs become large in scale, sampling methods become necessary tools in the
study of characterizing their properties.
Among these sampling methods, random walk-based crawling methods are effective and
are gaining popularity.
However, if the graph under study is not well connected, random walk-based graph
sampling methods suffer from the slow mixing problem.
In this work, we observe that a graph usually does not exist in isolation.
In many applications, the target graph is accompanied with an auxiliary graph and
a bipartite graph, and they together form a better connected two-layered network
structure.
This new viewpoint brings extra benefits to the graph sampling framework.
We design three sampling methods to measure the target graph from this new
viewpoint, and these methods are demonstrated to be effective on both synthetic
and real datasets.
Therefore, our method complements existing methods in the literature of graph
sampling.

\bibliographystyle{IEEEtran}

\begin{thebibliography}{10}
\providecommand{\url}[1]{#1}
\csname url@samestyle\endcsname
\providecommand{\newblock}{\relax}
\providecommand{\bibinfo}[2]{#2}
\providecommand{\BIBentrySTDinterwordspacing}{\spaceskip=0pt\relax}
\providecommand{\BIBentryALTinterwordstretchfactor}{4}
\providecommand{\BIBentryALTinterwordspacing}{\spaceskip=\fontdimen2\font plus
\BIBentryALTinterwordstretchfactor\fontdimen3\font minus
  \fontdimen4\font\relax}
\providecommand{\BIBforeignlanguage}[2]{{%
\expandafter\ifx\csname l@#1\endcsname\relax
\typeout{** WARNING: IEEEtran.bst: No hyphenation pattern has been}%
\typeout{** loaded for the language `#1'. Using the pattern for}%
\typeout{** the default language instead.}%
\else
\language=\csname l@#1\endcsname
\fi
#2}}
\providecommand{\BIBdecl}{\relax}
\BIBdecl

\bibitem{Leskovec2010a}
J.~Leskovec, D.~Huttenlocher, and J.~Kleinberg, ``Signed networks in social
  media,'' in \emph{Proceedings of the SIGCHI Conference on Human Factors in
  Computing Systems}, 2010.

\bibitem{Zhang2013d}
B.~Zhang, G.~Kreitz, M.~Isaksson, J.~Ubillos, G.~Urdaneta, J.~A. Pouwelse, and
  D.~Epema, ``Understanding user behavior in spotify,'' in \emph{Proceedings of
  the 32nd Annual IEEE International Conference on Computer Communications},
  2013.

\bibitem{Backstrom2014}
L.~Backstrom and J.~Kleinberg, ``Romantic partnerships and the dispersion of
  social ties: A network analysis of relationship status on {Facebook},'' in
  \emph{Proceedings of the 17th ACM Conference on Computer Supported
  Cooperative Work and Social Computing}, 2014.

\bibitem{LiEtAl2016b}
H.~Li, W.~Ai, X.~Liu, J.~Tang, G.~Huang, F.~Feng, and Q.~Mei, ``Voting with
  their feet: Inferring user preferences from app management activities,'' in
  \emph{Proceedings of the 25th International World Wide Web Conference}, 2016.

\bibitem{Han2014}
J.~Han, D.~Choi, B.-G. Chun, T.~T. Kwon, H.~chul Kim, and Y.~Choi,
  ``Collecting, organizing, and sharing pins in {Pinterest}: Interest-driven or
  social-driven?'' in \emph{Proceedings of the ACM Special Interest Group (SIG)
  for the computer systems performance evaluation community}, 2014.

\bibitem{Mondal2012}
M.~Mondal, B.~Viswanath, P.~Druschel, K.~P. Gummadi, A.~Clement, A.~Mislove,
  and A.~Post, ``Defending against large-scale crawls in online social
  networks,'' in \emph{Proceedings of the 8th International Conference on
  emerging Networking EXperiments and Technologies}, 2012.

\bibitem{WeiboLimits}
``Sina {Weibo API} rate limiting,''
  \url{http://open.weibo.com/wiki/Rate-limiting}, May 2017.

\bibitem{TwitterLimits}
``Twitter {API} rate limiting,''
  \url{https://dev.twitter.com/rest/public/rate-limiting}, May 2017.

\bibitem{Massoulie2006}
L.~Massouli\'{e}, E.~L. Merrer, A.-M. Kermarrec, and A.~Ganesh, ``Peer counting
  and sampling in overlay networks: Random walk methods,'' in \emph{Proceedings
  of ACM Symposium on Principles of Distributed Computing}, 2006.

\bibitem{Avrachenkov2010}
K.~Avrachenkov, B.~Ribeiro, and D.~Towsley, ``Improving random walk estimation
  accuracy with uniform restarts,'' in \emph{Proceedings of the 7th Workshop on
  Algorithms and Models for the Web Graph}, 2010.

\bibitem{Ribeiro2010}
B.~Ribeiro and D.~Towsley, ``Estimating and sampling graphs with
  multidimensional random walks,'' in \emph{Proceedings of the 10th ACM SIGCOMM
  conference on Internet measurement conference}, 2010.

\bibitem{Gjoka2011}
M.~Gjoka, M.~Kurant, C.~T. Butts, and A.~Markopoulou, ``Practical
  recommendations on crawling online social networks,'' \emph{IEEE Journal on
  Selected Areas in Communications}, vol.~29, no.~9, pp. 1872--1892, 2011.

\bibitem{Ribeiro2012b}
B.~Ribeiro, P.~Wang, F.~Murai, and D.~Towsley, ``Sampling directed graphs with
  random walks,'' in \emph{Proceedings of the 31st Annual IEEE International
  Conference on Computer Communications}, 2012.

\bibitem{Lee2012b}
C.-H. Lee, X.~Xu, and D.~Y. Eun, ``Beyond random walk and
  {M}etropolis-{H}astings samplers: Why you should not backtrack for unbiased
  graph sampling,'' in \emph{Proceedings of the ACM Special Interest Group
  (SIG) for the computer systems performance evaluation community}, 2012.

\bibitem{Xu2014}
X.~Xu, C.-H. Lee, and D.~Y. Eun, ``A general framework of hybrid graph sampling
  for complex network analysis,'' in \emph{Proceedings of the 33rd Annual IEEE
  International Conference on Computer Communications}, 2014.

\bibitem{Sinclair1989}
A.~Sinclair and M.~Jerrum, ``Approximate counting, uniform generation and
  rapidly mixing markov chains,'' \emph{Information and Computation}, vol.~82,
  no.~1, pp. 93--133, 1989.

\bibitem{Mohaisen2010}
A.~Mohaisen, A.~Yun, and Y.~Kim, ``Measuring the mixing time of social
  graphs,'' in \emph{Proceedings of the 10th ACM SIGCOMM conference on Internet
  measurement conference}, 2010.

\bibitem{Pinterest}
``Pinterest,'' \url{http://www.pinterest.com}, May 2017.

\bibitem{Weibo}
``{Sina Weibo},'' \url{http://weibo.com}, May 2017.

\bibitem{WeiboPlace}
``Weibo place,'' \url{http://place.weibo.com}, May 2017.

\bibitem{WeiboSearch}
``Weibo search {API},''
  \url{http://open.weibo.com/wiki/2/location/pois/search/by_area}, May 2017.

\bibitem{FoursquareSearch}
``Foursquare search {API},''
  \url{https://developer.foursquare.com/docs/venues/search}, May 2017.

\bibitem{Li2012}
Y.~Li, M.~Steiner, L.~Wang, Z.-L. Zhang, and J.~Bao, ``Dissecting {F}oursquare
  venue popularity via random region sampling,'' in \emph{Proceedings of the
  8th International Conference on emerging Networking EXperiments and
  Technologies}, 2012.

\bibitem{Li2014}
Y.~Li, L.~Wang, M.~Steiner, J.~Bao, and T.~Zhu, ``Region sampling and
  estimation of geosocial data with dynamic range calibration,'' in
  \emph{Proceedings of the 30th IEEE International Conference on Data
  Engineering}, 2014.

\bibitem{Wang2014}
P.~Wang, W.~He, and X.~Liu, ``An efficient sampling method for characterizing
  points of interests on maps,'' in \emph{Proceedings of the 30th IEEE
  International Conference on Data Engineering}, 2014.

\bibitem{Meyn2009}
S.~Meyn and R.~L. Tweedie, \emph{Markov Chains and Statistic Stability},
  2nd~ed.\hskip 1em plus 0.5em minus 0.4em\relax Cambridge University Press,
  2009.

\bibitem{Robert2004}
C.~P. Robert and G.~Casella, \emph{Monte Carlo Statistic Methods},
  2nd~ed.\hskip 1em plus 0.5em minus 0.4em\relax Springer, 2004.

\bibitem{Barabasi1999}
A.~L. Barab\'{a}si and R.~Albert, ``Emergence of scaling in random networks,''
  \emph{Science}, vol. 286, no. 5439, pp. 509--512, 1999.

\bibitem{Cho2011}
E.~Cho, S.~A. Myers, and J.~Leskovec, ``Friendship and mobility: User movement
  in location-based social networks,'' in \emph{Proceedings of the 17th ACM
  SIGKDD international conference on Knowledge discovery and data mining},
  2011.

\bibitem{McAuley2015}
J.~McAuley, R.~Pandey, and J.~Leskovec, ``Inferring networks of substitutable
  and complementary products,'' in \emph{Proceedings of the 21st ACM SIGKDD
  International Conference on Knowledge Discovery and Data Mining}, 2015.

\bibitem{Mtime}
``Mtime,'' \url{http://www.mtime.com}, May 2017.

\bibitem{Gkantsidis2006}
C.~Gkantsidis, M.~Mihail, and A.~Saberi, ``Random walks in peer-to-peer
  networks: Algorithms and evaluation,'' \emph{Performance Evaluation},
  vol.~63, no.~3, pp. 241--263, 2006.

\bibitem{Gjoka2010}
M.~Gjoka, M.~Kurant, C.~T. Butts, and A.~Markopoulou, ``Walking in {F}acebook:
  A case study of unbiased sampling of {OSN}s,'' in \emph{Proceedings of the
  29th Annual IEEE International Conference on Computer Communications}, 2010.

\bibitem{Katzir2011}
L.~Katzir, E.~Liberty, and O.~Somekh, ``Estimating sizes of social networks via
  biased sampling,'' in \emph{Proceedings of the 19th International World Wide
  Web Conference}, 2011.

\bibitem{Hardiman2013}
S.~J. Hardiman and L.~Katzir, ``Estimating clustering coefficients and size of
  social networks via random walk,'' in \emph{Proceeding of the 22nd
  International World Wide Web Conference}, 2013.

\bibitem{Seshadhri2013}
C.~Seshadhri, A.~Pinar, and T.~G. Kolda, ``Triadic measures on graphs: The
  power of wedge sampling,'' in \emph{Proceedings of the 13th SIAM
  International Conference on Data Mining}, 2013.

\bibitem{Wang2014a}
P.~Wang, J.~C. Lui, B.~Ribeiro, D.~Towsley, J.~Zhao, and X.~Guan, ``Efficiently
  estimating motif statistics of large networks,'' \emph{ACM Transactions on
  Knowledge Discovery from Data}, 2014.

\bibitem{Gjoka2011a}
M.~Gjoka, C.~T. Butts, M.~Kurant, and A.~Markopoulou, ``Multigraph sampling of
  online social networks,'' \emph{IEEE Journal on Selected Areas in
  Communications}, vol.~29, no.~9, pp. 1893--1905, 2011.

\bibitem{Zhou2013}
Z.~Zhou, N.~Zhang, Z.~Gong, and G.~Das, ``Faster random walks by rewiring
  online social networks on-the-fly,'' in \emph{Proceedings of the 29th IEEE
  International Conference on Data Engineering}, 2013.

\bibitem{Birnbaum1965}
Z.~W. Birnbaum and M.~G. Sirken, ``Design of sample surveys to estimate the
  prevalence of rare diseases: Three unbiased estimates,'' \emph{Vital and
  Health Statistics}, vol.~2, no.~11, pp. 1--8, 1965.

\end{thebibliography}

\end{document}